\pgfplotsset{compat=1.17}
\newtheorem{theorem}{Theorem}[section]
\newtheorem{lemma}[theorem]{Lemma}
\theoremstyle{remark}
\newtheorem{remark}[theorem]{Remark}
\newcommand{\Trcal}{\mathcal{R}}
\newcommand{\h}{\hbar}
\newcommand{\A}{\mathcal{A}}
\newcommand{\E}{\mathcal{E}}
\title[Bathtub Potential Eigenvalues]{Asymptotic Expansion of the Eigenvalues of a Bathtub Potential with Quadratic Ends}
\author{Yuzhou Zou}
\address{Department of Mathematics\\
Northwestern University\\
2033 Sheridan Road \\
Evanston, IL 60208, U.S.A.}
\email{yuzhou.zou@northwestern.edu}
\date{\today}
\begin{document}
\maketitle

\begin{abstract}
We consider the eigenvalues of a one-dimensional semiclassical Schr\"odinger operator, where the potential consist of two quadratic ends (that is, looks like a harmonic oscillator at each infinite end), possibly with a flat region in the middle. Such a potential notably has a discontinuity in the second derivative. We derive an asymptotic expansion, valid either in the high energy regime or the semiclassical regime, with a leading order term given by the Bohr-Sommerfeld quantization condition, and an asymptotic expansion consisting of negative powers of the leading order term, with coefficients that are oscillatory in the leading order term. We apply this expansion to study the results of the Gutzwiller Trace formula and the heat kernel asymptotic for this class of potentials, giving an idea into what results to expect for such trace formulas for non-smooth potentials.
\end{abstract}

\tableofcontents

\section{Introduction}
We study the eigenvalues of the one-dimensional Schr\"odinger operator
\begin{equation}
\label{eq:ph}
P_{\h} = -\frac{\h^2}{2m}\frac{d^2}{dx^2} + V(x),\quad V(x) = \begin{cases} \frac{1}{2}m\omega_-^2x^2 & x<0 \\ 0 & 0\le x\le \ell \\ \frac{1}{2}m\omega_+^2(x-\ell)^2 & x > \ell \end{cases}.
\end{equation}
Here, $\h$ physically corresponds to the reduced Planck's constant, although we will sometimes interpret it as a small number, and ask about the \emph{semiclassical} behavior of the operator, that is, the behavior in the limit $\h\to 0$. Note that if $\ell=0$ and $\omega_-=\omega_+>0$, this corresponds to the usual harmonic oscillator. As such, here we take $m>0$, $\ell\ge 0$, and $\omega_-,\omega_+>0$, with $\omega_-\ne\omega_+$ if $\ell=0$.

\begin{figure}[h]
\centering
\begin{tikzpicture}[scale=0.7]
\begin{axis}[xmin=-3.1,xmax=5.1, ymin = -0.2, ymax = 8, axis lines = center, xlabel = {$x$}, xtick = {3}, ymajorticks=false, xticklabels = {$\ell$},
extra x ticks = {0}]
\addplot[domain=-3:0,thick,<-]{x^2/2};
\addplot[domain=0:3,thick]{0};
\addplot[domain=3:5,thick,->]{2*(x-3)^2};
\addplot[domain=-0.2:8,thick,dashed]({0},{x});
\addplot[domain=-0.2:8,thick,dashed]({3},{x});
\end{axis}
\end{tikzpicture}
\caption{The potential $V(x)$. The dashed lines are the locations of the second derivative discontinuities.}
\end{figure}
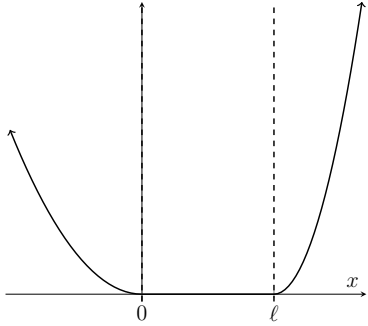
Under these assumptions, the potential $V(x)$ is $C^1$-differentiable, but there are jump(s) in the second derivative, namely at $x=0$ and $x=\ell$ (or just $x=0$ if $\ell=0$). The goal of this paper is to study the asymptotics of the eigenvalues of these operators, to give an explicit example of eigenvalue asymptotics for a potential with derivative discontinuity. Of particular interest is how these derivative discontinuity affects trace formulas, such as the Schr\"odinger trace (commonly studied through the Gutzwiller trace formula) or the heat trace.

To state the main results, we introduce a few functions. We define the angle function $\tilde\theta$ as follows: noting that since the functions $\frac{z^{1/2}}{\Gamma(3/4-z)}$ and $\frac{1}{\Gamma(1/4-z)}$ (defined on $z\ge 0$) have no common zeros, there exists a continuous angle function $\tilde\theta(z)$ satisfying
\begin{equation}
\label{eq:tildetheta}
\tilde{r}(z)\begin{pmatrix}\cos(\tilde\theta(z)) \\ \sin(\tilde\theta(z)) \end{pmatrix}  = \begin{pmatrix} \frac{z^{1/2}}{\Gamma(3/4-z)} \\ -\frac{1}{\Gamma(1/4-z)}\end{pmatrix}
\end{equation}
for some function $\tilde{r}(z)>0$. We can arrange for $\tilde\theta(0) = -\pi/2$. It turns out that $\tilde\theta$ is strictly increasing, with $\tilde\theta(z) =  \pi(z-1/4) + o(1)$ as $z\to+\infty$.
\begin{theorem}
\label{thm:eigeneq}
We have that $E$ is an eigenvalue of $P_\h$ if and only if $E>0$ and
\[\frac{\sqrt{2m}\ell\sqrt{E}}{\h} + \tilde\theta\left(\frac{E}{2\h\omega_-}\right) + \tilde\theta\left(\frac{E}{2\h\omega_+}\right)\in\pi\mathbb{Z}.\]
Moreover, each eigenvalue is simple.
Consequently, if the eigenvalues of $P_\h$ are denoted $\{E_{n;\h}\}_{n=0}^\infty$ in increasing order, then $E_{n;\h}$ satisfies the equation
\[\frac{\sqrt{2m}\ell\sqrt{E_{n;\h}}}{\h} + \tilde\theta\left(\frac{E_{n;\h}}{2\h\omega_-}\right) + \tilde\theta\left(\frac{E_{n;\h}}{2\h\omega_+}\right) = \pi n.\]
\end{theorem}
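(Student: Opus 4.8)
The plan is to reduce the eigenvalue problem to an explicit ODE matching problem: solve the equation $P_\h\psi=E\psi$ on each of the three regions of \eqref{eq:ph}, and impose $C^1$-matching at $x=0$ and $x=\ell$ together with $L^2$-decay at $\pm\infty$. First I would rule out $E\le 0$: integrating by parts, $\langle P_\h\psi,\psi\rangle=\frac{\h^2}{2m}\|\psi'\|^2+\int V|\psi|^2\ge 0$, with equality only if $\psi$ is (locally) constant and supported in $\{V=0\}$, hence $\psi\equiv 0$ by continuity; so every eigenvalue has $E>0$ and $P_\h>0$. For such $E$, the middle equation $-\frac{\h^2}{2m}\psi''=E\psi$ has general solution spanned by $\cos(kx),\sin(kx)$ with $k=\sqrt{2mE}/\h$, and — this is the point of the normalization to come — on the pair $(\psi,\psi'/k)$ the transfer map from $x=0$ to $x=\ell$ is exactly clockwise rotation by $k\ell$, an element of $SO(2)$.

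Next I would analyze the two infinite ends. On $(-\infty,0)$ the equation is the harmonic-oscillator equation; after $x=-c_-\xi$ with $c_-=\sqrt{\h/(2m\omega_-)}$ it becomes the parabolic cylinder (Weber) equation $\psi_{\xi\xi}=(\xi^2/4+a_-)\psi$ with $a_-=-E/(\h\omega_-)=-2z_-$, $z_-=E/(2\h\omega_-)$. Since $V\to+\infty$, the operator is limit-point at $-\infty$, so the $L^2$-solution near $x=-\infty$ is unique up to scalar and equals $U(a_-,\xi)$. Using the classical values $U(a,0)=\sqrt\pi\,2^{-a/2-1/4}/\Gamma(\tfrac34+\tfrac a2)$ and $U'(a,0)=-\sqrt\pi\,2^{-a/2+1/4}/\Gamma(\tfrac14+\tfrac a2)$, and converting $\psi'(0)$ to $\psi'(0)/k$ (the factor $1/(c_-k)=1/\sqrt{2z_-}$ does this bookkeeping), a short computation shows $(\psi(0),\psi'(0)/k)$ is a strictly positive multiple of $\bigl(z_-^{1/2}/\Gamma(3/4-z_-),\,1/\Gamma(1/4-z_-)\bigr)$; comparing with \eqref{eq:tildetheta}, this is a positive multiple of $(\cos\tilde\theta(z_-),-\sin\tilde\theta(z_-))$, i.e.\ it points at polar angle $-\tilde\theta(z_-)$. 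Running the same computation on $(\ell,\infty)$ with $x-\ell=c_+\xi$, $c_+=\sqrt{\h/(2m\omega_+)}$, $z_+=E/(2\h\omega_+)$ — now with no reflection in $\xi$ — shows $(\psi(\ell),\psi'(\ell)/k)$ is a positive multiple of $(\cos\tilde\theta(z_+),\sin\tilde\theta(z_+))$, at polar angle $\tilde\theta(z_+)$.

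Gluing then finishes the eigenvalue equation: a nonzero eigenfunction exists iff the left ray at $x=0$, rotated clockwise by $k\ell$, is parallel to the right ray at $x=\ell$, i.e.\ iff $-\tilde\theta(z_-)-k\ell\equiv\tilde\theta(z_+)\pmod\pi$, which is exactly $\frac{\sqrt{2m}\ell\sqrt E}{\h}+\tilde\theta\!\left(\frac{E}{2\h\omega_-}\right)+\tilde\theta\!\left(\frac{E}{2\h\omega_+}\right)\in\pi\mathbb Z$; the case $\ell=0$ is the same statement with $k\ell=0$ and direct matching at $x=0$. Simplicity is immediate, since an eigenfunction is determined up to a scalar by the one-dimensional space of recessive solutions at $-\infty$. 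For the final "consequently" assertion, set $F(E)=\frac{\sqrt{2m}\ell\sqrt E}{\h}+\tilde\theta\!\left(\frac{E}{2\h\omega_-}\right)+\tilde\theta\!\left(\frac{E}{2\h\omega_+}\right)$; it is continuous and strictly increasing on $(0,\infty)$, with $F(0^+)=2\tilde\theta(0)=-\pi$ and $F(+\infty)=+\infty$ (from $\tilde\theta(0)=-\pi/2$, strict monotonicity of $\tilde\theta$, and $\tilde\theta(z)=\pi(z-1/4)+o(1)$), hence a strictly increasing bijection $(0,\infty)\to(-\pi,\infty)$; the eigenvalues are therefore exactly $E_{n;\h}=F^{-1}(\pi n)$ for $n=0,1,2,\dots$, giving $F(E_{n;\h})=\pi n$.

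The main obstacle is the middle step: matching the recessive parabolic-cylinder data to the vector in \eqref{eq:tildetheta} exactly. This requires the explicit $\Gamma$-function formulas for $U(a,0)$ and $U'(a,0)$, care with the two distinct scalings $c_\pm$ and the change of variables, correct handling of the sign flip from the reflection $\xi\mapsto-\xi$ at the left end (absent at the right end), and — most importantly — the normalization by $k$ that turns the free transfer matrix into an honest $SO(2)$ rotation, so that the matching condition is additive in angles and lands precisely on $\pi\mathbb Z$ with the offset $\tilde\theta(0)=-\pi/2$ pinning down the index $n$. The remaining ingredients (positivity to exclude $E\le 0$, the limit-point property, simplicity, and monotonicity of $F$) are routine.
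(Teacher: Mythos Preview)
Your proof is correct and follows essentially the same approach as the paper: reduce the eigenvalue equation on each region to the Weber equation via scaling, identify the decaying solutions at $\pm\infty$ with parabolic cylinder functions $U(a,\cdot)$, compute the boundary data at $x=0,\ell$ via the Gamma-function formulas for $U(a,0)$ and $U'(a,0)$, and use the $SO(2)$ rotation structure of the free propagator on $[0,\ell]$ to turn the $C^1$-matching into an additive angle condition modulo $\pi$. Your normalization $(\psi,\psi'/k)$ differs from the paper's $(\psi,-\psi'/k)$ only by a reflection (so your angles are the negatives of the paper's), and you supply details the paper leaves implicit---the positivity argument ruling out $E\le 0$, simplicity via the limit-point property, and the monotone bijection $F:(0,\infty)\to(-\pi,\infty)$ that pins the index to $F(E_{n;\h})=\pi n$.
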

We now prove asymptotics about these eigenvalues. In the semiclassical limit, there is a connection between the behavior of the eigenvalues and the classical Hamiltonian dynamics for the symbol for the Hamiltonian $p(x,\xi) = \frac{\xi^2}{2m}+V(x)$; that is, for the dynamics of solutions to Hamilton's equation
\[\dot x(t) = \frac{\xi(t)}{m},\quad \dot\xi(t)  = -V'(x(t)).\]
For example, the Bohr-Sommerfeld quantization condition gives that the energies are quantized when the corresponding reduced action is a half-integer multiple of $2\pi \h$, i.e.\
\[S(E_{n;\h})\approx 2\pi \h(n+1/2),\]
where $S(E)$ is the reduced action
\[S(E) = \oint_{\gamma_E} \xi\,dx,\]
with the integral taken over the Hamiltonian trajectory of energy $E$.
One can compute (see Section \ref{subsec:ham}) that
\begin{equation}
\label{eq:s}
S(E)= 2\sqrt{2m}\ell\sqrt{E} + (\tau_++\tau_-)E,
\end{equation}
where
\begin{equation}
\label{eq:taupm}
\tau_{\pm} = \pi/\omega_{\pm}
\end{equation}
(note that $\tau_{\pm}$ has the interpretation of the half-period for harmonic oscillator of frequency $\omega_{\pm}$).
The Bohr-Sommerfeld quantization condition can be rephrased as $E_{n;\h}\approx \E_{n;\h}$, where
\begin{equation}
\label{eq:mathcale}
\E_{n;\h} := S^{-1}(2\pi \h(n+1/2)).
\end{equation}
We show via explicit calculation that this is indeed the case for our potentials of interest, with an asymptotic series for the remainder
\begin{equation}
\label{eq:deltae}
\Delta E_{n;\h} := E_{n;\h}-\E_{n;\h}.
\end{equation}
To state the asymptotic result, we also consider the function
\begin{equation}
\label{eq:t}
T(E) = S'(E) = \frac{\sqrt{2m}\ell}{\sqrt{E}} + (\tau_++\tau_-)E.
\end{equation}
Then $T(E)$ has the classical interpretation of the period of a classical periodic trajectory of energy $E$ (see Section \ref{subsec:ham}).

Finally, by a trigonometric polynomial we will mean a function of the form
\[P(x) = \sum_{j\in\mathbb{Z}} c_je^{ijx},\quad c_j\in\mathbb{C},\]
where only finitely many $c_j$ are non-zero;  the degree $\deg P$ is the largest value of $|j|$ for which $c_j\ne 0$. Similarly, by a two-variable trigonometric polynomial, we will mean a two-variable function $P$ which can be written as a finite sum of the form
\[P(x,y) = \sum_{j,k\in\mathbb{Z}} c_{j,k}e^{ijx}e^{iky},\quad c_{j,k}\in\mathbb{C}.\]
The degree $\deg P$ is the largest value of $|j|+|k|$ for which $c_{j,k}\ne 0$ in the sum above.

We now give the asymptotic expansion for $E_{n;\h}$. We will be interested in two regimes: either when $\h$ is fixed and $n\to\infty$ (i.e. the energy $E\to\infty$), or for a fixed range of $E$ with $\h\to 0$. In either case, we will be interested in the range of energies away from the bottom of the potential, i.e.\ at $0$; as such, we will assume that $\h<1$ and $E>\epsilon$ for some fixed $\epsilon>0$. In either regime, the quantity $\frac{\h}{E}$ will go to $0$; we can thus think of it as a ``small variable'' in the asymptotic expansion.
\begin{theorem}
\label{thm:e-asymp}
Let $\E_{n;\h}$ and $\Delta E_{n;\h}$ be as in \eqref{eq:mathcale} and \eqref{eq:deltae}. Then:
\begin{itemize}
\item For all $n\ge 1$, we have $\E_{n-1;\h}< E_{n;\h}< \E_{n+1;\h}$.
\item There exists an index set $\A\subset\mathbb{N}\times \frac{1}{2}\mathbb{N}\times\mathbb{N}$, with the property that
\[\{(j,k,l)\in\A\,:\,j< N\text{ or }k< N\}\text{ is finite for any }N,\]
and a sequence of two-variable trigonometric polynomials $\{P_{(j,k,l)}(x,y)\}_{(j,k,l)\in\A}$, with $\deg(P_{(j,k,l)})\le j/2$, such that, for a fixed $\epsilon>0$, and all $n$ satisfying $\E_{n;\h}>\epsilon$, for all $N>0$ we have
\[\frac{\Delta E_{n;\h}}{\h} = \sum_{\substack{(j,k,l)\in\A\\ j<N\text{ or }k< N}} \frac{\h^j}{\E_{n;\h}^kT(\E_{n;\h})^l}P_{j,k,l}\left(\frac{\tau_-\E_{n;\h}}{\h},\frac{\tau_+\E_{n;\h}}{\h}\right) + O\left(\frac{\h^N}{\E_{n;\h}^N}\right).\]
\end{itemize}
\end{theorem}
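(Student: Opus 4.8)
The plan is to turn the transcendental eigenvalue equation of Theorem~\ref{thm:eigeneq} into a fixed-point equation for $\Delta E_{n;\h}$, after first extracting a precise asymptotic expansion of the angle function $\tilde\theta$; the shape of that expansion dictates the shape of the answer. \emph{Step 1 (expansion of $\tilde\theta$).} From \eqref{eq:tildetheta}, $\tan\tilde\theta(z)=-\frac{\Gamma(3/4-z)}{z^{1/2}\Gamma(1/4-z)}$ for $z\ge0$. Applying the reflection formula to both Gamma factors and simplifying the quotient of sines that results (via $\sin(\pi(3/4-z))=\sin(\pi(1/4+z))$) gives $\frac{\Gamma(3/4-z)}{\Gamma(1/4-z)}=\tan(\pi/4-\pi z)\,\frac{\Gamma(3/4+z)}{\Gamma(1/4+z)}$, hence
\[\tan\tilde\theta(z)=v(z)\tan(\pi z-\pi/4),\qquad v(z):=\frac{1}{z^{1/2}}\,\frac{\Gamma(z+3/4)}{\Gamma(z+1/4)}.\]
The standard asymptotic expansion of the Gamma quotient gives $v(z)\sim 1+\sum_{j\ge2}a_jz^{-j}$, with \emph{no $z^{-1}$ term}, its coefficient being proportional to $(3/4-1/4)(3/4+1/4-1)=0$; this is exactly the source of the exponent $j/2$ in the theorem. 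Writing $\tilde\theta(z)=\pi(z-1/4)+\psi(z)$ with $\psi(z)\to0$, a short computation gives
\[\tan\psi(z)=-\frac{(v(z)-1)\cos2\pi z}{2+(v(z)-1)(1-\sin2\pi z)},\]
and expanding the denominator geometrically and composing with $\arctan$ yields an asymptotic expansion $\psi(z)\sim\sum_{j\ge2}z^{-j}\tilde R_j(2\pi z)$ in which each $\tilde R_j$ is a trigonometric polynomial of degree $\le j/2$: every factor of $v-1$ costs at least $z^{-2}$, while the $(1-\sin2\pi z)$ factors raise the degree by at most one each. This holds uniformly for large $z$, with no blow-up near the poles of $\tan(\pi z-\pi/4)$, where in fact $\tan\psi=O(z^{-2})$.

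\emph{Step 2 (reduction and a priori bounds).} The left side $\Lambda(E)$ of the equation in Theorem~\ref{thm:eigeneq} is strictly increasing in $E$, so $E_{n;\h}$ is its unique solution at level $\pi n$; since $\sup_{z\ge0}|\psi(z)|<\pi/2$, this yields $\E_{n-1;\h}<E_{n;\h}<\E_{n+1;\h}$ (the first bullet), whence $|\Delta E_{n;\h}|\lesssim\h/T(\E_{n;\h})$ and $z_\pm:=(\E_{n;\h}+\Delta E_{n;\h})/(2\h\omega_\pm)$ are large in both regimes. Substituting $\tilde\theta(z)=\pi(z-1/4)+\psi(z)$ into the eigenvalue equation, the part polynomial in $E$ assembles, by \eqref{eq:s}--\eqref{eq:taupm}, exactly into $\frac{S(E)}{2\h}-\frac\pi2$, so the equation becomes $S(E_{n;\h})=S(\E_{n;\h})-2\h(\psi(z_-)+\psi(z_+))$. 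Taylor-expanding $S$ about $\E_{n;\h}$ and using $S'=T$ gives the exact identity
\[T(\E_{n;\h})\Delta E_{n;\h}+\sum_{j\ge2}\frac{S^{(j)}(\E_{n;\h})}{j!}\Delta E_{n;\h}^{\,j}=-2\h\bigl(\psi(z_-)+\psi(z_+)\bigr),\qquad S^{(j)}(E)=c_jE^{1/2-j}\ \ (j\ge2),\]
and, combined with $\psi=O(z^{-2})$, a bootstrap off the crude bound gives $\Delta E_{n;\h}/\h=O((\h/\E_{n;\h})^2)$.

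\emph{Step 3 (iteration and bookkeeping).} One solves the last display as a fixed point for $\Delta E_{n;\h}$ (divide by $T(\E_{n;\h})$, move the sum to the right) by iterating from $0$, and shows by induction that every term ever produced is of the asserted form $\frac{\h^j}{\E_{n;\h}^kT(\E_{n;\h})^l}\,P(\tau_-\E_{n;\h}/\h,\tau_+\E_{n;\h}/\h)$ with $P$ a two-variable trigonometric polynomial. The key structural invariant is that, for the corresponding contribution to $\Delta E_{n;\h}$ itself, twice the degree of $P$ plus one does not exceed the power of $\h$; this is preserved by the three operations that occur — multiplying such terms (powers of $\h$ and degrees add, with slack to spare); inserting $z_\pm^{-j}=(2\h\omega_\pm/\E_{n;\h})^j(1+\Delta E_{n;\h}/\E_{n;\h})^{-j}$, whose leading factor supplies $\h^j\E_{n;\h}^{-j}$ with no oscillation and whose remainder only adds powers of $\h$; and substituting $\tilde R_j(2\pi z_\pm)=\tilde R_j(\tau_\pm\E_{n;\h}/\h+\tau_\pm\Delta E_{n;\h}/\h)$ and Taylor-expanding, where the $\h$-independent oscillation has degree $\le j/2$ and every correction carries extra powers of $\h$ (because $\tfrac{\tau_\pm\E_{n;\h}}{\h}\cdot\tfrac{\Delta E_{n;\h}}{\E_{n;\h}}=O((\h/\E_{n;\h})^2)$). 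Taking $\A$ to be the set of triples $(j,k,l)$ that appear, a bound $k\le Cj$ valid along the iteration gives the stated finiteness of $\{(j,k,l)\in\A:j<N\text{ or }k<N\}$, and truncating the iteration at a finite stage produces the remainder $O(\h^N/\E_{n;\h}^N)$ by a geometric estimate in $\h/\E_{n;\h}$.

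I expect the main obstacle to be Step 3: carrying the three exponents $(j,k,l)$, the trigonometric degree, and the combinatorics of $\A$ simultaneously through the iteration, and in particular verifying that $\deg P\le j/2$ — which rests entirely on the vanishing of the $z^{-1}$ coefficient of $v$ from Step 1 — survives every substitution, including the delicate one in which a fast oscillation $\tilde R_j(2\pi z_\pm)$ is re-expanded about $\tau_\pm\E_{n;\h}/\h$. By contrast, Step 1 is a self-contained special-function computation and Step 2 is routine once the localization $|\Delta E_{n;\h}|\lesssim\h/T(\E_{n;\h})$ is in hand.
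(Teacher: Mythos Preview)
Your proposal follows essentially the same route as the paper: the paper proves your Step~1 as Lemma~\ref{lem:tildethetaasymp} (with $\psi$ called $r$ and $v$ called $F$), reduces the eigenvalue equation to \eqref{eq:s+r} exactly as you do, uses $|r|<\pi/2$ and monotonicity for the bracketing, and then iterates the recursive identity \eqref{eq:delta-e-recursive}, which is your fixed-point equation. Two small points where the paper is more careful than your sketch: (i) it \emph{proves} that $\tilde\theta$ is strictly increasing and that $|r(z)|<\pi/2$ (via a digamma-series argument showing $F'/F$ is negative and increasing), whereas you assert both; (ii) for the finiteness of $\{(j,k,l)\in\A:j<N\text{ or }k<N\}$ a bound $k\le Cj$ alone is not enough---you also need $l$ bounded in terms of $j$ and a lower bound $j\le k$; the paper handles this by constructing $\A$ explicitly as the closure of $\{(2k,2k,1)\}$ under two concrete operations and checking that these preserve $j\le k\le(7j-2)/6$ and $l\le(2j-1)/3$.
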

The first few elements of $\A$ (that is, with the lowest $j$ and $k$ values) are $(2,2,1)$, $(4,4,1)$, $(4,4,2)$, $(5,5,2)$, and $(5,11/2,3)$. The first few polynomials are given by
\begin{align*}
P_{2,2,1}(x,y) &= \frac{1}{16}(\omega_-^2\cos(x) + \omega_+^2\cos(y)), \\
P_{4,4,1}(x,y) &= -\frac{5\omega_-^4\cos(x)+5\omega_+^4\cos(y)}{128} + \frac{\omega_-^4\sin(2x)+\omega_+^4\sin(2y)}{1024}, \\
P_{4,4,2}(x,y) &= \frac{\pi}{256}(\omega_-\sin(x)+\omega_+\sin(y))(\omega_-^2\cos(x)+\omega_+^2\cos(y)), \\
P_{5,5,2}(x,y) &= \frac{1}{256}\left(\omega_-^2\cos(x)+\omega_+^2\cos(y)\right)^2, \\
P_{5,11/2,3}(x,y) &= \frac{\sqrt{m}\ell}{512\sqrt{2}}\left(\omega_-^2\cos(x)+\omega_+^2\cos(y)\right)^2.
\end{align*}
Thus, for example, taking $N = 4$, we obtain
\[E_{n;\h} = \E_{n;\h} + \frac{\h^3}{16\E_{n;\h}^2T(\E_{n;\h})}\left(\omega_-^2\cos\left(\frac{\pi\E_{n;\h}}{\h\omega_-}\right) + \omega_+^2\cos\left(\frac{\pi\E_{n;\h}}{\h\omega_+}\right)\right) + O(\h^5/\E_{n;\h}^4).\]
We comment on the history of works related to this eigenvalue expansion.
For $\ell=0$ and $\omega_-\ne\omega_+$, the result above recovers a result obtained by Razavy \cite{r-85}, who considered a two-dimensional piecewise harmonic oscillator with mismatched coefficients (the calculations are essentially one-dimensional). The author used the nearly-linear asymptotic of the one-dimensional oscillator to provide a counterexample to the Berry-Tabor conjecture regarding the distribution of the spacing between adjacent levels. More recently Chadzitaskos and Patera \cite{cp-22} (see also \cite{c-23}) analyzed the asymmetric harmonic oscillator, deriving a similar implicit equation for the eigenvalues, as well as studying the structure of the eigenfunctions to form coherent states \cite{c-24}.

In the case $\ell\ne 0$, the eigenvalues have been studied by Mazzitelli, Mazzitelli, and Soubelet \cite{mms-17} in the case $\omega_-=\omega_+$, where the authors denoted such potentials ``bathtub'' potentials. The authors derived implicit equations for the eigenvalues by explicitly solving for the coefficients in an ansatz solution, correcting some common faulty arguments encountered in solving such equations along the way (see \cite[Section 1]{mms-17} for more details). We refer the reader to \cite[Section 3]{mms-17} for some physical applications of this kind of bathtub potential.

We also note the work by Pushnitski and Sorrell \cite{ps-06}, who studied $C_c^\infty(\mathbb{R})$ perturbations to the harmonic oscillator and derived an asymptotic expansion for eigenvalues in terms of half-powers of the main asymptotic term (still given by that of the harmonic oscillator); the authors applied such an expansion to study the heat trace (as we do for our situation as well; see below) and Zeta functions constructed from the new eigenvalues.

Finally, if $\omega_-=\omega_+ = \omega$, then writing
\[V(x) - \frac{1}{2}\omega^2x^2 = \begin{cases} 0 & x<0 \\ -\frac{1}{2}\omega^2x^2 & 0\le x\le \ell \\ -\omega^2\ell x + \frac{\omega^2\ell^2}{2} & x>\ell\end{cases},\]
we can view $V(x)$ as a (singular) first-order isotropic perturbation to the harmonic oscillator. This gives a singular version of a setting studied by Doll and Zelditch \cite{dz-20}, as well as Doll, Gannot, and Wunsch \cite{dgw-18}, who studied Weyl laws and aspects of the Schr\"odinger trace for isotropic perturbations of the harmonic oscillator.

We apply these eigenvalue asymptotics to study trace formulas. In Section \ref{sec:gutz}, we study the Gutzwiller Trace Formula for these class of potentials; this is the primary motivation for deriving the main eigenvalue asymptotics. This formula ostensibly concerns the eigenvalue counting function
\[\sum_n\chi(E_{n;\h})\rho\left(\frac{E-E_{n;\h}}{\h}\right)\]
where $E$ is a fixed energy of interest, with $\chi$ and $\hat\rho$ compactly supported smooth functions, where $\hat\rho$ is the Fourier transform of $\rho$. Such a function can be interpreted as counting the number of eigenvalues near $E$; alternatively, taking $\rho$ to be roughly oscillatory, this function detects arithmetic progressions in $E_{n;\h}$ of a certain width, dependent on $\h$ and the location of $\text{supp }\hat\rho$. It turns out this eigenvalue counting function can be related to the Fourier transform of the regularized trace of a Schr\"odinger propagator, and the Gutzwiller Trace Formula subsequently relates this eigenvalue counting function to a sum dependent on the corresponding Hamiltonian dynamics, particularly concerning the periods and other dynamical quantities of periodic Hamiltonian trajectories.

To state the results in this setting, we assume from now on that $\chi\in C_c^\infty(\mathbb{R}_+)$, that is, supported in positive  energies away from zero. This is to focus away from the bottom of the potential well, where there is more complicated behavior. We also introduce the notation
\[T_{k,\alpha,\beta}(E) = kT(E) + \alpha\tau_- + \beta\tau_+\]
and
\[S_{k,\alpha,\beta}(E) = kS(E) + (\alpha\tau_-+\beta\tau_+)E\]
for $k,\alpha,\beta\in\mathbb{Z}$. Note that $T_{k,\alpha,\beta}(E) = S_{k,\alpha,\beta}'(E)$. We also consider the set
\begin{equation}
\label{eq:tne}
\mathcal{T}_N(E) = \{T_{k,\alpha,\beta}(E),:\,k,\alpha,\beta\in\mathbb{Z}, |\alpha|+|\beta|\le N\},\quad N\in\mathbb{N}.
\end{equation}
These turn out to correspond to periods of periodic trajectories which are allowed to \emph{reflect} at the singularities $x=0,\ell$ up to $N$ times; such trajectories turn out to play a role in the trace formula. A first result showing the importance of such periods is an upper bound on the semiclassical singularities of the eigenvalue-counting function:
\begin{theorem}
\label{thm:gutz-qual}
Let $\chi\in C_c^\infty(\mathbb{R}_+)$, and let $N$ be a nonnegative integer. If $\text{supp }\hat\rho\cap \mathcal{T}_N(E) = \emptyset$, then
\[\sum_{n=0}^\infty \chi(E_{n;\h})\rho\left(\frac{E-E_{n;\h}}{\h}\right) = O(\h^{2(N+1)}).\]
\end{theorem}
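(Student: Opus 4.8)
The plan is to combine the explicit expansion of Theorem~\ref{thm:e-asymp} with Poisson summation and stationary phase; the set $\mathcal{T}_N(E)$ will emerge as the set of critical values of the phases that arise, the constraint $|\alpha|+|\beta|\le N$ being forced by a power-counting relation between the order of $\h$ and the trigonometric frequency. First I would write $\rho(\frac{E-E_{n;\h}}{\h})=\rho(\frac{E-\E_{n;\h}}{\h}-\frac{\Delta E_{n;\h}}{\h})$ and $\chi(E_{n;\h})=\chi(\E_{n;\h}+\Delta E_{n;\h})$, and Taylor expand in the small quantities $\frac{\Delta E_{n;\h}}{\h}$ and $\Delta E_{n;\h}$. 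Since $\chi\in C_c^\infty(\mathbb{R}_+)$ and $\rho$ is Schwartz, only those $n$ with $\E_{n;\h}$ in a fixed compact $K\subset\mathbb{R}_+$ contribute modulo $O(\h^\infty)$ (the first bullet of Theorem~\ref{thm:e-asymp} transfers a localization in $E_{n;\h}$ to one in $\E_{n;\h}$), and on such $n$ Theorem~\ref{thm:e-asymp} gives $\frac{\Delta E_{n;\h}}{\h}=O(\h^2)$ uniformly, so all Taylor remainders become negligible once the expansion orders are taken large enough (large enough also to absorb the $O(\h^{-1})$ relevant indices). Substituting the expansion of $\frac{\Delta E_{n;\h}}{\h}$, expanding all powers, and expanding each $P_{j,k,l}$ into exponentials, the double sum reorganizes, up to $O(\h^M)$, into
\[\sum_{a<M}\h^a\sum_{|\mu_-|+|\mu_+|\le a/2}\ \sum_n G_{a,\mu}\!\left(\E_{n;\h},\frac{E-\E_{n;\h}}{\h}\right)e^{i(\mu_-\tau_-+\mu_+\tau_+)\E_{n;\h}/\h},\]
where each $G_{a,\mu}(x,s)$ is a finite combination of products $\psi(x)\rho^{(p)}(s)$ with $\psi\in C_c^\infty(\mathbb{R}_+)$. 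The essential point is the constraint $|\mu_-|+|\mu_+|\le a/2$: it follows from $\deg P_{j,k,l}\le j/2$ together with $j\ge 2$ on $\A$, since a product of such factors carrying total $\h$-power $a$ has total trigonometric frequency at most $a/2$ (the extra powers of $\h$ picked up from the Taylor expansions only improve this).

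Next, for each fixed $(a,\mu)$ I would evaluate the inner sum via Poisson summation using $S(\E_{n;\h})=2\pi\h(n+\tfrac12)$, i.e.\ $\E_{n;\h}=S^{-1}(2\pi\h(n+\tfrac12))$ with $S'=T>0$. Writing $\rho^{(p)}(\frac{E-u}{\h})=\frac1{2\pi}\int(it)^p\hat\rho(t)e^{i(E-u)t/\h}\,dt$, substituting $u=S^{-1}(v)$, and applying Poisson summation in $n$, the inner sum becomes
\[\frac1{(2\pi)^2\h}\sum_{k\in\mathbb{Z}}(-1)^k\iint A(u,t)\,e^{i\Psi_{k,\mu}(u,t)/\h}\,dt\,du,\qquad \Psi_{k,\mu}(u,t)=(E-u)t+(\mu_-\tau_-+\mu_+\tau_+)u-kS(u),\]
with $A$ (independent of $k$) supported in $K\times\operatorname{supp}\hat\rho$. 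A direct computation gives $\partial_t\Psi_{k,\mu}=E-u$ and $\partial_u\Psi_{k,\mu}=-t+\mu_-\tau_-+\mu_+\tau_+-kT(u)$, so $\Psi_{k,\mu}$ has the unique critical point $(u,t)=(E,\,\mu_-\tau_-+\mu_+\tau_+-kT(E))=(E,\,-T_{k,-\mu_-,-\mu_+}(E))$, at which the Hessian has determinant $-1$ and is therefore nondegenerate.

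Finally I would assemble the estimate. Using $\inf_K T>0$, for $|k|$ large one has $|\partial_u\Psi_{k,\mu}|\gtrsim|k|$ on $K\times\operatorname{supp}\hat\rho$, so integration by parts makes the $k$-sum converge with $O(\h^\infty)$ tails; for each of the finitely many remaining $k$, stationary phase gives a contribution of size $O(\h)$ if the (unique) critical point of $\Psi_{k,\mu}$ lies in $K\times\operatorname{supp}\hat\rho$ and $O(\h^\infty)$ otherwise. Hence each inner sum is $O(1)$ in general, and is $O(\h^\infty)$ unless $-T_{k,-\mu_-,-\mu_+}(E)\in\operatorname{supp}\hat\rho$ for some $k$. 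If $|\mu_-|+|\mu_+|\le N$, then $|-\mu_-|+|-\mu_+|\le N$, so every $T_{k,-\mu_-,-\mu_+}(E)$ lies in $\mathcal{T}_N(E)$; since $\mathcal{T}_N(E)=-\mathcal{T}_N(E)$ and $\operatorname{supp}\hat\rho\cap\mathcal{T}_N(E)=\emptyset$ by hypothesis, we conclude $-T_{k,-\mu_-,-\mu_+}(E)\notin\operatorname{supp}\hat\rho$ for all $k$, so the corresponding inner sum is $O(\h^\infty)$. It remains to observe that for $a\le 2N+1$ every admissible $\mu$ has $|\mu_-|+|\mu_+|\le a/2\le N+\tfrac12$, hence $\le N$ by integrality, so those terms are $O(\h^\infty)$; the terms with $a\ge 2N+2=2(N+1)$ carry the prefactor $\h^a=O(\h^{2(N+1)})$ times a bounded (finitely many) inner sum; and choosing $M\ge 2(N+1)$ disposes of the truncation remainder. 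Summing gives the stated bound.

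I expect the main obstacle to be the first step: reorganizing the nested Taylor and asymptotic expansions and, above all, establishing the power-counting inequality $|\mu_-|+|\mu_+|\le a/2$ from $\deg P_{j,k,l}\le j/2$. This inequality is precisely the mechanism by which only $\mathcal{T}_N(E)$ — periods of trajectories reflecting at most $N$ times — obstructs the estimate, rather than the full set of reflecting-trajectory periods. The Poisson-summation-plus-stationary-phase analysis is otherwise standard, provided one arranges enough uniformity in $k$.
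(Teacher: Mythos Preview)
Your proposal is correct and follows essentially the same approach as the paper. The paper first passes to the Fourier side via \eqref{eq:gutz-fourier} and Taylor expands $e^{-itE_{n;\h}/\h}$ (see \eqref{eq:trace-expand}), then packages the Poisson-summation/stationary-phase step into Lemmas~\ref{lem:poisson-mathcale}, \ref{lem:tr-deltaem}, and \ref{lem:gutz-deltae}; you Taylor expand $\rho$ directly and only then invoke Fourier inversion and Poisson summation, but the two computations are the same once one writes $\rho^{(p)}(s)=\frac{1}{2\pi}\int (it)^p\hat\rho(t)e^{ist}\,dt$. The crucial power-counting step you single out---that a total $\h$-power $a$ forces total trigonometric frequency $\le a/2$ via $\deg P_{j,k,l}\le j/2$---is exactly the mechanism the paper uses (implicitly through Lemma~\ref{lem:tr-deltaem}) to restrict the stationary-phase critical times to $\mathcal{T}_{\lfloor a/2\rfloor}(E)$.
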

That is, in order for the eigenvalue-counting function to fail to be $O(\h^{2(N+1)})$, the support $\text{supp }\hat\rho$ must contain a period belonging to $\mathcal{T}_N(E)$.

In addition, for $j=1$ we show that such singularities do indeed generically appear, demonstrating that the eigenvalue-counting function is sensitive not only to the classical ``transmitted'' periodic trajectories, but also the reflected trajectories as well. Namely, we have the following:
\begin{theorem}
\label{thm:gutz-quant}
Suppose $\omega_{\pm}$, $E$ are so that the numbers
\[\tau_- = \frac{\pi}{\omega_-},\quad \tau_+ = \frac{\pi}{\omega_+},\quad T(E) = \tau_++\tau_-+\frac{\sqrt{2m}\ell}{\sqrt{E}}\]
are linearly independent over $\mathbb{Q}$. Suppose that $T^*\in\mathcal{T}_1(E)$, and suppose $\rho$ is a tempered smooth function satisfying that $(\text{supp }\hat\rho)\cap\mathcal{T}_1(E) \subseteq \{T^*\}$. Let $\chi\in C_c^\infty(\mathbb{R}_+)$. Then:
\begin{itemize}
\item If $T^* = T_{k,0,0}(E)$ for some $k\in\mathbb{Z}$, then
\[\sum_{n=0}^\infty \chi(E_{n;\h})\rho\left(\frac{E-E_{n;\h}}{\h}\right) = \frac{(-1)^ke^{iS_{k,0,0}(E)/\h}}{2\pi}\chi(E)T(E)\hat\rho(T^*) + O(\h).\]
The $O(\h)$ remainder is an asymptotic expansion in $h$, with coefficients which are linear combinations of products derivatives of $\chi$ and derivatives of $\hat\rho$ at $E$, $T^*$ respectively (in particular it is $O(\h^\infty)$ if $T^*\not\in\text{supp }\hat\rho$).
\item If $T^* = T_{k,\alpha,\beta}(E)$ for some $k,\alpha,\beta\in\mathbb{Z}$, $|\alpha|+|\beta|=1$, we have
\[\sum_{n=0}^\infty \chi(E_{n;\h})\rho\left(\frac{E-E_{n;\h}}{\h}\right) = \frac{\h^2i^{-(2k+1)}e^{iS_{k,\alpha,\beta}(E)/\h}}{64\pi E^2}\omega_{\pm}^2\chi(E)T^*\hat\rho(T^*) + O(\h^3),\]
where $\omega_{\pm}^2$ is $\omega_-^2$ if $|\alpha|=1$ and $\omega_+^2$ if $|\beta|=1$. The $O(\h^3)$ remainder is an asymptotic expansion in $\h$, with coefficients which are linear combinations of products derivatives of $\chi$ and derivatives of $\hat\rho$ at $E$, $T^*$ respectively (in particular it is $O(\h^\infty)$ if $T^*\not\in\text{supp }\hat\rho$).
\end{itemize}
\end{theorem}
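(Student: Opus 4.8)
The plan is to turn the eigenvalue-counting function into a sum over the Bohr--Sommerfeld points $\E_{n;\h}$ to which Poisson summation applies, using the eigenvalue expansion of Theorem \ref{thm:e-asymp}. Writing $E_{n;\h} = \E_{n;\h} + \Delta E_{n;\h}$ and Taylor-expanding $\chi(E_{n;\h})$ about $\E_{n;\h}$ and $\rho((E - E_{n;\h})/\h)$ about $(E - \E_{n;\h})/\h$, and using that on $\operatorname{supp}\chi$ the points $\E_{n;\h}$ lie in a fixed compact set of positive energies while $\Delta E_{n;\h}/\h = O(\h^2)$ and $\Delta E_{n;\h} = O(\h^3)$, one obtains an asymptotic expansion of the counting function as a sum of terms $\sum_n \chi^{(a)}(\E_{n;\h})\,\rho^{(b)}((E-\E_{n;\h})/\h)\,(\Delta E_{n;\h})^a(\Delta E_{n;\h}/\h)^b$. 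Into each such term one substitutes the finite expansion of $\Delta E_{n;\h}$ from Theorem \ref{thm:e-asymp} and expands each trigonometric polynomial $P_{j,k,l}$ into exponentials; every resulting summand then has the shape $\sum_n g(\E_{n;\h})\,\rho^{(b)}((E-\E_{n;\h})/\h)\,e^{i(\alpha\tau_- + \beta\tau_+)\E_{n;\h}/\h}$ with $g$ smooth and compactly supported in positive energy and $|\alpha|+|\beta|$ bounded by the relevant degree. Since $\E_{n;\h} = S^{-1}(2\pi\h(n+\tfrac12))$, setting $\E(t) = S^{-1}(2\pi\h(t+\tfrac12))$, extending by zero, and applying Poisson summation (with $dn = T(\lambda)\,d\lambda/2\pi\h$ along $\lambda = \E(t)$, which produces a sign $(-1)^k$ from the half-integer shift), each sum becomes $\sum_{k\in\mathbb{Z}}(-1)^k\int g(\lambda)\,\rho^{(b)}((E-\lambda)/\h)\,e^{-ikS(\lambda)/\h}e^{i(\alpha\tau_-+\beta\tau_+)\lambda/\h}\,\tfrac{T(\lambda)}{2\pi\h}\,d\lambda$. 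Writing $\rho^{(b)}$ through its Fourier transform $(i\tau)^b\hat\rho(\tau)$ and applying stationary phase in $(\lambda,\tau)$, the stationary set is exactly $\{\lambda = E,\ \tau = kT(E) + \alpha\tau_- + \beta\tau_+ = T_{k,\alpha,\beta}(E)\}$, with Hessian of determinant $-1$ and vanishing signature (so no Maslov phase), contributing a factor $2\pi\h$, and with phase value $S_{k,\alpha,\beta}(E)/\h$.

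For the first bullet, where $T^* = T_{k,0,0}(E)$ and $(\operatorname{supp}\hat\rho)\cap\mathcal{T}_1(E)\subseteq\{T^*\}$: every correction term (those with $a+b\ge1$, or with a nonconstant $P_{j,k,l}$) carries a nonzero exponential and hence puts its stationary point at a period $T_{k',\alpha,\beta}(E)$ with $|\alpha|+|\beta|\ge1$; such a period equals $T^*$ only in degenerate cases, and in any event its contribution is $O(\h^2)$ or smaller, hence absorbed in the remainder. The surviving term is the principal one $\sum_n\chi(\E_{n;\h})\rho((E-\E_{n;\h})/\h)$; only $k' = -k$ produces a stationary point in $\operatorname{supp}\hat\rho$, and stationary phase there yields $\tfrac{(-1)^k}{2\pi}\chi(E)T(E)\hat\rho(T^*)e^{iS_{k,0,0}(E)/\h}$ (the $T(E)$ being the Jacobian), with the lower-order stationary-phase terms furnishing the asserted $O(\h)$ asymptotic expansion whose coefficients are built from derivatives of $\chi$ at $E$ and of $\hat\rho$ at $T^*$ (so $O(\h^\infty)$ when $T^*\notin\operatorname{supp}\hat\rho$, since then $\hat\rho$ vanishes to infinite order there).

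For the second bullet, where $T^* = T_{k,\alpha,\beta}(E)$ with $|\alpha|+|\beta| = 1$ and $\tau_-,\tau_+,T(E)$ are rationally independent: the principal term contributes only $O(\h^\infty)$ near $T^*$ since its stationary points are at $k'T(E)\ne T^*$, and the leading contribution comes from the $a=0$, $b=1$ term with $\Delta E_{n;\h}/\h$ replaced by its first correction $\tfrac{\h^2}{16\E_{n;\h}^2T(\E_{n;\h})}P_{2,2,1}(\tau_-\E_{n;\h}/\h,\tau_+\E_{n;\h}/\h)$, where $P_{2,2,1}(x,y) = \tfrac1{16}(\omega_-^2\cos x + \omega_+^2\cos y)$. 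Expanding the cosines, the available stationary points are the $T_{k',\pm1,0}(E)$ and $T_{k',0,\pm1}(E)$; rational independence forces exactly one of them to equal $T^*$, from the $\omega_-^2$ cosine if $|\alpha| = 1$ and from the $\omega_+^2$ cosine if $|\beta| = 1$, again with $k' = -k$. The Jacobian $T(\E_{n;\h})/2\pi\h$ cancels the denominator $T(\E_{n;\h})$, leaving $\tfrac{\h}{32\pi E^2}$; combining this with the $-1$ from the $-\rho'$ in the Taylor expansion, the $i$ from $\widehat{\rho'} = i\tau\hat\rho$ (so $iT^*$ after evaluation), the Poisson sign $(-1)^{k'} = (-1)^k = i^{-2k}$, and $e^{iS_{k,\alpha,\beta}(E)/\h}$ assembles into $\tfrac{\h^2 i^{-(2k+1)}e^{iS_{k,\alpha,\beta}(E)/\h}}{64\pi E^2}\omega_\pm^2\chi(E)T^*\hat\rho(T^*)$. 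The remaining pieces — the $a=1,b=0$ term (size $O(\h^3)$, since $\Delta E_{n;\h}=O(\h^3)$), the powers $(\Delta E_{n;\h}/\h)^b$ with $b\ge2$ ($O(\h^4)$), the higher terms of $\A$ ($O(\h^4)$), and subleading stationary-phase corrections — produce the asserted $O(\h^3)$ expansion in derivatives of $\chi$ at $E$ and $\hat\rho$ at $T^*$.

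The hard part is the bookkeeping that makes the above rigorous: one must combine the Taylor expansion in $\Delta E_{n;\h}$ with the asymptotic expansion of $\Delta E_{n;\h}$ from Theorem \ref{thm:e-asymp} (whose remainder $O(\h^N/\E^N)$ is $O(\h^N)$ on the relevant energy range), and verify that truncating the latter at order $N$ alters the counting function by only $O(\h^N)$ — here one uses that, for fixed $\h$, $\sum_n |\rho^{(b)}((E-\E_{n;\h})/\h)| = O(1)$ because the $\E_{n;\h}$ are $\sim\h$-spaced and $\rho^{(b)}$ decays rapidly — so that only finitely many terms survive to a given order in $\h$ and each is a genuine oscillatory integral with a full stationary-phase expansion. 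One must also check carefully that a given $T^*\in\mathcal{T}_1(E)$ is produced by exactly one term with exactly the stated amplitude, and that no unforeseen term of equal or lower order in $\h$ intervenes; this is precisely where the rational-independence hypothesis is used. Justifying the Poisson summation itself (the summand is a smooth, compactly-supported-in-energy function of the continuous index once $\rho^{(b)}$ is written via its Fourier transform) and handling the regularity of $\rho$ (controlled since $\hat\rho$ is compactly supported) are routine.
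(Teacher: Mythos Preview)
Your proposal is correct and takes essentially the same route as the paper. The only cosmetic difference is the order of operations: the paper first passes to the Schr\"odinger-propagator representation via Fourier inversion (so that Taylor-expanding $e^{-itE_{n;\h}/\h}$ produces the powers of $(-it)$), packages the Poisson summation and stationary phase into Lemmas~\ref{lem:poisson-mathcale}--\ref{lem:gutz-deltae}, and then reads off the two bullets from Lemma~\ref{lem:gutz-deltae} with $m=0,l=0$ and $m=1,l=1$; you instead Taylor-expand $\rho$ directly, insert the Fourier representation of $\rho^{(b)}$ at the end, and do the two-variable stationary phase in one shot---but the stationary set, Hessian, and resulting amplitudes are identical.
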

We remark that the linear independence assumption is for convenience to avoid possible cancellations and could probably be relaxed with more careful bookkeeping.

Finally, in Section \ref{sec:heat}, we use the eigenvalue asymptotics to study the heat trace $\sum_{n=0}^\infty e^{-tE_{n;\h}}$ as $t\to 0^+$. For computational convenience, we will fix $\h=1$ and consider the case $\ell=0$, i.e.\ with no flat potential region, but with $\omega_-\ne\omega_+$. In this case, the potential is an asymmetric harmonic oscillator. While the leading order terms of the heat trace can be described in terms of geometric quantities, the existence of a smooth asymptotic expansion, after discarding the singular leading order term, is related to the regularity of the potential; for example, Smith showed \cite{s-19} that for a compactly supported potential $V$ on a complete manifold $M$, the difference $e^{-t(-\Delta +V)}-e^{-t(-\Delta)}$ of the perturbed and unperturbed heat kernels admits an asymptotic expansion with $m+2$ terms better than the leading singularity if and only if $V\in H^m(M)$. In our setting (which does not belong to the setting in \cite{s-19} as the potentials here are quadratic at infinity), we show that the derivative discontinuity in this case is strong enough to preclude an infinite asymptotic expansion, as follows:
\begin{theorem}
\label{thm:heat}
Suppose $\ell=0$ and $\omega_-\ne\omega_+$. Let $\overline\omega = 2/(\omega_-^{-1}+\omega_+^{-1})$ be the harmonic mean of $\omega_-$ and $\omega_+$, and write $E_n = E_{n;1}$. Then there exist constants $c_1,\dots,c_4$ such that
\[
\sum_{n=0}^\infty e^{-tE_n} 
=\frac{1}{\overline\omega t} +\sum_{j=1}^4 c_jt^j + \frac{(\omega_+^2-\omega_-^2)^2}{4096\overline\omega T^2}t^5\log(t)+ O(t^5).
\]
In particular, $\sum_{n=0}^\infty e^{-tE_n} - \frac{1}{\overline\omega t}$ is $C^4$ but not $C^5$ as $t\to 0^+$.
\end{theorem}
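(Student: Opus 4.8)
The plan is to compare the heat trace with the model sum $\sum_{n\ge 0}e^{-t\E_{n;1}}$ and to expand the difference via Theorem~\ref{thm:e-asymp}. Since $\ell=0$ and $\h=1$, \eqref{eq:s} gives $S(E)=(\tau_-+\tau_+)E$ and \eqref{eq:t} gives $T(E)\equiv\tau_-+\tau_+=\pi(\omega_-^{-1}+\omega_+^{-1})=2\pi/\overline\omega$, a constant I will call $T$. Hence $\E_n:=\E_{n;1}=S^{-1}(2\pi(n+\tfrac12))=\overline\omega(n+\tfrac12)$ is an arithmetic progression, so the model sum is the geometric series
\[
\sum_{n\ge 0}e^{-t\E_n}=\sum_{n\ge 0}e^{-t\overline\omega(n+1/2)}=\frac{1}{2\sinh(\overline\omega t/2)}=\frac{1}{\overline\omega t}+(\text{odd, real-analytic function of }t),
\]
which already supplies $\tfrac1{\overline\omega t}$ and contributions to $c_1,\dots,c_4$ (and to $O(t^5)$). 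Writing $\Delta E_n:=E_n-\E_n=O(n^{-2})$ uniformly (first bullet of Theorem~\ref{thm:e-asymp}), the remaining part of the heat trace is $\sum_{n\ge 0}e^{-t\E_n}(e^{-t\Delta E_n}-1)$; this converges absolutely and, Taylor-expanding the exponential, equals $\sum_{k\ge 1}\frac{(-t)^k}{k!}\sum_{n\ge 0}e^{-t\E_n}(\Delta E_n)^k$. The finitely many terms with $\E_n\le\epsilon$ are smooth in $t$ and get absorbed into the $c_j$.

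Next I would insert the eigenvalue expansion. Applying Theorem~\ref{thm:e-asymp} at $\h=1$ with a large cutoff $N$, and noting that with $\ell=0$ the half-integer-$k$ elements of $\A$ (which carry positive powers of $\ell$, e.g.\ $P_{5,11/2,3}$) drop out, $\Delta E_n$ becomes a finite $\mathbb{C}$-linear combination of terms $\E_n^{-K}e^{i(A\tau_-+B\tau_+)\E_n}$ with $K\in\mathbb{Z}_{\ge 2}$, $A,B\in\mathbb{Z}$, modulo $O(\E_n^{-N})$; powers $(\Delta E_n)^k$ keep this shape with $K\ge 2k$. So every inner sum reduces to finitely many Lerch-type series
\[
\Sigma_{K,\mu}(t)=\sum_{n\ge 0}e^{-t\E_n}\E_n^{-K}e^{i\mu\E_n}=\overline\omega^{-K}\sum_{n\ge 0}\frac{e^{-(t-i\mu)\overline\omega(n+1/2)}}{(n+1/2)^K},\qquad \mu=A\tau_-+B\tau_+,
\]
plus harmless $O(\E_n^{-N})$ tails. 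Call $\mu$ \emph{resonant} if $\mu\overline\omega\in 2\pi\mathbb{Z}$, i.e.\ (since $(\tau_-+\tau_+)\overline\omega=2\pi$) if $\mu=m(\tau_-+\tau_+)$ for some $m\in\mathbb{Z}$. In all cases --- whether $\omega_-/\omega_+$ is irrational, or equals $p/q$ in lowest terms (so $p+q\ge 3$) --- resonance forces $|A-B|$ to be a multiple of $p+q$, hence for the monomials occurring with $K\le 5$ (where $|A|+|B|\le 2$) it forces $A=B=m$. If $\mu$ is non-resonant, $e^{-(t-i\mu)\overline\omega}$ stays away from $1$ as $t\to 0^+$, so $\Sigma_{K,\mu}$ extends real-analytically across $t=0$ and contributes only a Taylor series. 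If $\mu=m(\tau_-+\tau_+)$, then $e^{i\mu\E_n}=e^{2\pi i m(n+1/2)}=(-1)^m$ and, differentiating $K-1$ times and using $\sum_{n\ge 0}\frac{e^{-x(n+1/2)}}{n+1/2}=-\log x+(\text{real-analytic})$ as $x\to 0^+$,
\[
\Sigma_{K,\mu}(t)=(-1)^m\overline\omega^{-K}\sum_{n\ge 0}\frac{e^{-t\overline\omega(n+1/2)}}{(n+1/2)^K}=(-1)^m\,\frac{(-1)^K}{(K-1)!}\,\overline\omega^{-1}t^{K-1}\log t+(\text{real-analytic}).
\]

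Then I would collect the $t^5\log t$ term. A resonant $\Sigma_{K,\mu}$ entering with prefactor $t^k$ produces a $t^{k+K-1}\log t$ term, and $k+K-1=5$ with $K\ge 2k$ forces $(k,K)\in\{(1,5),(2,4)\}$, so only the $k=1$ and $k=2$ inner sums matter. For $k=1$ the relevant piece of $\Delta E_n$ is its $\E_n^{-5}$ coefficient $T^{-2}P_{5,5,2}(\tau_-\E_n,\tau_+\E_n)$; for $k=2$ it is the $\E_n^{-4}$ coefficient of $(\Delta E_n)^2$, namely $T^{-2}P_{2,2,1}(\tau_-\E_n,\tau_+\E_n)^2$ --- and since $P_{2,2,1}^2=P_{5,5,2}$ these are the same function. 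Writing $P_{5,5,2}(x,y)=\sum_{j,k}c_{j,k}e^{ijx}e^{iky}$, the diagonal-resonant value $\sum_m c_{m,m}(-1)^m$ equals $\frac{\omega_-^4+\omega_+^4}{512}$ (from the constant terms in $\omega_-^4\cos^2x$ and $\omega_+^4\cos^2 y$) minus $\frac{2\omega_-^2\omega_+^2}{512}$ (from $c_{1,1},c_{-1,-1}$ coming from $2\omega_-^2\omega_+^2\cos x\cos y$, each carrying sign $(-1)^{\pm 1}=-1$), i.e.\ $\frac{(\omega_+^2-\omega_-^2)^2}{512}$. Combining with the log-coefficients from the displayed formula, the $k=1$ term contributes $-t\cdot T^{-2}\cdot\frac{(\omega_+^2-\omega_-^2)^2}{512}\cdot\frac{(-1)^5}{4!\,\overline\omega}t^4\log t=\frac{(\omega_+^2-\omega_-^2)^2}{12288\,\overline\omega T^2}t^5\log t$ and the $k=2$ term contributes $\frac{t^2}{2}\cdot T^{-2}\cdot\frac{(\omega_+^2-\omega_-^2)^2}{512}\cdot\frac{(-1)^4}{3!\,\overline\omega}t^3\log t=\frac{(\omega_+^2-\omega_-^2)^2}{6144\,\overline\omega T^2}t^5\log t$, for a total of $\frac{(\omega_+^2-\omega_-^2)^2}{4096\,\overline\omega T^2}t^5\log t$. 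One further checks there is no $t^a\log t$ for $a\in\{0,1,2,3,4\}$: at $k=1$ this would require a nonzero diagonal-resonant value for $P_{2,2,1},P_{4,4,1},P_{4,4,2}$, but the first two have no diagonal Fourier coefficients and those of $P_{4,4,2}$ are purely imaginary, hence cancel after the $\pm$ symmetrization; at $k\ge 2$ the bound $(\Delta E_n)^k=O(\E_n^{-2k})$ already pushes any log term to order $\ge 5$. Since every correction carries a factor $t^{\ge 1}$ (no $t^0$ or $\log t$ term) and all remaining $t^{\ge 6}\log t$ and analytic contributions assemble into $\frac{1}{\overline\omega t}+\sum_{j=1}^4 c_j t^j+O(t^5)$, the expansion follows; because $\omega_-\ne\omega_+$ the $t^5\log t$ coefficient is strictly positive, so after subtracting $\frac1{\overline\omega t}$ the sum has a fourth derivative extending continuously to $t=0^+$ but a fifth derivative blowing up like $\log t$, i.e.\ it is $C^4$ but not $C^5$.

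The step I expect to be the main obstacle is making this double expansion --- simultaneously in the small quantities $t$ and $1/\E_n$, then summed over all $n$ --- rigorous with a genuine $O(t^5)$ error term: one must use the uniformity in $n$ of Theorem~\ref{thm:e-asymp} to control the tails $\sum_n e^{-t\E_n}O(\E_n^{-N})$ and the $k\ge 3$ contributions, and justify interchanging the sum over $n$ with both the Taylor expansion of $e^{-t\Delta E_n}$ and the asymptotic expansion of $\Delta E_n$. A secondary but essential task is the Fourier-coefficient bookkeeping that decides which monomials $e^{i(A\tau_-+B\tau_+)\E_n}$ are resonant, including the verification that at the relevant (low) degrees the only resonances are the diagonal ones $A=B$, so that no irrationality hypothesis on $\omega_-/\omega_+$ is needed.
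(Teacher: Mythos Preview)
Your proposal is correct and follows essentially the same strategy as the paper: Taylor-expand $e^{-t\Delta E_n}$, insert the eigenvalue asymptotics from Theorem~\ref{thm:e-asymp}, separate the resulting sums into oscillatory (non-resonant) pieces that extend smoothly to $t=0$ and non-oscillatory (resonant) pieces that generate the $t^{K-1}\log t$ singularities, and then read off the $t^5\log t$ coefficient from the $(k,K)=(1,5)$ and $(2,4)$ contributions. The numerical computation of the coefficient, and the verification that no $t^a\log t$ appears for $a\le 4$, match the paper's.

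The one organizational difference worth noting: the paper first exploits the identity $(\tau_-+\tau_+)\E_n=(2n+1)\pi$ (the Remark at the end of Section~\ref{sec:e-asymp}) to collapse the two-variable trigonometric polynomials $P_{j,k,l}(\tau_-\E_n,\tau_+\E_n)$ to single-variable expressions in $\tau_+\E_n$, after which ``resonant'' simply means ``constant term'' and the analysis reduces to the two Lemmas~\ref{lem:heat-em} and~\ref{lem:heat-trig}. You instead keep the two-variable Fourier picture and identify resonance as the diagonal condition $A=B$, arguing via $p+q\ge 3$ that no off-diagonal resonances occur at the relevant degrees even when $\omega_-/\omega_+$ is rational. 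The two viewpoints are equivalent (the diagonal condition is precisely what makes the single-variable reduction give a constant), but your Lerch-series bookkeeping is a bit more systematic and makes the rational case explicit, whereas the paper's single-variable reduction makes the individual checks (e.g.\ that $P_{4,4,2}$ contributes no log) more concrete.
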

We remark that for this case ($\ell=0$), we can also derive a semiclassical asymptotic expansion, essentially by replacing $t$ by $\h t$ in the above expansion. A semiclassical heat trace expansion was studied in \cite{guw-12} for a (semiclassical, i.e.\ order $\h^2$) perturbation of the harmonic oscillator; a key qualitative difference here is that the expansion here would contain terms like $\log\h$, which did not appear in \cite{guw-12}.

\subsection*{Acknowledgments}
The author would like to thank David Sher and Jared Wunsch for useful discussions on this project.

\section{Equation for eigenvalues}
\label{sec:eigeneq}
In this section we derive the equation satisfied by the eigenvalues $E_{n;\h}$; that is, we prove Theorem \ref{thm:eigeneq}.

We recall \cite{as} for $a\in\mathbb{C}$ that the solutions to the so-called Weber differential equation
\[-u''(x) + \left(\frac{1}{4}x^2 + a\right)u(x) = 0\]
are given by the \emph{parabolic cylinder functions}, with the subspace of solutions which are also in $L^2((0,\infty))$ having dimension one, spanned by the function denoted $U(a,x)$ in \cite{as} (sometimes called \emph{Whittaker's function}). This function has the properties that
\begin{equation}
\label{eq:u-values}
U(a,0) = \frac{\sqrt{\pi}}{2^{a/2+1/4}\Gamma(3/4+a/2)},\quad U'(a,0) = -\frac{\sqrt{\pi}}{2^{a/2-1/4}\Gamma(1/4+a/2)}
\end{equation}
(cf. \cite[Equation 19.3.5]{as}). In particular, recalling the function $\tilde\theta$ defined in \eqref{eq:tildetheta}, note for $a>0$ that
\begin{equation}
\label{eq:uu'}
\begin{pmatrix} U(-a,0) \\ \frac{1}{\sqrt{a}}U'(-a,0) \end{pmatrix} = \frac{\sqrt{\pi}}{2^{-a/2-1/4}}\begin{pmatrix} \frac{\sqrt{a/2}}{\Gamma(3/4-a/2)} \\ -\frac{1}{\Gamma(1/4-a/2)}\end{pmatrix} =\frac{\sqrt{\pi}r(a/2)}{2^{-a/2-1/4}} \begin{pmatrix} \cos(\tilde\theta(a/2)) \\ \sin(\tilde\theta(a/2))\end{pmatrix},
\end{equation}
so the vector $\begin{pmatrix} U(-a,0) \\ \frac{1}{\sqrt{a}}U'(-a,0) \end{pmatrix}$ is a positive multiple of $\begin{pmatrix} \cos(\tilde\theta(a/2)) \\ \sin(\tilde\theta(a/2))\end{pmatrix}$, i.e.\ makes an angle of $\tilde\theta(a/2)$ with respect to the vector $\begin{pmatrix} 1 \\ 0 \end{pmatrix}$.

We now ask when the equation $P_\h u = Eu$ has a nonzero $L^2$ solution. Since $V(x)$ is continuous, any $L^2$ solution $u$ must belong in $H^2_{loc}(\mathbb{R})$, and in particular must be $C^1$. Moreover, it must solve the equations
\begin{align*}
-\frac{\h^2}{2m}\frac{d^2u}{dx^2} + \frac{1}{2}m\omega_-x^2u &= Eu\text{ for }x<0,\\
-\frac{\h^2}{2m}\frac{d^2u}{dx^2} &= Eu\text{ for }0< x<\ell,\\
-\frac{\h^2}{2m}\frac{d^2u}{dx^2} + \frac{1}{2}m\omega_+(x-\ell)^2u &= Eu\text{ for }x>\ell,
\end{align*}
with $u$ also being $L^2$ in each of the above intervals.

In the region $x<0$, we can express $u$ in terms of the distinguished solution $\tilde{U}$ of the Weber differential equation. Indeed, making the change of variables (with $\gamma>0$)
\[z=-\gamma x \implies \frac{d}{dx} = -\gamma\frac{d}{dz},\]
it follows that $v(z) = u(-z/\gamma)$ is a $L^2((0,\infty))$ solution of
\[-\frac{\h^2\gamma^2}{2m}\frac{d^2v}{dz^2} + \frac{1}{2}\frac{m\omega_-^2}{\gamma^2}z^2v = Ev,
\quad\text{i.e.}\quad
-\frac{d^2v}{dz^2} + \left(\frac{m^2\omega_-^2}{\h^2\gamma^4}z^2 - \frac{2Em}{\h^2\gamma^2}\right)v = 0.\]
Setting $\frac{m^2\omega_-^2}{\h^2\gamma^4} = \frac{1}{4}$, i.e.\ $\gamma = \sqrt{\frac{2m\omega_-}{\h}}$, it follows that
\[-\frac{d^2v}{dz^2} + \left(\frac{1}{4}z^2-a_-\right)v = 0,\quad a_- = \frac{E}{\h\omega_-},\]
and hence $v(z)$ is a multiple of $\tilde{U}(-a_-,z)$ with $a_- = E/(\h\omega_-)$. This implies that
\begin{equation}
\label{eq:ule0}
u(x) = v(-\gamma x) = C_-\tilde{U}\left(-\frac{E}{\h\omega_-},-\sqrt{\frac{2m\omega_-}{\h}}x\right),\quad x\le 0.
\end{equation}
Similar logic gives
\begin{equation}
\label{eq:ugeell}
u(x) = C_+\tilde{U}\left(-\frac{E}{\h\omega_+},\sqrt{\frac{2m\omega_+}{\h}}(x-\ell)\right),\quad x\ge\ell.
\end{equation}
Finally, for $0\le x\le\ell$, we have that $u$ solves $-\frac{\h^2}{2m}\frac{d^2u}{dx^2} = Eu$, and hence
\[u(x) = \cos\left(\frac{\sqrt{2mE}}{\h}x\right)u(0) + \frac{\sin\left(\frac{\sqrt{2mE}}{\h}x\right)}{\sqrt{2mE}/\h}u'(0)\]
for $0\le x\le\ell$ (note that $u$ being $C^1$ means we may freely write the values of $u$ and $u'$ instead of taking limits). This could equivalently be written as
\begin{equation}
\label{eq:urotate}
\begin{pmatrix} u(x) \\ -\frac{\h}{\sqrt{2mE}}u'(x)\end{pmatrix} = \begin{pmatrix} \cos\left(\frac{\sqrt{2mE}}{\h}x\right) & -\sin\left(\frac{\sqrt{2mE}}{\h}x\right) \\ \sin\left(\frac{\sqrt{2mE}}{\h}x\right) & \cos\left(\frac{\sqrt{2mE}}{\h}x\right)\end{pmatrix}\begin{pmatrix} u(0) \\ -\frac{\h}{\sqrt{2mE}}u'(0)\end{pmatrix}.
\end{equation}
If we let $\gamma_{\pm} = \sqrt{\frac{2m\omega_{\pm}}{\h}}$, and note that $\frac{\h}{\sqrt{2mE}}\gamma_{\pm} = \sqrt{\frac{\h\omega_{\pm}}{E}} = \frac{1}{\sqrt{a_{\pm}}}$, it follows that
\begin{equation}
\label{eq:u(0)}
\begin{aligned}
\begin{pmatrix} u(0) \\ -\frac{\h}{\sqrt{2mE}}u'(0)\end{pmatrix} = C_-\begin{pmatrix} U(-a_-,0) \\ -\frac{\h}{\sqrt{2mE}}(-\gamma_-)U'(-a_-,0)\end{pmatrix} &= C_-\begin{pmatrix} U(-a_-,0) \\ \frac{1}{\sqrt{a_-}}U'(-a_-,0)\end{pmatrix} \\
&= \tilde{C}_-\begin{pmatrix} \cos(\tilde\theta(a_-/2)) \\ \sin(\tilde\theta(a_-/2))\end{pmatrix}.
\end{aligned}
\end{equation}
i.e. the vector $\begin{pmatrix} u(0) \\ -\frac{\h}{\sqrt{2mE}}u'(0)\end{pmatrix}$ has angle $\tilde\theta(a_-/2)$ modulo $\pi$. Similarly,
\begin{equation}
\label{eq:u(ell)}
\begin{aligned}
\begin{pmatrix} u(\ell) \\ -\frac{\h}{\sqrt{2mE}}u'(\ell)\end{pmatrix} &= C_+\begin{pmatrix} U(-a_+,0) \\ -\frac{1}{\sqrt{a_+}}U'(-a_+,0)\end{pmatrix} \\
&= \tilde{C}_+\begin{pmatrix} \cos(\tilde\theta(a_+/2)) \\ -\sin(\tilde\theta(a_+/2))\end{pmatrix} = \tilde{C}_+\begin{pmatrix} \cos(-\tilde\theta(a_+/2)) \\ \sin(-\tilde\theta(a_+/2))\end{pmatrix}.
\end{aligned}
\end{equation}
Thus, applying \eqref{eq:urotate} to $x=\ell$, and considering angles using \eqref{eq:u(0)} and \eqref{eq:u(ell)}, we obtain
\[-\tilde\theta(a_+/2) = \frac{\sqrt{2mE}}{\h}\ell + \tilde\theta(a_-/2)\,\mod\pi,\quad
\text{i.e.}
\quad\tilde\theta(a_-/2) + \tilde\theta(a_+/2) + \frac{\sqrt{2mE}\ell}{\h}\in\pi\mathbb{Z}.\]
This proves Theorem \ref{thm:eigeneq}.
\begin{remark}
Consider the case $\omega_-=\omega_+=\omega$ and $\ell\ne 0$, considered in \cite{mms-17}.
In this case, the equation simplifies to
\[2\tilde\theta\left(\frac{E_{n;\h}}{2\h\omega}\right) + \frac{\sqrt{2m}\ell\sqrt{E_{n;\h}}}{\h} = \pi n.\]
Defining the dimensionless quantities $\E = E_{n;\h}/(\h\omega)$ and $l = \sqrt{\frac{m\omega}{2\h}}\ell$, we obtain
\[\tilde\theta\left(\frac{\E}{2}\right) + l\sqrt{\E} = \frac{\pi n}{2},\]
and hence
\[-\frac{\Gamma(3/4-\E/2)}{(\E/2)^{1/2}\Gamma(1/4-\E/2)} = \tan\left(\tilde\theta\left(\frac{\E}{2}\right)\right) = \tan\left(\frac{\pi n}{2} - l\sqrt{\E}\right).\]
Since
\[\tan\left(\frac{\pi n}{2}-x\right) = \begin{cases} \cot(x) & n\text{ odd }\\ -\tan(x) & n\text{ even }\end{cases},\]
we thus obtain the equation
\[-\sqrt{2}\frac{\Gamma(3/4-\E/2)}{\Gamma(1/4-\E/2)} = \begin{cases} \sqrt{\E}\cot(l\sqrt{\E}) & n\text{ odd } \\ -\sqrt{\E}\tan(l\sqrt{\E}) & n\text{ even }\end{cases}.\]
This recovers Equations (33) and (34) in \cite{mms-17}.
\end{remark}

\section{Eigenvalue Asymptotics}
\label{sec:e-asymp}
In this section we derive the eigenvalue asymptotics claimed in Theorem \ref{thm:e-asymp}.
We start by looking at the asymptotics of the angle function $\tilde\theta$:
\begin{lemma}
\label{lem:tildethetaasymp}
The function $\tilde\theta$ defined in \eqref{eq:tildetheta} satisfying $\tilde\theta(0)=-\pi/2$ is strictly increasing on $[0,\infty)$. Moreover, it can be written as
\[\tilde\theta(z) = \pi\left(z-\frac{1}{4}\right) + r(z),\]
where $r(z)$ is smooth for $z>0$, $|r(z)|<\pi/2$ for all $z>0$, and as $z\to\infty$, $r(z)$ satisfies an asymptotic expansion of the form
\[r(z) = \sum_{j=2}^N \frac{P_j(2\pi z)}{z^j} + O(z^{-(N+1)}),\]
where $P_j$ is a trigonometric polynomial of degree at most $j/2$.
\end{lemma}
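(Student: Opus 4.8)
### Proof Proposal

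The plan is to analyze the angle function $\tilde\theta$ through the asymptotics of the ratio
\[
\frac{\tilde U(-a,0)}{\frac{1}{\sqrt{a}}\tilde U'(-a,0)} = -\sqrt{2}\,\frac{\Gamma(3/4-a/2)}{\Gamma(1/4-a/2)} \cdot \frac{1}{\sqrt{a/2}} \cdot \sqrt{a/2}= \cot(\tilde\theta(a/2)),
\]
equivalently $\tan(\tilde\theta(z)) = -\dfrac{1}{\sqrt{2}}\cdot\dfrac{\Gamma(1/4-z)}{\sqrt{z}\,\Gamma(3/4-z)}$ after reading off \eqref{eq:tildetheta}. I would set $F(z) = \dfrac{z^{1/2}\Gamma(1/4-z)}{\Gamma(3/4-z)}$, so that $\tan(\tilde\theta(z)) = -\dfrac{1}{\sqrt{2}\,z}\,F(z)$ — wait, more carefully: from \eqref{eq:tildetheta} we have $\tan\tilde\theta(z) = \sin\tilde\theta/\cos\tilde\theta = -\dfrac{1/\Gamma(1/4-z)}{z^{1/2}/\Gamma(3/4-z)} = -\dfrac{\Gamma(3/4-z)}{z^{1/2}\Gamma(1/4-z)}$.

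\textbf{Step 1: Monotonicity.} For strict monotonicity on $[0,\infty)$, I would use the parabolic cylinder function interpretation: $\begin{pmatrix}\tilde U(-a,0)\\ \frac{1}{\sqrt a}\tilde U'(-a,0)\end{pmatrix}$ has angle $\tilde\theta(a/2)$. Strict monotonicity of the angle in $a$ follows from a Wronskian/Sturm-type argument — differentiating the Weber equation in the parameter $a$ and examining the Prüfer angle, whose derivative in the spectral parameter has a fixed sign. Alternatively, and perhaps more cleanly, one can invoke that $\tilde\theta(z)$ is (up to affine normalization) the spectral shift / eigenvalue-counting function for the half-line harmonic oscillator with Dirichlet-to-Neumann boundary data rotating, which is manifestly monotone. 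I expect this to be the part requiring the most care to state rigorously, since it hinges on a clean monotonicity principle for Prüfer angles of Sturm-Liouville problems with a parameter.

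\textbf{Step 2: The asymptotic expansion of $r(z)$.} Write $\tilde\theta(z) = \pi(z - 1/4) + r(z)$; the claim is that $\tan(\pi(z-1/4) + r(z)) = -\dfrac{\Gamma(3/4-z)}{z^{1/2}\Gamma(1/4-z)}$. Since $\tan(\pi z - \pi/4 + r) = \dfrac{\tan(\pi z - \pi/4) + \tan r}{1 - \tan(\pi z-\pi/4)\tan r}$, and using the reflection formula $\Gamma(1/4-z)\Gamma(3/4+z) = \pi/\sin(\pi(3/4+z)) = \pi/\cos(\pi(1/4-z))$-type identities, I would convert the ratio of Gammas at $3/4-z$ and $1/4-z$ into a form involving $\Gamma(z+1/4)/\Gamma(z+3/4)$ times trigonometric factors in $\pi z$. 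Concretely, $\dfrac{\Gamma(3/4-z)}{\Gamma(1/4-z)} = \dfrac{\Gamma(1/4+z)}{\Gamma(3/4+z)}\cdot\dfrac{\sin(\pi(3/4+z))}{\sin(\pi(1/4+z))}$ by the reflection formula, and $\sin(\pi(3/4+z))/\sin(\pi(1/4+z))$ simplifies to a ratio like $(\cos\pi z - \sin\pi z)/(\cos\pi z + \sin\pi z) = -\tan(\pi z - \pi/4)$ up to sign. The upshot is
\[
\tan(\pi(z-1/4)+r(z)) = \tan(\pi z - \pi/4)\cdot \frac{1}{\sqrt z}\cdot\frac{\Gamma(1/4+z)}{\Gamma(3/4+z)} \cdot(\text{sign}),
\]
so that, after solving for $\tan r(z)$ via the tangent addition formula, $\tan r(z)$ is an explicit rational expression in $\tan(\pi z - \pi/4)$ and the ratio $G(z) := z^{-1/2}\Gamma(1/4+z)/\Gamma(3/4+z)$. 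The key analytic input is the classical asymptotic expansion $\dfrac{\Gamma(z+1/4)}{\Gamma(z+3/4)} = z^{-1/2}\left(1 + \sum_{j\ge 1} c_j z^{-j}\right)$ (from Stirling / the ratio-of-Gammas expansion), giving $G(z) = 1 + \sum_{j\ge 1} c_j z^{-j}$, a pure (non-oscillatory) power series in $z^{-1}$ with $G(z) \to 1$.

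\textbf{Step 3: Extracting the trigonometric-polynomial structure.} Since $G(z) = 1 + O(z^{-1})$, the expression $\dfrac{\tan(\pi z - \pi/4)(G-1)}{1 + \tan^2(\pi z-\pi/4)\cdot(\text{stuff})}$ — i.e. $\tan r(z)$ — is $O(z^{-1})$, hence $r(z) = \arctan(\cdots)$ is well-defined and small for large $z$, and by continuity and $|r|<\pi/2$ (which follows because $\tan r$ stays bounded and $r$ can't jump past $\pm\pi/2$ given $\tilde\theta$ is continuous and increasing with the stated normalization $\tilde\theta(0)=-\pi/2$, so $r(0^+)$ near $0$ and controlled). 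For the asymptotic expansion: substitute the power series $G(z) = 1 + \sum c_j z^{-j}$ into the rational expression for $\tan r(z)$, expand in powers of $z^{-1}$, and observe that each coefficient is a rational function of $\tan(\pi z - \pi/4)$ — equivalently of $e^{2\pi i z}$ — with denominators that are powers of $(1+\tan^2) = \sec^2$, which when cleared leave trigonometric polynomials. Then apply $r = \arctan(\tan r)$ and expand the arctan series; composition of power series in $z^{-1}$ with trig-polynomial coefficients stays in that class. Tracking degrees: the $z^{-j}$ coefficient of $\tan r$ picks up at most one factor of $\tan(\pi z-\pi/4)$ per power of $(G-1)$, and $G-1$ contributes starting at $z^{-1}$, so the $z^{-j}$ term involves trig polynomials of degree $\le j$ in $e^{i\pi z}$... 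I would need to check the bookkeeping gives degree $\le j/2$ in $e^{2\pi i z}$ as claimed — this degree count, and confirming the expansion starts at $j=2$ (the $j=1$ term must vanish, presumably because $c_1$ combines with the leading trig factor to cancel), is the second delicate point. Finally, smoothness of $r$ for $z>0$ is inherited from smoothness of the Gamma-ratio away from the poles $z = 3/4, 7/4, \dots$ and $z = 1/4, 5/4,\dots$; at those points one checks $\tilde\theta$ passes smoothly through multiples of $\pi/2$ (the numerator and denominator in \eqref{eq:tildetheta} alternate zeros), so $r$ extends smoothly across.

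\textbf{Main obstacle.} I expect the principal difficulty to be the precise degree bookkeeping in Step 3 — verifying $\deg P_j \le j/2$ and that the expansion genuinely begins at $j=2$ — since this requires carefully propagating the trig-polynomial degree through (a) clearing the $\sec^2$ denominators, (b) the geometric-series expansion of the rational function, and (c) the arctan composition, while also using the specific values of the first Stirling coefficients $c_1, c_2$ to see the $z^{-1}$ term cancel. The monotonicity in Step 1, while conceptually standard, is the other place where a clean argument must be pinned down rather than hand-waved.
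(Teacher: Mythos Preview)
Your overall architecture matches the paper's (reflection formula to convert to $\Gamma(z+1/4)/\Gamma(z+3/4)$, tangent subtraction formula, Stirling-type expansion for the Gamma ratio, then arctan), but there is a genuine gap at the heart of Step 3. You write $G(z) = 1 + \sum_{j\ge 1} c_j z^{-j}$ and speculate that the $z^{-1}$ term of $r(z)$ vanishes because ``$c_1$ combines with the leading trig factor to cancel.'' This is not the mechanism. The actual reason is that \emph{all odd coefficients $c_j$ vanish}: setting $F(z) = z^{-1/2}\Gamma(z+3/4)/\Gamma(z+1/4) = 1/G(z)$, the coefficients in its expansion are $C_n = \binom{1/2}{n} B_n^{(3/2)}(3/4)$, and the symmetry $B_n^{\alpha}(\alpha-x) = (-1)^n B_n^{\alpha}(x)$ forces $B_n^{(3/2)}(3/4)=0$ for odd $n$. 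Thus $F(z)-1$ (and $G(z)-1$) is an asymptotic series in $z^{-2}$, starting at order $z^{-2}$. This single fact drives \emph{both} conclusions you flagged as delicate: the expansion of $r$ begins at $j=2$ because $G-1 = O(z^{-2})$, and the degree bound $\deg P_k \le k/2$ holds because in the geometric-series expansion the $(j{+}1)$st power of $(F-1)/2$ contributes only from order $z^{-2(j+1)}$ while multiplying a trig polynomial (in $\cos 2\pi z$, $\sin 2\pi z$) of degree $j+1$; hence the $z^{-k}$ coefficient has trig degree at most $k/2$. Without the even-powers observation your degree count gives only $\deg P_k \le k$, which is too weak, and the $j=1$ vanishing has no explanation.

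Two secondary points. First, working with $\tan(\pi z - \pi/4)$ directly is awkward for the degree count since it is not a trigonometric polynomial; the paper instead rewrites the tangent-subtraction quotient with numerator $-\tfrac{1}{2}(F-1)\cos(2\pi z)$ and denominator $1+\tfrac{1}{2}(F-1)(1-\sin 2\pi z)$, so that after the geometric series every term is manifestly a trig polynomial in $2\pi z$ with a clean degree. Second, your argument for $|r(z)|<\pi/2$ is incomplete: boundedness of $\tan r$ alone does not exclude $r$ sitting near a nonzero multiple of $\pi$. The paper's argument is to note that $\tilde\theta(z)$ and $\pi(z-1/4)$ agree exactly on the lattice $\tfrac14 + \tfrac12\mathbb{N}$ (since the two components in \eqref{eq:tildetheta} vanish there alternately), and both functions are strictly increasing by $\pi/2$ on each lattice interval, forcing $|r|<\pi/2$ everywhere. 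Your Pr\"ufer-angle idea for monotonicity is a reasonable alternative to the paper's explicit digamma computation, but as you note it needs to be made precise.
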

The proof of Lemma \ref{lem:tildethetaasymp} is given in Appendix \ref{sec:lemma-proofs}.
In fact, the expansion up to $N=5$ for $r(z)$ is given by
\[r(z) = -\frac{\cos(2\pi z)}{128z^2} + \frac{5\cos(2\pi z)}{4096z^4} - \frac{\sin(4\pi z)}{32768z^4} + O(z^{-6}).\]
Theorem \ref{thm:eigeneq} then becomes
\[\frac{\sqrt{2m}\ell\sqrt{E_{n;\h}}}{\h} + \frac{\pi E_{n;\h}}{2\h\omega_-} + \frac{\pi E_{n;\h}}{2\h\omega_+} - \frac{\pi}{2} + r\left(\frac{E_{n;\h}}{2\h\omega_-}\right) + r\left(\frac{E_{n;\h}}{2\h\omega_+}\right) = \pi n.\]
Recalling that $\tau_{\pm} = \pi/\omega_{\pm}$ and $S(E)= 2\sqrt{2m}\ell\sqrt{E} + (\tau_++\tau_-)E$,
we can rewrite the result of Theorem \ref{thm:eigeneq} as the equation
\begin{equation}
\label{eq:s+r}
\frac{S(E_{n;\h})}{2\h} +  r\left(\frac{E_{n;\h}}{2\h\omega_-}\right) + r\left(\frac{E_{n;\h}}{2\h\omega_+}\right) = \pi\left(n+\frac{1}{2}\right).
\end{equation}
Note that in the regime $E/\h\to\infty$ (i.e.\ either if $\h$ is fixed and $n\to\infty$, or $E$ is in a fixed positive window but $\h\to 0$), we have that the $r$ terms above go to zero, leaving an approximate equation
\[\frac{S(E_{n;\h})}{2\h}\approx \pi\left(n+\frac{1}{2}\right),\quad i.e.\ S(E_{n;\h})\approx 2\pi \h\left(n+\frac{1}{2}\right).\]
This suggests the Bohr-Sommerfeld approximation $E_{n;\h}\approx\E_{n;\h}$, where $\E_{n;\h}$ defined in \eqref{eq:mathcale} satisfies $S(\E_{n;\h}) = 2\pi \h(n+1/2)$.

We now use Equation \eqref{eq:s+r} to derive the eigenvalue asymptotics, in particular giving an estimate to how close $E_{n;\h}$ is to $\E_{n;\h}$.
\begin{proof}[Proof of Theorem \ref{thm:e-asymp}]
We first prove the property $\E_{n-1;\h}<E_{n;\h}<\E_{n+1;\h}$; this gives an upper bound on how far $E_{n;\h}$ can deviate from $\E_{n;\h}$. Note that
\[\frac{S(\E_{n-1;\h})}{2\h} + r\left(\frac{\E_{n-1;\h}}{2\h\omega_-}\right) + r\left(\frac{\E_{n-1;\h}}{2\h\omega_+}\right) < \pi(n-1/2) + \pi/2 + \pi/2 = \pi(n+1/2).\]
The function $\frac{S(E)}{2\h} + r\left(\frac{E}{2\h\omega_-}\right) + r\left(\frac{E}{2\h\omega_+}\right)$ is monotonic in $E$, since it is equal to $\frac{\sqrt{2m}\ell\sqrt{E}}{\h} + \tilde\theta\left(\frac{E}{2\h\omega_-}\right) + \tilde\theta\left(\frac{E}{2\h\omega_+}\right)$, whose monotonicity follows from Lemma \ref{lem:tildethetaasymp}. Thus, by monotonicity and Equation \eqref{eq:s+r}, we must have $\E_{n-1;\h}<E_{n;\h}$. A similar argument shows that $E_{n;\h}<\E_{n+1;\h}$.

Using that $\E_{n;\h} = S^{-1}(2\pi \h (n+1/2))$, and that $S^{-1}$ is smooth with bounded derivative\footnote{Indeed, we can explicitly solve
\[S^{-1}(w) = \frac{w}{\tau_++\tau_-} + 4m\ell^2 - 2\sqrt{\left(\frac{2m\ell^2}{(\tau_++\tau_-)^2}\right)^2+\frac{2m\ell^2w}{(\tau_++\tau_-)^3}};\]
note that the term inside the square root is either strictly positive near $w=0$ (if $\ell>0$) or identically zero (if $\ell=0$). Thus $S^{-1}$ is smooth near $0$; combining this with the observation that $\lim\limits_{w\to+\infty}(S^{-1})'(w) = \frac{1}{\tau_++\tau_-}$, we see that $(S^{-1})'$ is bounded.} on $(0,\infty)$ (in particular Lipschitz), we see that
\[\E_{n+1;\h}-\E_{n;\h} = S^{-1}(2\pi \h(n+3/2))-S^{-1}(2\pi \h(n+1/2)) \le 2\pi C\h,\]
where $C$ is the Lipschitz constant of $S^{-1}$. Thus $E_{n;\h}\in (\E_{n-1;\h},\E_{n+1;\h})$ implies
$\Delta E_{n;\h} = O(\h)$.

We now use \eqref{eq:s+r} as follows. By expanding $S$ as a Taylor series up to order $1$, we see that
\begin{equation}
\label{eq:s-taylor1}
S(E_{n;\h}) = S(\E_{n;\h}) + S'(\E_{n;\h})\Delta E_{n;\h} + \frac{S''(E^*)}{2}(\Delta E_{n;\h})^2
\end{equation}
for some $E^*$ between $\E_{n;\h}$ and $E_{n;\h}$. Since $S''(E) = O(E^{-3/2})$ and $E_{n;\h}/\E_{n;\h} = O(1)$ either as $n\to\infty$ or $\h\to 0$, we have that $S''(E^*) = O(\E_{n;\h}^{-3/2})$. Combined with $\Delta E_{n;\h} = O(\h)$, we thus have
\[\frac{S(E_{n;\h})}{2\h} = \frac{S(\E_{n;\h})}{2\h} + \frac{S'(\E_{n;\h})}{2}\frac{\Delta E_{n;\h}}{\h} + O(\h\E_{n;\h}^{-3/2}).\]
Moreover, $r(z) = O(z^{-2})\implies r\left(\frac{\E_{n;\h}}{2\h\omega_{\pm}}\right) = O(\h^2\E_{n;\h}^{-2})$. Since $S(\E_{n;\h}) = 2\pi \h(n+1/2)$, \eqref{eq:s+r} becomes
\begin{equation}
\label{eq:deltae-inter}
\frac{S'(\E_{n;\h})}{2}\frac{\Delta E_{n;\h}}{\h} + O(\h\E_{n;\h}^{-3/2}) + O(\h^2\E_{n;\h}^{-2}) = 0\implies \frac{\Delta E_{n;\h}}{\h} = O(\h\E_{n;\h}^{-3/2})
\end{equation}
since $S'(\E_{n;\h})$ is bounded from above and below in the region of interest.

We now expand the asymptotic series for $r$. Since
\[r(z) = -\frac{\cos(2\pi z)}{128 z^2} + O(z^4), \quad r'(z)= O(z^{-2}),\]
it follows that
\[r\left(\frac{E_{n;\h}}{2\h\omega_{\pm}}\right) = r\left(\frac{\E_{n;\h}}{2\h\omega_{\pm}}\right)  + r'\left(\frac{E^*}{2\h\omega_{\pm}}\right)\frac{\Delta E_{n;\h}}{2\h\omega_{\pm}}\]
for some $E^*$ between $E_{n;\h}$ and $\E_{n;\h}$. The first term equals
\[-\frac{\h^2\omega_{\pm}^2\cos\left(\frac{\pi\E_{n;\h}}{\omega_{\pm}\h}\right)}{32\E_{n;\h}^2} + O(\h^4\E_{n;\h}^{-4}),\]
while the second term is the product of the derivative term which is $O(\h^2\E_{n;\h}^{-2})$ (again $E^*$ equals $\E_{n;\h}$ up to a bounded multiplicative factor) and the $\frac{\Delta E_{n;\h}}{\h}$ term, which is $O(\h\E_{n;\h}^{-3/2})$ as just shown. This gives
\[r\left(\frac{E_{n;\h}}{2\h\omega_{\pm}}\right) = -\frac{\h^2\omega_{\pm}^2\cos\left(\frac{\pi\E_{n;\h}}{\omega_{\pm}\h}\right)}{32\E_{n;\h}^2} + O(\h^3\E_{n;\h}^{-7/2}).\]
Furthermore, we can rewrite \eqref{eq:s-taylor1} as
\[\frac{S(E_{n;\h})}{2\h} = \frac{S(\E_{n;\h})}{2\h} + \frac{S'(\E_{n;\h})}{2}\frac{\Delta E_{n;\h}}{\h} + \frac{\h S''(E^*)}{4}\left(\frac{\Delta E_{n;\h}}{\h}\right)^2;\]
the last term is now $O(\h^3\E_{n;\h}^{-9/2})$ using the improved estimate \eqref{eq:deltae-inter}. It follows that
\[\frac{S'(\E_{n;\h})}{2}\frac{\Delta E_{n;\h}}{\h} + O(\h^3\E_{n;\h}^{-9/2})  -\frac{\h^2\omega_-^2\cos\left(\frac{\pi\E_{n;\h}}{\omega_-\h}\right)}{32\E_{n;\h}^2} -\frac{\h^2\omega_+^2\cos\left(\frac{\pi\E_{n;\h}}{\omega_{+}\h}\right)}{32\E_{n;\h}^2} + O(\h^3\E_{n;\h}^{-7/2})=0.\]
Recalling $T(E) = S'(E)$, this thus gives
\[\frac{\Delta E_{n;\h}}{\h} = \frac{\h^2}{16\E_{n;\h}^2T(\E_{n;\h})}\left(\omega_-^2\cos\left(\frac{\tau_-\E_{n;\h}}{\h}\right) + \omega_+^2\cos\left(\frac{\tau_+\E_{n;\h}}{\h}\right)\right) + O(\h^3\E_{n;\h}^{-7/2}).\]
This gives the leading order term in the error $\frac{\Delta E_{n;\h}}{\h}$.

The remaining terms can be calculated similarly, by expanding the Taylor/asymptotic series for $S$ and $r$ appropriately. We prove the desired form inductively. 

To define our index set $\A$, we let it be the smallest subset of $\mathbb{N}\times \frac{1}{2}\mathbb{N}\times\mathbb{N}$, containing $\{(2k,2k,1)\}_{k=1,2,\dots}$, which is closed under the following two operations:
\begin{itemize}
\item If $(j_1,k_1,l_1)$, $(j_2,k_2,l_2)$, \dots, $(j_m,k_m,l_m)$ are $m$ (not necessarily distinct) elements of $\A$, then for all $n\ge 2$ we have
\[\left(\sum_{i=1}^m j_i + n, \sum_{i=1}^m k_i + n, \sum_{i=1}^m l_i + 1\right)\in\A.\]
\item If $(j_1,k_1,l_1)$, $(j_2,k_2,l_2)$, \dots, $(j_m,k_m,l_m)$ are $m$ (not necessarily distinct) elements of $\A$, with $m\ge 2$, then
\[\left(\sum_{i=1}^m j_i + (m-1), \sum_{i=1}^m k_i + (m-1/2), \sum_{i=1}^m l_i + 1\right)\in\A.\]
\end{itemize}
This set can be constructed by starting with $\A_0 = \{(2k,2k,1)\}_{k=1,2,\dots}$, and, for each $i$, let $\A_{i+1}$ consist of all tuples obtained by applying the above operations to finitely many elements (with repetition) of $\A_i$. One can verify\footnote{
For example, to check that the condition $k\le\frac{7j-2}{6}$ is preserved by the two operations, note that if
\[j^* = \sum_{i=1}^m j_i + n, k^* = \sum_{i=1}^m k_i+n,\quad n\ge 2,\]
and $k_i\le (7j_i-2)/6$, then
\[k^* \le \sum_{i=1}^m\left(\frac{7j_i-2}{6}\right) + n = \frac{7}{6}\left(\sum_{i=1}^m j_i + n\right) - \frac{2m+n}{6} \le \frac{7}{6}k^* - \frac{2}{6}\]
since $2m+n>n\ge 2$. On the other hand, if
\[j^* = \sum_{i=1}^m j_i + (m-1),\quad k^* = \sum_{i=1}^m k_i + (m-1/2),\quad m\ge 2,\]
then
\[k^* \le \sum_{i=1}^m\frac{7j_i-2}{6} + (m-1/2) = \frac{7}{6}\left(\sum_{i=1}^m j_i + (m-1)\right) - \frac{m}{3} - \frac{7(m-1)}{6} + m-\frac{1}{2} \le\frac{7}{6}j^* - \frac{2}{6},\]
the last line inequality following since
\[\frac{m}{3}+\frac{7(m-1)}{6}-m+\frac{1}{2} = \frac{3m-4}{6}\ge\frac{2}{6}\]
since $m\ge 2$.
} that the set
\[\left\{(j,k,l)\in\mathbb{N}\times\frac{1}{2}\mathbb{N}\times\mathbb{N}\,:\,j\ge 2, j\le k\le \frac{7j-2}{6}, l\le \frac{2j-1}{3}\right\}\]
is preserved by the above operations and contains $\{(2k,2k,1)\}_{k=1,2,\dots}$; hence, we have that $\A$ is contained in the above set; such a set also satisfies that its intersection with $\{(j,k,l)\,:\,j<N\text{ or }k<N\}$ is finite for any $N$.

We can then proceed inductively. The case $N=3$ is proven above. We now suppose a desired expansion exists up to order $N$. We now Taylor expand $S$ as
\begin{align*}
\frac{S(E_{n;\h})}{2\h} &= \frac{S(\E_{n;\h})}{2\h} + \frac{T(\E_{n;\h})}{2}\frac{\Delta E_{n;\h}}{\h} \\
&+ \sum_{j=2}^{N-1} \frac{\h^{j-1}S^{(j)}(\E_{n;\h})}{2j!}\left(\frac{\Delta E_{n;\h}}{\h}\right)^j + \frac{\h^{N-1}S^{(N)}(E^*)}{2N!}\left(\frac{\Delta E_{n;\h}}{\h}\right)^N.
\end{align*}
Using that $\frac{\Delta E_{n;\h}}{\h} = O(\h^2/\E_{n;\h}^2)$, the last term is $O(\h^{3N-1}\E_{n;\h}^{-(3N-1/2)})$ using similar arguments as above. By similar considerations,
\[r\left(\frac{E_{n;\h}}{2\h\omega_{\pm}}\right) = \sum_{j=0}^{N-1} \frac{r^{(j)}\left(\frac{\E_{n;\h}}{2\h\omega_{\pm}}\right)}{j!(2\omega_{\pm})^j}\left(\frac{\Delta E_{n;\h}}{\h}\right)^j + O\left(\h^{2(N+1)}\E_{n;\h}^{-2(N+1)}\right)\]
since $r^{(j)}(z) = O(z^{-2})$ for each $j$.

Using these Taylor expansions, \eqref{eq:s+r} becomes
\begin{equation}
\label{eq:delta-e-recursive}
\begin{aligned}\frac{\Delta E_{n;\h}}{\h} &= -\frac{2}{T(\E_{n;\h})}\left(\sum_{j=2}^{N-1} \frac{\h^{j-1}S^{(j)}(\E_{n;\h})}{2j!}\left(\frac{\Delta E_{n;\h}}{\h}\right)^j\right.\\
&\left. + \sum_{\pm}\sum_{j=0}^{N-1} \frac{r^{(j)}\left(\frac{\E_{n;\h}}{2\h\omega_\pm}\right)}{j!(2\omega_\pm)^j}\left(\frac{\Delta E_{n;\h}}{\h}\right)^j  + O(\h^{2(N+1)}\E_{n;\h}^{-2(N+1)})\right).
\end{aligned}
\end{equation}
We now take the inductive assumption
\begin{equation}
\label{eq:deltae-inductive}
\frac{\Delta E_{n;\h}}{\h} = \sum_{\substack{(j,k,l)\in\A\\ j<N\text{ or }k< N}} \frac{\h^j}{\E_{n;\h}^kT(\E_{n;\h})^l}P_{(j,k,l)}\left(\frac{\tau_-\E_{n;\h}}{\h},\frac{\tau_+\E_{n;\h}}{\h}\right) + r_N
\end{equation}
with $r_N = O(\h^N/\E_{n;\h}^N)$ and plug it into the above equation.

We look at terms of the form
\[\frac{\h^{j-1}S^{(j)}(\E_{n;\h})}{T(\E_{n;\h})}\left(\frac{\Delta E_{n;\h}}{\h}\right)^j,\quad j\ge 2\]
by expanding the power $\left(\frac{\Delta E_{n;\h}}{\h}\right)^j$ using the sum in \eqref{eq:deltae-inductive}. If any of the terms are multiplied by $r_N$, then the resulting term is $O(\h^N/\E_{n;\h}^N)$ times a product of at least one term which is $O(\h/\E_{n;\h})$; hence the overall product will be $O(\h^{N+1}/\E_{n;\h}^{N+1})$. We can thus ignore products of terms times $r_N$. The remaining terms will be products of the form
\[\frac{\h^{j-1}C_j}{\E_{n;\h}^{j-1/2}T(\E_{n;\h})}\prod_{i=1}^j \frac{\h^{j_i}}{\E_{n;\h}^{k_i}T(\E_{n;\h})^{l_i}}P_{j_i,k_i,l_i}\left(\frac{\tau_-\E_{n;\h}}{\h},\frac{\tau_+\E_{n;\h}}{\h}\right)\]
with $(j_i,k_i,l_i)\in\A$, $\deg P_{j_i,k_i,l_i}\le j_i/2$. This can be rewritten as
\[\frac{\h^{j^*}}{\E_{n;\h}^{k^*}T(\E_{n;\h})^{l^*}}\tilde{P}\left(\frac{\tau_-\E_{n;\h}}{\h},\frac{\tau_+\E_{n;\h}}{\h}\right)\]
where
\[j^* = \sum_{i=1}^j j_i + (j-1),\quad k^* = \sum_{i=1}^j k_i + (j-1/2),\quad l^* = \sum_{i=1}^j l_i + 1,\]
and $\tilde{P}=\prod_{i=1}^j P_{(j_i,k_i,l_i)}$; note that $\deg \tilde{P}\le\sum_{i=1}^j\deg P_{(j_i,k_i,l_i)}\le (\sum_{i=1}^j j_i)/2\le j^*/2$. By construction of $\A$, we have $(j^*,k^*,l^*)\in\A$; hence this term is of the form claimed.

Similarly, we analyze terms of the form
\[\frac{r^{(j)}\left(\frac{\E_{n;\h}}{2\h\omega_\pm}\right)}{j!(2\omega_\pm)^j}\left(\frac{\Delta E_{n;\h}}{\h}\right)^j.\]
For each $j$, we have that $r^{(j)}(z)$ is a finite sum of terms of the form $\tilde{P}_k(2\pi z)/z^k$, with $k\ge 2$ and $\deg\tilde{P}_k\le k/2$. Thus expanding the power gives terms of the form
\[\frac{\h^k\tilde{P}_k\left(\frac{\pi\E_{n;\h}}{\omega_\pm \h}\right)}{\E_{n;\h}^kT(\E_{n;\h})}\prod_{i=1}^j \frac{\h^{j_i}}{\E_{n;\h}^{k_i}T(\E_{n;\h})^{l_i}}P_{(j_i,k_i,l_i)}\left(\frac{\tau_-\E_{n;\h}}{\h},\frac{\tau_+\E_{n;\h}}{\h}\right).\]
This can be written as $\frac{\h^{j^*}}{\E_{n;\h}^{k^*}T(\E_{n;\h})^{l^*}}\tilde{P}$ with $j^* = \sum_{i=1}^j j_i + k$, $k^* = \sum_{i=1}^j k_i + k$, $l^* = \sum_{i=1}^l l_i + 1$, $\tilde{P} =\tilde{P}_k\prod_{i=1}^j P_{(j_i,k_i,l_i)}$, in which case $\deg \tilde{P} \le \deg\tilde{P}_k + \sum_{i=1}^j\deg P_{(j_i,k_i,l_i)} \le k/2 + \sum_{i=1}^j j_i/2 = j^*/2$. Once again, $(j^*,k^*,l^*)\in\A$ by construction.

Putting these terms together, we see that $\frac{\Delta E_{n;\h}}{\h}$ can be written as a sum of the terms in the form \eqref{eq:deltae-inductive}, but with $N$ replaced with $N+1$, up to $O(\h^{N+1}/\E_{n;\h}^{N+1})$, as desired.
\end{proof}
The first few coefficients which appear in the asymptotic expansion can be computed using the method above, and this is done in Appendix \ref{sec:coeff}.
\begin{remark}
Consider the case $\ell=0$. Then $S(E) = (\tau_++\tau_-)E$, $T(E) =\tau_++\tau_-$, and
\[\E_{n;\h} = S^{-1}(2\pi \h(n+1/2)) = \frac{2\pi \h(n+1/2)}{\tau_++\tau_-} = \overline\omega \h(n+1/2)\]
where $\overline\omega = 2/(\omega_+^{-1}+\omega_-^{-1})$ is the harmonic mean of the frequencies. Note then that
\[(\tau_++\tau_-)\E_{n;\h}/\h = 2\pi n + \pi\]
and hence
\[\cos(k\tau_-\E_{n;\h}/\h) = (-1)^k\cos(k\tau_+\E_{n;\h}/\h),\quad \sin(k\tau_-\E_{n;\h}/\h) = (-1)^{k+1}\sin(k\tau_+\E_{n;\h}/\h).\]
It follows that we have the simplifications
\begin{align*}
P_{(2,2,1)} &= \frac{\omega_+^2-\omega_-^2}{16}\cos(\tau_+\E_{n;\h}/\h)\\
P_{(4,4,1)} &= (\omega_+^2-\omega_-^2)\left(-\frac{5}{128}\cos(\tau_+\E_{n;\h}/\h) + \frac{1}{1024}\sin(2\tau_+\E_{n;\h}/\h)\right)\\
P_{(4,4,2)} &= \frac{\pi(\omega_++\omega_-)(\omega_+^2-\omega_-^2)}{512}\sin(2\tau_+\E_{n;\h}/\h)\\
P_{(5,5,2)} &= \frac{(\omega_+^2-\omega_-^2)^2}{512}(1+\cos(\tau_+\E_{n;\h}/\h))\\
P_{(5,11/2,3)} &= 0.
\end{align*}
Note that for $P_{(4,4,2)}$ and $P_{(5,5,2)}$ we used the double angle identities $\sin(x)\cos(x) = \frac{\sin(2x)}{2}$ and $\cos^2(x) = \frac{1+\cos(2x)}{2}$.
\end{remark}

\section{Application to the Gutzwiller Trace Formula}
\label{sec:gutz}

We now apply the asymptotics to the Gutzwiller Trace Formula. This trace formula ostensibly concerns the eigenvalue counting function
\[\sum_n \chi(E_n)\rho\left(\frac{E-E_n}{\h}\right),\]
where $\chi, \hat\rho\in C_c^\infty(\mathbb{R})$, and the sum is taken over eigenvalues of the semiclassical Schr\"odinger operator
\[P = -\h^2\Delta_g + V(x),\]
defined on some manifold $M$. Here, for $p(x,\xi) = |\xi|^2_g+V(x)$, we assume that $p^{-1}(\text{supp }\chi)$ is compact, in which case it turns out\cite{m-92} that the portion of the spectrum of $P$ in $\text{supp }\chi$ is discrete if $\h$ is small enough, allowing us to make sense of the sum as a finite sum.

This eigenvalue-counting function can be phrased in terms of the trace of a regularized Schr\"odinger propagator: indeed, by the Fourier inversion formula, we have
\begin{equation}
\label{eq:gutz-fourier}
\begin{aligned}
\sum_{n} \chi(E_n)\rho\left(\frac{E-E_n}{\h}\right) &= \sum_n \frac{\chi(E_n)}{2\pi}\int_{\mathbb{R}} e^{iEt/\h}e^{-iE_nt/\h}\hat\rho(t)\,dt \\
&=\frac{1}{2\pi}\int_{\mathbb{R}} e^{iEt/\h}\hat\rho(t)\,\text{tr}(\chi(P_\h)e^{-itP_\h/\h})\,dt.
\end{aligned}
\end{equation}

The eigenvalue-counting function turns out to relate to the classical dynamics of the classical symbol $p(x,\xi)$. Indeed, when $V(x)$ is a $C^\infty$ potential, then, under nondegenerate assumptions on the variation of periodic orbits with respect to energy, the Gutzwiller Trace Formula (see \cite{m-92}, \cite{crr-99}, etc.) reads
\begin{align*}
\sum_{n} \chi(E_n)\rho\left(\frac{E-E_n}{\h}\right) = \frac{\chi(E)}{2\pi}&\left[\h^{1-d}\left(\text{Vol}(\Sigma_E)\hat\rho(0)+O(\h)\right)\right. \\
&\left.+\sum_{\gamma}{\frac{e^{i(S_\gamma/\h+\sigma_{\gamma}\pi/2)}T_{\gamma}^\sharp}{|\det(I-\mathcal{P}_\gamma)|^{1/2}}\hat\rho(T_{\gamma})} + O(\h)\right],
\end{align*}
where the sum is taken over all closed periodic trajectories $\gamma$ of the Hamilton flow of $p(x,\xi)$
on the surface $\Sigma_E\{|\xi|^2_g+V = E\}$, $d$ the dimension of $M$, $T_{\gamma}$ is the length of $\gamma$, $T_{\gamma}^\sharp$ the primitive length of $\gamma$, $S_\gamma$ the classical action on $\gamma$, $\sigma_{\gamma}$ the Maslov index of $\gamma$, and $\mathcal{P}_\gamma$ the Poincar\'e map
restricted to the orthocomplement of the eigenspace of eigenvalue $1$.
The $O(\h)$ terms are in fact an asymptotic expansion, with coefficients depending on dynamical data, as well as derivatives of $\chi$ at the energy $E$ and $\hat\rho$ at the times $0$ or $T_\gamma$. In the case $d=1$, both leading order terms are of order $\h^0=1$, and if the energy surface $\Sigma_E$ consists of a single closed trajectory, then $\text{Vol}(\Sigma_E)$ equals the period of that trajectory.

There is ongoing work studying the behavior of the trace formula in the case when the potential is not $C^\infty$, but instead has a derivative discontinuity; in particular relating the trace to the dynamics of periodic trajectories which are allowed to \emph{reflect} at the site of singularity. The goal of this section is to give an example of this behavior, as the potentials of interest in this paper are $C^1$ but have a second derivative discontinuity.

\textbf{Notation}: Given $\psi,\chi\in C_c^\infty(\mathbb{R})$, we write
\begin{equation}
\label{eq:trcal}
\Trcal_{\psi,\chi;\h}(t) := \psi(t)\sum_{n=0}^\infty \chi(\E_{n;\h})e^{-it\E_{n;\h}/\h}.
\end{equation}

\subsection{Hamiltonian dynamics}
\label{subsec:ham}
We first study the Hamiltonian dynamics of
\[p(x,\xi) = \frac{\xi^2}{2m}+V(x).\]
That is, we look for integral curves $(x(t),\xi(t))$ satisfying
\[\dot x(t) = \frac{\xi(t)}{m},\quad \dot\xi(t) = -V'(x(t)) = \begin{cases} -m\omega_-^2x(t) & x(t)<0 \\ 0 & 0<x(t)<\ell \\ -m\omega_+^2(x(t)-\ell) & x(t)>\ell\end{cases}.\]
Since $V'(x)$ is continuous, given an initial $(x(0),\xi(0))$, there exists a unique $C^1$ Hamiltonian trajectory. This will be referred to as a \emph{transmitted} trajectory. Note that such a trajectory will in fact be $C^\infty$ away from where $x(t) = 0$ or $x(t) = \ell$; near those points, $x(t)$ is $C^{2,1}$ and $\xi(t)$ is $C^{1,1}$.

We will also consider trajectories which have \emph{reflection}: in that case, we require $x(t)$ to be continuous, $(x(t),\xi(t))$ to be $C^\infty$ away from where $x(t) = 0$ or $\ell$, and if $x(t_0) = 0$ or $\ell$, then $\xi(t)$ is piecewise $C^\infty$ to the left and right of $t_0$, with $\xi(t_0-) = -\xi(t_0+)$, where $\xi(t_0\pm)$ are the limits of $\xi$ as $t\to t_0^{\pm}$.

We now consider such trajectories. In the region where $x(t)<0$, the solution must be given by
\[x(t) = -\frac{\sqrt{2E/m}}{\omega_-}\sin\left(\omega_-(t-t_0)\right),\quad \xi(t) = -\sqrt{2Em}\cos\left(\omega_-(t-t_0)\right)\]
for some $t_0$, where $E$ is the (constant) value of $p(x,\xi) = \frac{\xi^2}{2m}+V(x)$ along the trajectory. Notice that such a trajectory stays in $\{x<0\}$ for $t\in(t_0,t_0 + \tau_-)$ where $\tau_- = \pi/\omega_-$, and $x(t)\to 0$ as $t$ approaches either endpoint. In particular, a trajectory stays for a time of $\tau_-$ in this region before exiting or reflecting (note that this time is independent of $E$). In addition, over this trajectory we have that the classical action is given by
\[
S = \int \xi\,dx = \int_0^{\tau_-} \left(-\sqrt{2Em}\cos(\omega_-t)\right)\left(-\sqrt{2E/m}\cos(\omega_-t)\right)\,dt = \tau_-E.
\]
Similarly, in the region $0<x(t)<\ell$, the solution must be given by
\[x(t) = \pm\sqrt{\frac{2E}{m}}(t-t_0),\quad \xi(t) = \pm\sqrt{2Em}\]
for some $t_0$ and some choice of sign $\pm$. In that case, the trajectory stays in $\{0<x<\ell\}$ for $t\in \left(t_0,t_0+\frac{\sqrt{m}\ell}{\sqrt{2E}}\right)$ (if $+$) or $t\in\left(t_0-\frac{\sqrt{m}\ell}{\sqrt{2E}},t_0\right)$ (if $-$), and $x(t)$ reaches the endpoint $0$ or $\ell$ as $t$ approaches either of its endpoints. In particular, a trajectory stays for a time of $\frac{\sqrt{m}\ell}{\sqrt{2E}}$ in this region before exiting or reflecting. The action is
\[S = \pm\sqrt{2Em}\cdot\pm\ell = \sqrt{2Em}\ell.\]
Finally, in the region $x(t)>\ell$, the solution is given by
\[x(t) = \ell + \frac{\sqrt{2E/m}}{\omega_+}\sin\left(\omega_-(t-t_0)\right),\quad \xi(t) = \sqrt{2Em}\cos\left(\omega(t-t_0)\right)\]
for some $t_0$. Similar to the first case, the trajectory stays for a time of $\tau_+ = \pi/\omega_+$ before exiting or reflecting, and the action is $\tau_+E$.

Note that a transmitted trajectory making one full revolution is given by one trajectory in the region $x<0$, one trajectory traveling forward in $0<x<\ell$, one trajectory in $x>\ell$, and one trajectory traveling backwards in $0<x<\ell$. Thus the period for a transmitted trajectory with one revolution is given by
\[T = \tau_- + \tau_+ + 2\frac{\sqrt{m}\ell}{\sqrt{2E}} = \frac{\sqrt{2m}\ell}{\sqrt{E}} + \tau_- + \tau_+;\]
note this is exactly the formula in \eqref{eq:t}. Similarly, the total action is
\[S = \tau_+E + \tau_-E + 2\sqrt{2Em}\ell,\]
which is the formula in \eqref{eq:s}.

\begin{figure}[h]
\centering
\begin{tikzpicture}[scale=0.7]
\begin{axis}[xmin=-3.5,xmax=6.5, ymin=-2.5, ymax = 2.5, axis lines = center, xlabel = {$x$}, ylabel = {$\xi$}, xtick = {-2,3,4}, xticklabels = { , , }, 
xtick style = {very thick}, extra x ticks = {0}, extra x tick labels = { }, ytick = {-2,2}, ytick style = {very thick}, yticklabels = { , }
]
\addplot[domain=-90:90]({-2*cos(x)},{2*sin(x)});
\addplot[domain=0:3]({x},{2});
\addplot[domain=-90:90]({3+cos(x)},{2*sin(x)});
\addplot[domain=0:3]({x},{-2});
\addplot[domain=-2.5:2.5,thick,dashed]({0},{x});
\addplot[domain=-2.5:2.5,thick,dashed]({3},{x});
\node at (axis cs:-2.8,-0.38) {$-\frac{\sqrt{2E}}{\omega_-}$};
\node at (axis cs:0.3,-0.4) {$0$};
\node at (axis cs:3.35,-0.38) {$\ell$};
\node at (axis cs:5,-0.4) {$\ell+\frac{\sqrt{2E}}{\omega_+}$};
\node at (axis cs:-1,-2.2) {$-\sqrt{2E}$};
\node at (axis cs:-0.8,2.2) {$\sqrt{2E}$};
\end{axis}
\end{tikzpicture}
\caption{The energy surface $\frac{\xi^2}{2}+V(x) = E$.}
\end{figure}
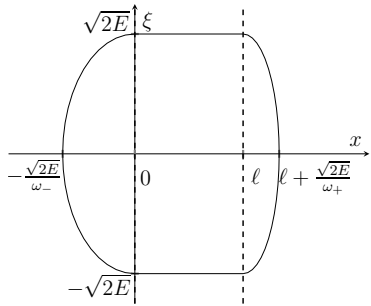
We now describe the periodic trajectories with exactly one reflection. Note that such trajectories must reflect at either $x=0$ or $x=\ell$; moreover, at the point of reflection, the trajectory must stay on one side of that point. We will thus classify these trajectories by the point and direction of the reflection, denoted $0^{\pm}$ and $\ell^{\pm}$ (with $+$ if the trajectory stays on the right and $-$ if it stays on the left).

For trajectories with reflection point and direction $0^-$, we note that such trajectories must complete a half-period in the quadratic end $x<0$, reflect, complete some number (possibly zero) of full revolutions, before starting on the half-period again. This means that the period of this periodic trajectory is $\tau_-$ (for the half-period on the left end), plus some multiple of full periods $T(E)$, i.e.\ the period is $\tau_-+kT(E)$ for some $k\in\mathbb{Z}$. Moreover, if we define the action of the trajectory as the sum of the actions on the piecewise smooth pieces, then by following the above descriptions we find that the action is given by $kS(E) + \tau_-E$.

For trajectories with reflection point and direction $0^+$, we note that such trajectories must traverse the flat portion from $x=\ell$ to $x=0$, reflect, complete some number (possibly zero) of full revolutions, traverse the flat portion from $x=0$ to $x=\ell$, and complete the half-period in the quadratic end $x>\ell$, before starting on the flat portion from $x=\ell$ to $x=0$ again. This means that the total period is
\[\frac{\sqrt{m}\ell}{\sqrt{2E}} + k^*T(E) + \frac{\sqrt{2m}\ell}{\sqrt{2E}} + \tau_+ =  kT(E)-\tau_-\]
for some integers $k^*$, $k$, where $k = k^*-1$. The total action is
\[\sqrt{2mE}\ell + k^*S(E) + \sqrt{2mE}\ell + \tau_+E = kS(E)-\tau_-E.\]
The cases of reflecting at $\ell^{\pm}$ can be analyzed similarly. The results are summarized in the table in Figure \ref{fig:reflect}. We note that in each case the period is of the form $T_{k,\alpha,\beta}(E)$ for some $\alpha,\beta\in\mathbb{Z}$ with $|\alpha|+|\beta|=1$, while the corresponding action is $S_{k,\alpha,\beta}(E)$.

\begin{figure}[h]
\label{fig:reflect}
\begin{tabular}{c|c|c}
Reflection point/direction & Period & Action \\\hline
$0^-$ & $kT(E)+\tau_-$ & $kS(E)+\tau_-E$ \\\hline
$0^+$ & $kT(E)-\tau_+$ & $kS(E)-\tau_+E$ \\\hline
$\ell^-$ & $kT(E)-\tau_-$ & $kS(E)-\tau_-E$ \\\hline
$\ell^+$ & $kT(E)+\tau_+$ & $kS(E)+\tau_+E$
\end{tabular}
\caption{The four types of periodic trajectories with one reflection. Note that if $\ell=0$, then the middle two options do not apply.}
\end{figure}

\subsection{Analysis of a related sum via Poisson Summation}

To study the regularized trace $\sum_n\chi(E_{n;\h})e^{-itE_{n;\h}/\h}$, we first study the corresponding sum replacing $E_{n;\h}$ by $\E_{n;\h}$, i.e.\ sums of the form
\[\sum{\chi(\E_{n;\h})e^{-it\E_{n;\h}/\h}},\]
with $\chi\in C_c^\infty(\mathbb{R}_+)$ and $\hat\rho\in C_c^\infty(\mathbb{R})$. We recall Poisson Summation Formula. If we set the Fourier transform conventions
\[\mathcal{F}f(\xi) = \int_{\mathbb{R}}{e^{-i\xi x}f(x)\,dx},
\quad\mathcal{F}^{-1}\varphi(x) = \frac{1}{2\pi}\int_{\mathbb{R}}{e^{ix\xi}\varphi(\xi)\,d\xi},\]
then the Poisson Summation Formula gives
\[\sum_{n\in\mathbb{Z}}\varphi(n) = 2\pi\sum_{k\in\mathbb{Z}}\mathcal{F}^{-1}\varphi(2\pi k)\]
(cf.\ \cite[Equation (7.2.1)']{h1} for the exact formula given these conventions). Applying this to $\varphi(\xi) = \chi(S^{-1}(2\pi \h(\xi+1/2)))e^{-itS^{-1}(2\pi \h(\xi+1/2))/\h}$, we have
\begin{align*}
2\pi\mathcal{F}^{-1}\varphi(2\pi k) &= \int_{\mathbb{R}}{e^{i(2\pi k)\xi}\chi(S^{-1}(2\pi \h(\xi+1/2)))e^{-itS^{-1}(2\pi \h(\xi+1/2))/\h}\,d\xi} \\
&=\int_{\mathbb{R}} e^{ikS(\eta)/\h - ik\pi}\chi(\eta)e^{-it\eta/\h}\,\frac{S'(\eta)}{2\pi\h}\,d\eta \\
&=\frac{(-1)^k}{2\pi \h}\int_{\mathbb{R}} e^{i(kS(\eta)-t\eta)/\h}\chi(\eta)S'(\eta)\,d\eta,
\end{align*}
where $\eta = S^{-1}(2\pi\h(\xi+1/2))$, i.e.\ $\xi = \frac{S(\eta)}{2\pi\h}-\frac{1}{2}$. Thus, combining the two equations, and recalling $\E_{n;\h} = S^{-1}(2\pi\h(n+1/2))$, we obtain
\begin{equation}
\label{eq:poisson-trace}
\sum_{n=0}^\infty \chi(\E_{n;\h})e^{-it\E_{n;\h}/\h} = \sum_{n\in\mathbb{Z}} \chi(\E_{n;\h})e^{-it\E_{n;\h}/\h} = \sum_{k\in\mathbb{Z}} \frac{(-1)^k}{2\pi h}\int_{\mathbb{R}} e^{i(kS(\eta)-t\eta)/\h}\chi(\eta)S'(\eta)\,d\eta.
\end{equation}
We note that the first equality follows from the support assumptions on $\chi$. Below, we will use the method of stationary phase (\cite[Theorem 7.7.5]{h1}), which gives:
\[\int u(x)e^{if(x)/\h}\,dx \sim \frac{(2\pi \h)^{n/2}e^{if(x_0)/\h}e^{i\frac{\pi}{4}\text{sgn }D^2f(x_0)}}{|\det D^2f(x_0)|^{1/2}}\sum_{j\ge 0}\h^jL_ju(x_0),\]
where $L_j$ is a differential operator of order $2j$, $L_0 = 1$.

We can now state our results. Recalling the classical action
\[S(E) = 2\sqrt{2m}\ell\sqrt{E} + (\tau_++\tau_-)E\]
and the period of one orbit,
\[T(E) = S'(E) = \frac{\sqrt{2m}\ell}{\sqrt{E}} + \tau_++\tau_-.\]
\begin{lemma}
\label{lem:poisson-mathcale}
Suppose $\chi,\hat\rho\in C_c^\infty(\mathbb{R})$. Then
\[\mathcal{F}^{-1}(\Trcal_{\hat\rho,\chi;\h})(E/\h) = \frac{1}{2\pi}\int_{\mathbb{R}} e^{iEt/\h}\hat\rho(t)\left(\sum \chi(\E_{n;\h})e^{-it\E_{n;\h}/\h} \right)\,dt \sim \sum_{j\ge 0}a_j\h^j,\]
where
\[a_0 = \sum_{k\in\mathbb{Z}}\frac{(-1)^ke^{ikS(E)/\h}}{2\pi}\chi(E)T(E)\hat\rho(kT(E)),\]
and $a_j$ is a sum of terms consisting of the product of derivatives of $\chi$ evaluated at $E$ and derivatives of $\hat\rho$ evaluated at $kT(E)$, with $k\in\mathbb{Z}$. In particular, if $\mathcal{T}_0(E)\cap\text{supp }\hat\rho = \emptyset$, then the integral is $O(\h^\infty)$.
\end{lemma}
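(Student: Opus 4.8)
The plan is to substitute the Poisson‑summation identity \eqref{eq:poisson-trace} into the definition of $\mathcal{F}^{-1}(\Trcal_{\hat\rho,\chi;\h})(E/\h)$ and to recognize the result as an infinite sum of two‑dimensional oscillatory integrals. Inserting \eqref{eq:poisson-trace}, combining the exponentials via $e^{iEt/\h}e^{i(kS(\eta)-t\eta)/\h}=e^{i(kS(\eta)+(E-\eta)t)/\h}$, and interchanging the $k$‑sum with the $t$‑integration (justified below), we obtain
\[
\mathcal{F}^{-1}(\Trcal_{\hat\rho,\chi;\h})(E/\h)=\sum_{k\in\mathbb{Z}}\frac{(-1)^k}{(2\pi)^2\h}\iint_{\mathbb{R}^2}e^{i\Phi_k(\eta,t)/\h}\,\hat\rho(t)\,\chi(\eta)\,S'(\eta)\,d\eta\,dt,\qquad \Phi_k(\eta,t)=kS(\eta)+(E-\eta)t .
\]
Write $I_k$ for the $k$‑th integral. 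Because $\chi\in C_c^\infty(\mathbb{R}_+)$, the amplitude $\hat\rho(t)\chi(\eta)S'(\eta)$ is smooth and compactly supported, and $\Phi_k$ is smooth on a neighborhood of its support (here one uses that $S(E)=2\sqrt{2m}\ell\sqrt E+(\tau_-+\tau_+)E$ is smooth on $\mathbb{R}_+\supset\operatorname{supp}\chi$).

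For each fixed $k$ I would apply the stationary phase theorem quoted above in the variables $(\eta,t)$. The equation $\nabla\Phi_k=0$ reads $kS'(\eta)=t$ and $\eta=E$, so the unique critical point is $(\eta,t)=(E,\,kS'(E))=(E,\,kT(E))$, with Hessian $\begin{pmatrix}kS''(E)&-1\\-1&0\end{pmatrix}$, which has determinant $-1$ and signature $0$; in particular it is nondegenerate with trivial Maslov factor, and the critical value is $\Phi_k(E,kT(E))=kS(E)=S_{k,0,0}(E)$. If $E\in\operatorname{supp}\chi$ and $kT(E)\in\operatorname{supp}\hat\rho$ (call such $k$ \emph{resonant}; since $\operatorname{supp}\hat\rho$ is compact and $T(E)>0$, only finitely many $k$ are resonant), stationary phase gives
\[
I_k\sim\frac{(-1)^k e^{ikS(E)/\h}}{2\pi}\sum_{j\ge 0}\h^j\,L_j\!\bigl[\hat\rho(t)\chi(\eta)S'(\eta)\bigr]\Big|_{(\eta,t)=(E,kT(E))},\qquad L_0=\mathrm{id},
\]
while for every non‑resonant $k$ the critical point lies outside the amplitude support, and the non‑stationary phase estimate (integrating by parts in $t$, using $\partial_t\Phi_k=E-\eta\ne0$ off $\eta=E$, and in $\eta$ near $\eta=E$) gives $I_k=O(\h^\infty)$. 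Since $L_j$ has order $2j$, evaluating $L_j[\hat\rho(t)\chi(\eta)S'(\eta)]$ at $(E,kT(E))$ is a finite linear combination of products of derivatives of $\chi$ at $E$, derivatives of $S'$ at $E$, and derivatives of $\hat\rho$ at $kT(E)$; summing the $j=0$ term over $k$ yields exactly $a_0=\sum_k\frac{(-1)^k e^{ikS(E)/\h}}{2\pi}\chi(E)T(E)\hat\rho(kT(E))$, and the higher $a_j$ have the asserted structure.

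The genuinely technical step — and the one I expect to require the most care — is the uniformity in $k$, needed both to justify the Fubini interchange above and to sum the asymptotic expansions term by term. On $\operatorname{supp}\hat\rho\times\operatorname{supp}\chi$ one has $\partial_\eta\Phi_k(\eta,t)=kT(\eta)-t$, and since $T$ is bounded below on the compact set $\operatorname{supp}\chi\subset\mathbb{R}_+$ while $t$ ranges over the compact set $\operatorname{supp}\hat\rho$, there is $k_0$ with $|\partial_\eta\Phi_k|\ge c|k|$ throughout the amplitude support whenever $|k|\ge k_0$; repeated integration by parts in $\eta$ then gives $|I_k|\le C_M\,\h^{M-1}|k|^{-M}$ for every $M$, uniformly for small $\h$. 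This makes the $k$‑series absolutely convergent and, after the finitely many resonant terms are split off, $O(\h^\infty)$, so the manipulations are legitimate and the expansions add up to $\mathcal{F}^{-1}(\Trcal_{\hat\rho,\chi;\h})(E/\h)\sim\sum_{j\ge0}a_j\h^j$ with each $a_j$ a finite sum of the claimed type. Finally, the last assertion is immediate: if $\mathcal{T}_0(E)\cap\operatorname{supp}\hat\rho=\emptyset$, i.e. $kT(E)\notin\operatorname{supp}\hat\rho$ for all $k\in\mathbb{Z}$, then no $k$ is resonant, every $I_k$ is $O(\h^\infty)$ with the uniformity just described, and hence the whole sum is $O(\h^\infty)$.
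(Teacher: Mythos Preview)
Your argument is correct and follows essentially the same route as the paper: substitute the Poisson summation identity \eqref{eq:poisson-trace}, obtain the two-variable oscillatory integrals with phase $kS(\eta)+(E-\eta)t$, and apply stationary phase at the unique nondegenerate critical point $(\eta,t)=(E,kT(E))$. Your treatment is in fact more thorough than the paper's, which simply plugs the stationary-phase expansion into each term of the $k$-sum without addressing convergence; your resonant/non-resonant dichotomy and the uniform-in-$k$ bound $|I_k|\le C_M\h^{M-1}|k|^{-M}$ for large $|k|$ (from integration by parts in $\eta$ using $|\partial_\eta\Phi_k|\ge c|k|$) are exactly the technical points needed to make the interchange and termwise summation rigorous.
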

\begin{proof}
By \eqref{eq:poisson-trace},
\begin{equation}
\label{eq:motley1}
\frac{1}{2\pi}\int_{\mathbb{R}} e^{iEt/\h}\hat\rho(t)\left(\sum_{n\in\mathbb{Z}} \chi(\E_{n;\h})e^{-it\E_{n;\h}/\h} \right)\,dt
=\sum_{k\in\mathbb{Z}}\frac{(-1)^k}{4\pi^2 \h} \int_{\mathbb{R}^2}e^{i(kS(\eta) + Et- t\eta)/\h}\chi(\eta)S'(\eta)\hat\rho(t)\,dt\,d\eta.
\end{equation}
For the phase function $\Phi(t,\eta) = kS(\eta)+Et-t\eta$, we have
\[\partial_t\Phi = E-\eta,\quad \partial_\eta\Phi = kS'(\eta)-t,\]
so the stationary point is at $\eta= E$ and $t = kS'(E) = kT(E)$. Furthermore,
\[D^2\Phi = \begin{pmatrix} 0 & -1 \\ -1 & kS''(\eta)\end{pmatrix},\]
so $\det D^2\Phi = 1$ and $\text{sgn }D^2\Phi = 0$.
Hence, for each integral in the sum, we have
\[\int_{\mathbb{R}^2}e^{i(kS(\eta) + Et- t\eta)/\h}\chi(\eta)S'(\eta)\hat\rho(t)\,dt\,d\eta \sim 2\pi \h e^{ikS(E)/\h}\sum_{j=0}^\infty \h^jL_j|_{(t,\eta) = (kT(E),E)}[\chi(\eta)S'(\eta)\hat\rho(t)].\]
Note that the first term in the asymptotic sum is
\[\chi(E)S'(E)\hat\rho(kT(E)) = \chi(E)T(E)\hat\rho(kT(E)),\]
while the remaining terms consist of product of derivatives of $\chi$ evaluated at $E$ and derivatives of $\hat\rho$ evaluated at $kT(E)$. Plugging this into each term in the sum in \eqref{eq:motley1} gives the desired result.
\end{proof}

\subsection{Application to the regularized trace}
We now Taylor expand. Noting that
\[\chi(E_{n;\h})=\sum_{j=0}^N\frac{\chi^{(j)}(\E_{n;\h})}{j!}(\Delta E_{n;\h})^j + O(|\Delta E_{n;\h}|^{N+1}),\]
with the last term $O(\h^{3(N+1)}\E_{n;\h}^{-2(N+1)})$, and
\begin{align*}
e^{-itE_{n;\h}/\h} &= \sum_{k=0}^N\frac{(-it)^k}{k!}\left(\frac{\Delta E_{n;\h}}{\h}\right)^k e^{-it\E_{n;\h}/\h} + O\left(\left|\frac{\Delta E_{n;\h}}{\h}\right|^{N+1}\right), 
\end{align*}
with the last term $O(\h^{2(N+1)}\E_{n;\h}^{-2(N+1)})$, it follows by multiplying the above equations that
\[\chi(E_{n;\h})e^{-it\E_{n;\h}/\h} = \sum_{j,k=0}^N  \frac{(-it)^k}{j!k!}\h^j\chi^{(j)}(\E_{n;\h})\left(\frac{\Delta E_{n;\h}}{\h}\right)^{j+k}e^{-it\E_{n;\h}/\h} + O(\h^{2(N+1)}\E_{n;\h}^{-2(N+1)}).\]
Summing the above equation, we note that the sum of the error term (over $n$ with $E_{n;\h}\in\text{supp }\chi$) can be crudely bounded by $O(\h^{2N+1})$ since there are $O(\h^{-1})$ terms. Hence,
\begin{equation}
\label{eq:trace-expand}
\begin{aligned}
\sum_n \chi(E_{n;\h}) e^{-it\E_{n;\h}/\h} 
&= \sum_{j,k=0}^N \frac{(-it)^k}{j!k!}\h^j\sum_{n=0}^\infty \chi^{(j)}(\E_{n;\h})\left(\frac{\Delta E_{n;\h}}{\h}\right)^{j+k}e^{-it\E_{n;\h}/\h} + O(\h^{2N+1}).
\end{aligned}
\end{equation}
Note that each power of $\frac{\Delta E_{n;\h}}{\h}$ can be written as a sum of trigonometric functions in $\E_{n;\h}/\h$ multiplied by smooth functions in $\E_{n;\h}$. We study the behavior of the sum of such terms below.

\begin{lemma}
\label{lem:tr-deltaem}
Let $m\in\mathbb{N}$ and $\chi\in C_c^\infty(\mathbb{R}_+)$. Then, there exists a family of $C_c^\infty(\mathbb{R}_+)$ functions $\{\chi_{j,\alpha,\beta}\}$, indexed over $j,\alpha,\beta\in\mathbb{Z}$, $j\ge 2m$, $|\alpha|+|\beta|\le j/2$, such that for any $N\in\mathbb{N}$, we have the asymptotic expansion
\[\sum_{n=0}^\infty \chi(\E_{n;\h})\left(\frac{\Delta E_{n;\h}}{\h}\right)^me^{-it\E_{n;\h}/\h} = \sum_{j=2m}^{N-1} \h^j\sum_{\substack{\alpha,\beta\in\mathbb{Z} \\ |\alpha|+|\beta|\le j/2}} \Trcal_{1,\chi_{j,\alpha,\beta};\h}(t-\alpha\tau_--\beta\tau_+) + O(\h^N)\]
(where $\Trcal$ is defined in \eqref{eq:trcal}). The $O(\h^N)$ error term is locally uniform in $t$. Moreover, the functions $\chi_{2m,\alpha,\beta}$ satisfy $\chi_{2m,\alpha,\beta}(E) = \chi(E)c_{2m,\alpha,\beta}(E)$, where $c_{2m,\alpha,\beta}$ satisfy satisfy
\[\sum_{|\alpha|+|\beta|\le m}c_{2m,\alpha,\beta}(E)e^{i\alpha x}e^{i\beta y} = \left(\frac{1}{E^2T(E)}P_{2,2,1}(x,y)\right)^m.\]
\end{lemma}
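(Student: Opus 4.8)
The plan is to raise the expansion of Theorem~\ref{thm:e-asymp} for $\Delta E_{n;\h}/\h$ to the $m$-th power, multiply out, and regroup the trigonometric factors into $\Trcal$'s at shifted times. Fix a large truncation order $M$ (to be chosen in terms of $N$ and $m$ at the end). On $\operatorname{supp}\chi$, where $\E_{n;\h}$ is bounded above and away from $0$, Theorem~\ref{thm:e-asymp} gives
\[
\frac{\Delta E_{n;\h}}{\h}=\sum_{\substack{(j,k,l)\in\A\\ j<M}}\frac{\h^{j}}{\E_{n;\h}^{k}\,T(\E_{n;\h})^{l}}\,P_{(j,k,l)}\bigl(\tfrac{\tau_-\E_{n;\h}}{\h},\tfrac{\tau_+\E_{n;\h}}{\h}\bigr)+R_{n;\h},\qquad R_{n;\h}=O(\h^{M}),
\]
where $R_{n;\h}$ absorbs both the $O(\h^{M}/\E^{M})$ error and the finitely many terms of $\A$ with $k<M\le j$. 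Expanding the $m$-th power by the multinomial theorem produces, besides terms carrying at least one factor $R_{n;\h}$, a finite sum of ``pure products'' $\prod_{i=1}^{m}\frac{\h^{j_i}}{\E^{k_i}T^{l_i}}P_{(j_i,k_i,l_i)}$. By the closure properties of $\A$ — exactly the bookkeeping already carried out in the proof of Theorem~\ref{thm:e-asymp} — each pure product equals $\frac{\h^{j^{*}}}{\E^{k^{*}}T^{l^{*}}}\widetilde P\bigl(\tfrac{\tau_-\E}{\h},\tfrac{\tau_+\E}{\h}\bigr)$ with $(j^{*},k^{*},l^{*})\in\A$, $j^{*}=\sum_i j_i\ge 2m$, and $\widetilde P=\prod_i P_{(j_i,k_i,l_i)}$ a two-variable trigonometric polynomial of degree at most $j^{*}/2$.

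The key observation is the identity
\[
e^{i\alpha\tau_-\E_{n;\h}/\h}\,e^{i\beta\tau_+\E_{n;\h}/\h}\,e^{-it\E_{n;\h}/\h}=e^{-i(t-\alpha\tau_--\beta\tau_+)\E_{n;\h}/\h}.
\]
Writing $\widetilde P(x,y)=\sum_{|\alpha|+|\beta|\le j^{*}/2}c_{\alpha,\beta}\,e^{i\alpha x}e^{i\beta y}$, multiplying the corresponding pure-product term by $\chi(\E_{n;\h})e^{-it\E_{n;\h}/\h}$ and summing over $n$ yields
\[
\h^{j^{*}}\sum_{|\alpha|+|\beta|\le j^{*}/2}\Trcal_{1,\psi_{\alpha,\beta};\h}(t-\alpha\tau_--\beta\tau_+),\qquad \psi_{\alpha,\beta}(E)=\chi(E)\,c_{\alpha,\beta}\,E^{-k^{*}}T(E)^{-l^{*}},
\]
and $\psi_{\alpha,\beta}\in C_c^\infty(\mathbb{R}_+)$ since $T(E)>0$ for $E>0$ and $\operatorname{supp}\psi_{\alpha,\beta}\subseteq\operatorname{supp}\chi$. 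Summing over all pure products with $j^{*}<N$ and collecting, for each $(j,\alpha,\beta)$, the amplitudes that share the exponent $j$ and the shift $(\alpha,\beta)$ defines the functions $\chi_{j,\alpha,\beta}\in C_c^\infty(\mathbb{R}_+)$, indexed by $j\ge 2m$ and $|\alpha|+|\beta|\le j/2$; these are independent of $M$ and $N$, since the contributions to a fixed exponent $j$ stabilize once $M>j$. This reproduces the displayed sum $\sum_{j=2m}^{N-1}\h^{j}\sum_{|\alpha|+|\beta|\le j/2}\Trcal_{1,\chi_{j,\alpha,\beta};\h}(t-\alpha\tau_--\beta\tau_+)$.

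It then remains to bound the leftover: the $R_{n;\h}$-containing products and the pure products of total order $\h^{j^{*}}$ with $j^{*}\ge N$. On $\operatorname{supp}\chi$ each such term is $O(\h^{N})$ pointwise in $n$ (the $R$-terms are $O(\h^{M})$, the undisplayed pure products are $O(\h^{j^{*}})$), uniformly in $t$ because $|e^{-it\E_{n;\h}/\h}|=1$; since there are only $O(\h^{-1})$ indices $n$ with $\E_{n;\h}\in\operatorname{supp}\chi$, summing gives $O(\h^{N})$ once $M$ is taken large enough to compensate the loss of one power of $\h$ from this count (and to capture every pure product of order $<N$ among the $j_i<M$ terms). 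Verifying this error estimate and its local uniformity in $t$ is the only step requiring real care, and it needs no idea beyond those in the proof of Theorem~\ref{thm:e-asymp}. Finally, for $j=2m$ the unique pure product of total order $\h^{2m}$ is obtained by choosing $(j_i,k_i,l_i)=(2,2,1)$ in every factor, so there $\widetilde P=P_{(2,2,1)}^{m}$ with prefactor $\h^{2m}\E^{-2m}T(\E)^{-m}$; reading off Fourier coefficients gives $\chi_{2m,\alpha,\beta}=\chi\cdot c_{2m,\alpha,\beta}$ with $\sum_{|\alpha|+|\beta|\le m}c_{2m,\alpha,\beta}(E)\,e^{i\alpha x}e^{i\beta y}=\bigl(E^{-2}T(E)^{-1}P_{(2,2,1)}(x,y)\bigr)^{m}$, as claimed.
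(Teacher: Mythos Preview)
Your proposal is correct and follows essentially the same route as the paper: raise the expansion from Theorem~\ref{thm:e-asymp} to the $m$th power, group by the resulting power of $\h$, write each trigonometric monomial as a time-shift via $e^{i\alpha\tau_-\E/\h}e^{i\beta\tau_+\E/\h}e^{-it\E/\h}=e^{-i(t-\alpha\tau_--\beta\tau_+)\E/\h}$, and handle the remainder by the $O(\h^{-1})$ crude count of indices with $\E_{n;\h}\in\operatorname{supp}\chi$. One small inaccuracy: the index $(j^*,k^*,l^*)=(\sum j_i,\sum k_i,\sum l_i)$ of a pure product need not lie in $\A$ (the closure rules defining $\A$ always add a positive amount to each coordinate), but you never actually use this membership---all you need, and all the paper uses, is $j^*\ge 2m$ together with $\deg\widetilde P\le\sum_i\deg P_{(j_i,k_i,l_i)}\le\sum_i j_i/2=j^*/2$.
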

\begin{proof}
By Theorem \ref{thm:e-asymp}, we can write
\[\frac{\Delta E_{n;\h}}{\h} = \sum_{j=2}^{N-1} \h^j P_j\left(\frac{\tau_-\E_{n;\h}}{\h},\frac{\tau_+\E_{n;\h}}{\h};\E_{n;\h}\right) + O(\h^N)\]
for any $N$, where
\[P_j(x,y;E) = \sum_{(k,l)\:\,(j,k,l)\in\mathcal{A}} \frac{1}{E^kT(E)^l}P_{(j,k,l)}(x,y)\]
is a two-variable trigonometric polynomial in $(x,y)$, whose coefficients are smooth functions of $E$ (in the region $E>\epsilon$), satisfying $\deg P_j\le j/2$. It follows that we can write
\[\left(\frac{\Delta E_{n;\h}}{\h}\right)^m = \sum_{j=2m}^{N-1} \h^j \tilde{P}_j\left(\frac{\tau_-\E_{n;\h}}{\h},\frac{\tau_+\E_{n;\h}}{\h};\E_{n;\h}\right) + O(\h^N)\]
for some $\tilde{P}_j(x,y;E)$ satisfying the same properties; note in particular that $\tilde{P}_{2m}(x,y;E) = \left(\frac{1}{E^2T(E)}P_{(2,2,1)}(x,y)\right)^m$. If we write
\[\tilde{P}_j(x,y;E) = \sum_{|\alpha|+|\beta|\le j} c_{j,\alpha,\beta}(E)e^{i\alpha x}e^{i\beta y},\]
then we have
\begin{equation}
\label{eq:tr-deltae-exp}
\begin{aligned}
&\sum_{n=0}^\infty \chi(\E_{n;\h})\left(\frac{\Delta E_{n;\h}}{\h}\right)^me^{-it\E_{n;\h}/\h} \\
&= \sum_{j=2m}^N\h^j\left(\sum_{n=0}^\infty \chi(\E_{n;\h})\tilde{P}_j\left(\frac{\tau_-\E_{n;\h}}{\h},\frac{\tau_+\E_{n;\h}}{\h};\E_{n;\h}\right)e^{-it\E_{n;\h}/\h}\right) \\
&+ \sum_{n=0}^\infty \chi(\E_{n;\h})e^{-it\E_{n;\h}}O(\h^{N+1});
\end{aligned}
\end{equation}
note that the last sum is $O(\h^N)$ by a crude counting argument as there are $O(\h^{-1})$ terms with $\E_{n;\h}\in\text{supp }\chi$. We note that
\begin{align*}
&\sum_{n=0}^\infty\chi(\E_{n;\h})\tilde{P}_j\left(\frac{\tau_-\E_{n;\h}}{\h},\frac{\tau_+\E_{n;\h}}{\h};\E_{n;\h}\right)e^{-it\E_{n;\h}/\h} \\
&= \sum_{|\alpha|+|\beta|\le j/2}\sum_{n=0}^\infty \chi(\E_{n;\h})c_{j,\alpha,\beta}(\E_{n;\h})e^{i\alpha\tau_-\E_{n;\h}/\h}e^{i\beta\tau_+\E_{n;\h}/\h}e^{-t\E_{n;\h}/\h} \\
&= \sum_{|\alpha|+|\beta|\le j/2}\sum_{n=0}^\infty \chi_{j,\alpha,\beta}(\E_{n;\h})e^{-i(t-\alpha\tau_--\beta\tau_+)/\h} = \sum_{|\alpha|+|\beta|\le j}\Trcal_{1,\chi_{j,\alpha,\beta};\h}(t-\alpha\tau_--\beta\tau_+),
\end{align*}
where $\chi_{j,\alpha,\beta} = \chi(E)c_{j,\alpha,\beta}(E)$. Plugging this into \eqref{eq:tr-deltae-exp} yields the desired result.
\end{proof}

\begin{lemma}
\label{lem:gutz-deltae}
For any $l\in\mathbb{N}$, we have
\[\frac{1}{2\pi}\int_{\mathbb{R}} e^{iEt/\h}(-it)^l\hat\rho(t)\left(\sum \chi(\E_{n;\h})\left(\frac{\Delta E_{n;\h}}{\h}\right)^me^{-it\E_{n;\h}/\h} \right)\,dt \sim \sum_{j\ge 2m}b_j\h^j,\]
where
\[b_{2m} =\frac{\chi(E)T(E)}{2\pi}\sum_{k\in\mathbb{Z}}i^{-(2k+l)}e^{iS_{k,\alpha,\beta}(E)/\h}c_{2m,\alpha,\beta}(E)T_{k,\alpha,\beta}(E)^l\hat\rho(T_{k,\alpha,\beta}(E)) 
,\]
with $c_{2m,\alpha,\beta}$ the functions in Lemma \ref{lem:tr-deltaem}, and $b_j$ consists of products of derivatives of $\chi$ evaluated at $E$ and derivatives of $\hat\rho$ evaluated at $T_{k,\alpha,\beta}(E)$, where $k,\alpha,\beta\in\mathbb{Z}$ with $|\alpha|+|\beta|\le j/2$.
\end{lemma}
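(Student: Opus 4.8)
The plan is to chain together the two preceding lemmas. The sum appearing inside the integral in the statement is exactly the left-hand side of Lemma~\ref{lem:tr-deltaem}, so I would substitute that lemma's expansion into the integrand. Since its $O(\h^N)$ remainder is locally uniform in $t$ while $e^{iEt/\h}(-it)^l\hat\rho(t)$ is absolutely integrable with an $\h$-independent bound, integrating term by term is legitimate and gives, for each $N$,
\[
\frac{1}{2\pi}\int_{\mathbb{R}} e^{iEt/\h}(-it)^l\hat\rho(t)\bigl(\textstyle\sum_n \chi(\E_{n;\h})(\Delta E_{n;\h}/\h)^m e^{-it\E_{n;\h}/\h}\bigr)\,dt = \textstyle\sum_{j=2m}^{N-1}\h^j\sum_{|\alpha|+|\beta|\le j/2} I_{j,\alpha,\beta} + O(\h^N),
\]
where $\chi_{j,\alpha,\beta}\in C_c^\infty(\mathbb{R}_+)$ and $c_{j,\alpha,\beta}$ are the functions produced by Lemma~\ref{lem:tr-deltaem} and
\[
I_{j,\alpha,\beta} := \frac{1}{2\pi}\int_{\mathbb{R}} e^{iEt/\h}(-it)^l\hat\rho(t)\,\Trcal_{1,\chi_{j,\alpha,\beta};\h}(t-\alpha\tau_--\beta\tau_+)\,dt.
\]

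Next I would identify each $I_{j,\alpha,\beta}$ as an instance of Lemma~\ref{lem:poisson-mathcale}. Substituting $u = t-\alpha\tau_--\beta\tau_+$ and recalling $\Trcal_{1,\chi_{j,\alpha,\beta};\h}(u)=\sum_n\chi_{j,\alpha,\beta}(\E_{n;\h})e^{-iu\E_{n;\h}/\h}$ pulls out a prefactor $e^{iE(\alpha\tau_-+\beta\tau_+)/\h}$ and turns $I_{j,\alpha,\beta}$ into $e^{iE(\alpha\tau_-+\beta\tau_+)/\h}$ times $\mathcal{F}^{-1}(\Trcal_{g_{j,\alpha,\beta},\chi_{j,\alpha,\beta};\h})(E/\h)$, where $g_{j,\alpha,\beta}(u):=\bigl(-i(u+\alpha\tau_-+\beta\tau_+)\bigr)^l\hat\rho(u+\alpha\tau_-+\beta\tau_+)$ is again in $C_c^\infty(\mathbb{R})$ (translation and multiplication by a polynomial preserve $C_c^\infty$). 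Lemma~\ref{lem:poisson-mathcale}, applied with $\chi_{j,\alpha,\beta}$ in the role of $\chi$ and $g_{j,\alpha,\beta}$ in the role of $\hat\rho$, then says that $I_{j,\alpha,\beta}$ has an asymptotic expansion in $\h$ whose leading coefficient is $e^{iE(\alpha\tau_-+\beta\tau_+)/\h}\sum_k\frac{(-1)^ke^{ikS(E)/\h}}{2\pi}\chi_{j,\alpha,\beta}(E)T(E)g_{j,\alpha,\beta}(kT(E))$, and whose higher coefficients are finite sums over $k\in\mathbb{Z}$ of products of derivatives of $\chi_{j,\alpha,\beta}$ at $E$ and of $g_{j,\alpha,\beta}$ at $kT(E)$, each still carrying the oscillatory factor $e^{iE(\alpha\tau_-+\beta\tau_+)/\h}e^{ikS(E)/\h}=e^{iS_{k,\alpha,\beta}(E)/\h}$, using $S_{k,\alpha,\beta}(E)=kS(E)+(\alpha\tau_-+\beta\tau_+)E$.

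It then remains to simplify the constants and reassemble. From $kT(E)+\alpha\tau_-+\beta\tau_+=T_{k,\alpha,\beta}(E)$ one gets $g_{j,\alpha,\beta}(kT(E))=(-iT_{k,\alpha,\beta}(E))^l\hat\rho(T_{k,\alpha,\beta}(E))$, and with $\chi_{j,\alpha,\beta}=\chi\,c_{j,\alpha,\beta}$ and $(-1)^k(-i)^l=i^{-(2k+l)}$ the leading coefficient of $I_{2m,\alpha,\beta}$ becomes $\frac{\chi(E)T(E)}{2\pi}\sum_k i^{-(2k+l)}e^{iS_{k,\alpha,\beta}(E)/\h}c_{2m,\alpha,\beta}(E)T_{k,\alpha,\beta}(E)^l\hat\rho(T_{k,\alpha,\beta}(E))$. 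Collecting contributions to a fixed power $\h^q$ in $\sum_j\h^j\sum_{\alpha,\beta}I_{j,\alpha,\beta}$: for $q=2m$ only $j=2m$ at leading order contributes, and summing over $|\alpha|+|\beta|\le m$ gives precisely the asserted formula for $b_{2m}$; for $q>2m$ the contributions come from pairs $j$ with $2m\le j\le q$ (so $|\alpha|+|\beta|\le j/2\le q/2$) together with the coefficient of $\h^{q-j}$ in $I_{j,\alpha,\beta}$, and expanding $\chi\,c_{j,\alpha,\beta}$ via Leibniz and the shift in $g_{j,\alpha,\beta}$ via the chain rule shows that $b_q$ is a finite sum, over such $k,\alpha,\beta$, of $e^{iS_{k,\alpha,\beta}(E)/\h}$ times products of derivatives of $\chi$ at $E$ and derivatives of $\hat\rho$ at $T_{k,\alpha,\beta}(E)$, as claimed. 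Letting $N\to\infty$ produces the full asymptotic series.

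Most of this is bookkeeping; the one step requiring genuine care is the legitimacy of merging two nested $\h$-expansions — the outer one from Lemma~\ref{lem:tr-deltaem} in powers $\h^j$, and, for each $j$, the inner one from Lemma~\ref{lem:poisson-mathcale} whose coefficients themselves oscillate in $\h$ through the $e^{iS_{k,\alpha,\beta}(E)/\h}$ factors — into a single asymptotic series. This I would handle in the standard way: to reach order $O(\h^N)$, expand each $I_{j,\alpha,\beta}$ to order $\h^{N-j}$, and use that only finitely many triples $(j,\alpha,\beta)$ are relevant (so all the inner remainders are uniformly $O(\h^N)$), while the oscillatory prefactors, having modulus one, do not change the order of any error term. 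One should also record that $g_{j,\alpha,\beta}$, being a translated polynomial multiple of $\hat\rho$, still satisfies the hypotheses of Lemma~\ref{lem:poisson-mathcale}.
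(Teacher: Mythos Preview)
Your proposal is correct and follows essentially the same approach as the paper: substitute the expansion from Lemma~\ref{lem:tr-deltaem}, shift variables to absorb the $\alpha\tau_-+\beta\tau_+$ translation (your $g_{j,\alpha,\beta}$ is the paper's $\psi_{\alpha,\beta}$), and apply Lemma~\ref{lem:poisson-mathcale} to each piece. If anything, you are more careful than the paper in spelling out why the termwise integration and the merging of the nested expansions are legitimate.
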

\begin{proof}
By Lemma \ref{lem:tr-deltaem}, we have
\begin{align*}
&(-it)^l\hat\rho(t)\sum \chi(\E_{n;\h})\left(\frac{\Delta E_{n;\h}}{\h}\right)^me^{-it\E_{n;\h}/\h} \\
&= \sum_{j=2m}^{N-1} \h^j(-it)^l\hat\rho(t)\sum_{\substack{\alpha,\beta\in\mathbb{Z} \\ |\alpha|+|\beta|\le j/2}} \Trcal_{1,\chi_{j,\alpha,\beta};\h}(t-\alpha\tau_--\beta\tau_+) + O(\h^N) \\
&= \sum_{j=2m}^{N-1} \h^j\sum_{\substack{\alpha,\beta\in\mathbb{Z} \\ |\alpha|+|\beta|\le j/2}} \Trcal_{\psi_{\alpha,\beta},\chi_{j,\alpha,\beta};\h}(t-\alpha\tau_--\beta\tau_+) + O(\h^N)
\end{align*}
where $\psi_{\alpha,\beta}(t) = (-i(t+\alpha\tau_-+\beta\tau_+))^l\hat\rho(t+\alpha\tau_-+\beta\tau_+)$. Thus
\begin{align*}
&\frac{1}{2\pi}\int_{\mathbb{R}} e^{iEt/\h}(-it)^l\hat\rho(t)\left(\sum \chi(\E_{n;\h})\left(\frac{\Delta E_{n;\h}}{\h}\right)^me^{-it\E_{n;\h}/\h} \right)\,dt \\
&=\frac{1}{2\pi}\int_{\mathbb{R}} e^{iEt/\h}\sum_{j=2m}^{N-1} \h^j\sum_{\substack{\alpha,\beta\in\mathbb{Z} \\ |\alpha|+|\beta|\le j/2}} \Trcal_{\psi_{\alpha,\beta},\chi_{j,\alpha,\beta};\h}(t-\alpha\tau_--\beta\tau_+) + O(\h^N)\,dt \\
&=\sum_{j=2m}^{N-1} \h^j\sum_{\substack{\alpha,\beta\in\mathbb{Z} \\ |\alpha|+|\beta|\le j/2}}\frac{1}{2\pi}\int_{\mathbb{R}} e^{iE(t+\alpha\tau_-+\beta\tau_+)/\h} \Trcal_{\psi_{\alpha,\beta},\chi_{j,\alpha,\beta};\h}(t)\,dt + O(\h^N) \\
&=\sum_{j=2m}^{N-1} \h^j\sum_{\substack{\alpha,\beta\in\mathbb{Z} \\ |\alpha|+|\beta|\le j/2}}e^{i(\alpha\tau_-+\beta\tau_+)E/\h}\mathcal{F}^{-1}(\Trcal_{\psi_{\alpha,\beta},\chi_{j,\alpha,\beta};\h})(E/\h) + O(\h^N).
\end{align*}
By Lemma \ref{lem:poisson-mathcale}, each term $\mathcal{F}^{-1}(\Trcal_{\psi_{\alpha,\beta},\chi_{j,\alpha,\beta};\h})(E/\h)$ can be written as an asymptotic series $\sum_{j\ge 0}a_j\h^j$, where the coefficients involve derivatives of $\chi_{j,\alpha,\beta}$ evaluated at $E$ times derivatives of $\psi_{\alpha,\beta}$ evaluated at $kT(E)$, $k\in\mathbb{Z}$. Since $\chi_{j,\alpha,\beta}(E) = \chi(E)c_{j,\alpha,\beta}(E)$, it follows that derivatives of $\chi_{j,\alpha,\beta}$ at $E$ can be written in terms of the derivatives of $\chi$ at $E$, while derivatives of $\psi_{\alpha,\beta}(t) = (-i(t+\alpha\tau_-+\beta\tau_+))^l\hat\rho(t+\alpha\tau_-+\beta\tau_+)$, at $t=kT(E)$, can be written in terms of derivatives of $\hat\psi$, but at the time $t = kT(E)+\alpha\tau_-+\beta\tau_+ = T_{k,\alpha,\beta}(E)$. This shows the claimed asymptotic expansion.

Moreover, the leading coefficient is
\begin{align*}
&\sum_{k\in\mathbb{Z}} \frac{(-1)^ke^{ikS(E)/\h}}{2\pi}\chi_{j,\alpha,\beta}(E)T(E)\psi_{\alpha,\beta}(kT(E)) \\
&= \frac{\chi(E)T(E)}{2\pi}\sum_{k\in\mathbb{Z}}i^{-(2k+l)}e^{ikS(E)/\h}c_{j,\alpha,\beta}(E)T_{k,\alpha,\beta}(E)^l\hat\rho(T_{k,\alpha,\beta}(E)).
\end{align*}
Hence, the leading order term (of order $\h^{2m}$) has the coefficient
\begin{align*}
&\sum_{|\alpha|+|\beta|\le m} e^{i(\alpha\tau_-+\beta\tau_+)E/\h}\frac{\chi(E)T(E)}{2\pi}\sum_{k\in\mathbb{Z}}i^{-(2k+l)}e^{ikS(E)/\h}c_{2m,\alpha,\beta}(E)T_{k,\alpha,\beta}(E)^l\hat\rho(T_{k,\alpha,\beta}(E))\\
&=\frac{\chi(E)T(E)}{2\pi}\sum_{k\in\mathbb{Z}}i^{-(2k+l)}e^{iS_{k,\alpha,\beta}(E)/\h}c_{2m,\alpha,\beta}(E)T_{k,\alpha,\beta}(E)^l\hat\rho(T_{k,\alpha,\beta}(E)),
\end{align*}
as claimed.
\end{proof}

\begin{proof}[Proof of Theorem \ref{thm:gutz-qual}]
By Equation \eqref{eq:trace-expand}, we have
\[\sum_{n=0}^\infty \chi(E_{n;\h})e^{-itE_{n;\h}/\h} = \sum_{m=0}^N\sum_{l=0}^m \h^{m-l}(-it)^l\tilde\chi_{m,l}(\E_{n;\h})\left(\frac{\Delta E_{n;\h}}{\h}\right)^me^{-it\E_{n;\h}/\h} + O(\h^{2N+1})\]
where $\tilde\chi_{m,l}(E) = \frac{\chi^{(m-l)}(E)}{l!(m-l)!}$. Plugging this expansion into \eqref{eq:gutz-fourier}, we see by Lemma \ref{lem:gutz-deltae} that the $(m,l)$ term contributes an asymptotic expansion of the form $\h^{m-l}\sum_{j\ge 2m}b_j\h^j$, where the $b_j$ coefficient is a sum of products which include $\hat\rho$ evaluated at $T_{k,\alpha,\beta}$, $k\in\mathbb{Z}$ and $|\alpha|+|\beta|\le j/2$, i.e.\ at times in $\mathcal{T}_{\lfloor j/2\rfloor}(E)$. In particular, the terms for which $j\ge 2(n+1)$ contribute $O(\h^{2(N+1)})$ terms, while the remaining terms involve $\hat\rho$ evaluated at times in $\mathcal{T}_N$. Hence, if $\text{supp }\hat\rho\cap\mathcal{T}_N(E)=\emptyset$, then the coefficients of all powers of $h$ lower than $\h^{2(N+1)}$ vanish, i.e.\ the overall sum is $O(\h^{2(N+1)})$.
\end{proof}
\begin{proof}[Proof of Theorem \ref{thm:gutz-quant}]
If $T^*\in\mathcal{T}_0(E)$, i.e.\ $T^* = kT(E)$ for $k\in\mathbb{Z}$, then the result follows from Lemma \ref{lem:poisson-mathcale}. For $T^*\in\mathcal{T}_1(E)\backslash\mathcal{T}_0(E)$, we note that we need the contribution of the term $\sum_{n=0}^\infty(-it)\chi(\E_{n;\h})\frac{\Delta E_{n;\h}}{\h}e^{-it\E_{n;\h}/\h}$. We use the expansion
\[\frac{\Delta E_{n;\h}}{\h} = \frac{\h^2}{\E_{n;\h}^2T(\E_{n;\h})}P_{(2,2,1)}\left(\frac{\tau_-\E_{n;\h}}{\h},\frac{\tau_+\E_{n;\h}}{\h}\right) + O(\h^4/\E_{n;\h}^4).\]
Noting that
\[P_{(2,2,1)}(x,y) = \frac{1}{32}\left(\omega_-^2(e^{ix}+e^{-ix})+\omega_+^2(e^{iy}+e^{-iy})\right),\]
we have that
\[\frac{1}{E^2T(E)}P_{(2,2,1)}(x,y) = \sum_{|\alpha|+|\beta|=1}c_{2,\alpha,\beta}(E)e^{i\alpha x}e^{i\beta y}\]
for
\[c_{2,\pm 1,0}(E) = \frac{\omega_-^2}{32E^2T(E)},\quad c_{2,0,\pm 1}(E) = \frac{\omega_+^2}{32E^2T(E)}.\]
By Lemma \ref{lem:gutz-deltae}, the sum is $b_2\h^2+O(\h^3)$, with
\begin{align*}
b_2 = \frac{\chi(E)T(E)}{2\pi}\sum_{k\in\mathbb{Z}}i^{-(2k+1)}e^{iS_{k,\alpha,\beta}(E)/\h}c_{2,\alpha,\beta}(E)T_{k,\alpha,\beta}(E)\hat\rho(T_{k,\alpha,\beta}(E)).
\end{align*}
Writing $T^* = kT(E) + \alpha\tau_- + \beta\tau_+$, we see that the corresponding coefficient $c_{2,\alpha,\beta}(E)$ is $\frac{\omega_{\pm}^2}{32E^2T(E)}$, with the $\pm$ being $-$ if $|\alpha|=1$ and $+$ if $|\beta|=1$. It follows that
\[b_2 = \frac{\chi(E)T(E)}{2\pi}i^{-(2k+1)}e^{iS_{k,\alpha,\beta}(E)/\h}\frac{\omega_{\pm}^2}{32E^2T(E)}T^*\hat\rho(T^*).\]
Thus the sum overall is
\[\h^2\frac{\chi(E)e^{iS_{k,\alpha,\beta}(E)/\h}\omega_{\pm}^2T^*}{64\pi E^2}\hat\rho(T^*) + O(\h^3),\]
as claimed.
\end{proof}
\begin{remark}
By the table in Figure \ref{fig:reflect}, we see that a reflected trajectory with period $T^* = T_{k,\alpha,\beta}(E)$ also has an action given by $S_{k,\alpha,\beta}(E)$, which appears in the coefficient above.
\end{remark}

\section{Heat Trace}
\label{sec:heat}
We now study the heat trace. For simplicity we will consider the case $\ell=0$, so the potential is an asymmetric harmonic oscillator with no flat region. In that case, the leading order asymptotic $\E_{n;\h}$ is affine in $n$:
\[\E_{n;\h} = \frac{2\pi \h(n+1/2)}{\tau_++\tau_-} = \overline\omega \h(n+1/2),\]
where $\overline\omega = \frac{2}{\omega_-^{-1}+\omega_+^{-1}} = \frac{2\pi}{\tau_++\tau_-}$ is the harmonic mean of $\omega_+$ and $\omega_-$. Moreover, $\E_{n;h}$ is linear in $\h$ as well, and in fact $E_{n;\h}$ is linear in $\h$ as well: indeed, the equation for the eigenvalue reduces to
\[\tilde\theta\left(\frac{z}{2\omega_-}\right) + \tilde\theta\left(\frac{z}{2\omega_+}\right) = \pi n,\quad z = E/\h.\]
Hence, we will take $\h=1$, write $E_n = E_{n;1}$ and $\E_n = \E_{n;1}$, and consider the asymptotics as $t\to 0$. To get $\h$-dependent asymptotics, one can replace $t$ by $\h t$ in all results below. 

We first note the computation
\begin{equation}
\label{eq:heat-mathcale}
\sum_{n=0}^\infty e^{-t\E_n} = \sum_{n=0}^\infty e^{-t\overline\omega(n+1/2)} = \frac{e^{-t\overline\omega/2}}{1-e^{-t\overline\omega}} = \frac{1}{2\sinh(\overline{\omega} t/2)},
\end{equation}
which in fact holds for any $t\in\mathbb{C}$ with $\text{Re}(t)>0$. Moreover, for $t>0$, we have the asymptotic expansion
\begin{equation}
\label{eq:sinhexp}
\frac{1}{2\sinh(\overline{\omega}t/2)} = 
\frac{1}{\overline\omega t} - \frac{1}{24}\overline\omega t + \frac{7}{5760}(\overline\omega t)^3 + \dots.
\end{equation}
We first observe the following asymptotics which follow from \eqref{eq:heat-mathcale}.
\begin{lemma}
\label{lem:heat-em}
Let $m\ge 1$ be an integer.
\begin{enumerate}
\item If $m\ge 2$ and $a_n$ is a sequence with $|a_n|\le C\E_{n}^{-m}$, then
\[\sum_{n=0}^\infty e^{-t\E_n}a_n\]
is $C^{m-2}$ in $t$ as $t\to 0^+$; in particular it admits an asymptotic expansion $c_0+c_1t+\dots+c_{m-2}t^{m-2} + o(t^{m-2})$.
\item If $a_n = \E_{n}^{-m}$, then we have
\[\sum_{n=0}^\infty e^{-t\E_n}\E_n^{-m} = \frac{(-1)^m}{(m-1)!\overline\omega }t^{m-1}\log(t) + \varphi_m(t)\]
for some smooth function $\varphi_m$.
\end{enumerate}
\end{lemma}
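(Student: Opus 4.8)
The plan is to treat the two parts separately; in both cases the key structural fact is that with $\ell=0$ we have $\E_n=\overline\omega(n+1/2)$ affine in $n$, so that applying $\partial_t$ to a summand $e^{-t\E_n}a_n$ merely pulls out a factor $-\E_n$, and polynomial-times-exponential sums are controlled by \eqref{eq:heat-mathcale}.

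\emph{Part (1).} First I would prove the slightly stronger statement that $f(t):=\sum_{n=0}^\infty e^{-t\E_n}a_n$ extends to a $C^{m-2}$ function on $[0,\infty)$; the asymptotic expansion $c_0+c_1t+\dots+c_{m-2}t^{m-2}+o(t^{m-2})$ then follows immediately from Taylor's theorem with Peano remainder at $t=0$, with $c_k=f^{(k)}(0)/k!$. To obtain the regularity, note $\partial_t^k\bigl(e^{-t\E_n}a_n\bigr)=(-\E_n)^k e^{-t\E_n}a_n$, so for $0\le k\le m-2$ and all $t\ge 0$ one has $\bigl|\partial_t^k(e^{-t\E_n}a_n)\bigr|\le C\,\E_n^{\,k-m}\le C\,\E_n^{-2}$. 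Since $\E_n=\overline\omega(n+1/2)$ the series $\sum_n \E_n^{-2}$ converges, so by the Weierstrass $M$-test each of the first $m-2$ formally differentiated series converges uniformly on $[0,\infty)$; the standard theorem on term-by-term differentiation of series then gives $f\in C^{m-2}([0,\infty))$.

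\emph{Part (2).} Set $f_m(t):=\sum_{n=0}^\infty e^{-t\E_n}\E_n^{-m}$. The same $M$-test argument (now on each interval $[\delta,\infty)$, using exponential decay) shows $f_m\in C^\infty((0,\infty))$ and, after differentiating $m$ times and invoking \eqref{eq:heat-mathcale},
\[f_m^{(m)}(t)=(-1)^m\sum_{n=0}^\infty e^{-t\E_n}=\frac{(-1)^m}{2\sinh(\overline\omega t/2)},\qquad t>0.\]
By \eqref{eq:sinhexp}, $\dfrac{1}{2\sinh(\overline\omega t/2)}-\dfrac{1}{\overline\omega t}$ extends to a smooth function on $[0,\infty)$. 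On the other hand a direct Leibniz-rule computation gives $\partial_t^m\bigl(t^{m-1}\log t\bigr)=(m-1)!\,t^{-1}$, so the function
\[\varphi_m(t):=f_m(t)-\frac{(-1)^m}{(m-1)!\,\overline\omega}\,t^{m-1}\log t\]
satisfies $\varphi_m^{(m)}(t)=\dfrac{(-1)^m}{2\sinh(\overline\omega t/2)}-\dfrac{(-1)^m}{\overline\omega t}$ for $t>0$, which extends smoothly to $[0,\infty)$. To upgrade this to the statement that $\varphi_m$ \emph{itself} extends smoothly across $t=0$, let $H$ be any $m$-fold antiderivative of the smooth extension of $\varphi_m^{(m)}$; then $H$ is smooth on $[0,\infty)$ and $(\varphi_m-H)^{(m)}\equiv 0$ on the connected interval $(0,\infty)$, so $\varphi_m-H$ is a polynomial of degree $<m$, whence $\varphi_m=H+(\text{polynomial})$ is smooth on $[0,\infty)$. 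Finally, writing $\log(\overline\omega t)=\log t+\log\overline\omega$ and absorbing the resulting monomial into $\varphi_m$ pins down the coefficient of $t^{m-1}\log t$ as exactly $\dfrac{(-1)^m}{(m-1)!\,\overline\omega}$, giving the asserted identity.

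\emph{Main obstacle.} Everything above is routine bookkeeping except the step asserting that subtracting the single term $t^{m-1}\log t$ removes \emph{all} of the non-smooth behavior of $f_m$ near $t=0$ — a priori one might fear additional terms $t^j\log t$ with $j<m-1$ or other lower-order singularities. This is precisely what the ``$(\varphi_m-H)^{(m)}\equiv0\Rightarrow$ polynomial'' argument rules out, and it works only because $f_m^{(m)}(t)=(-1)^m/(2\sinh(\overline\omega t/2))$ has, by \eqref{eq:sinhexp}, the single simple-pole singularity $(-1)^m/(\overline\omega t)$ at $t=0$ and is otherwise smooth; the identification $f_m^{(m)}=(-1)^m\sum_n e^{-t\E_n}$, which reduces the analysis to \eqref{eq:heat-mathcale}, is the crucial simplification.
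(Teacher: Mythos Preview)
Your proposal is correct and follows essentially the same route as the paper: for (1) you differentiate term-by-term and use $\sum \E_n^{-2}<\infty$, and for (2) you differentiate $m$ times to reduce to \eqref{eq:heat-mathcale}, then isolate the $1/(\overline\omega t)$ singularity and match it with $\partial_t^m(t^{m-1}\log t)=(m-1)!/t$. The paper phrases the last step as ``integrate $m$ times using $\int t^k\log t\,dt=\frac{t^{k+1}\log t}{k+1}-\frac{t^{k+1}}{(k+1)^2}+C$'', while you phrase it as ``$(\varphi_m-H)^{(m)}\equiv0\Rightarrow$ polynomial''; these are the same argument. Your final sentence about $\log(\overline\omega t)=\log t+\log\overline\omega$ is superfluous---the coefficient is already determined by your derivative identity---but harmless.
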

\begin{proof}
For (i), we observe that differentiating the sum $m-2$ times yields the function
\[\sum_{n=0}^\infty e^{-t\E_n} a_n(-\E_n)^{m-2},\]
which is continuous as $t\to 0^+$ since $\sum_{n=0}^\infty|a_n(-\E_n)^{m-2}| \le C\sum_{n=0}^\infty \E_n^{-2}<\infty$ as $\E_n$ is affine in $n$. For (ii), we differentiate the sum $m$ times to yield
\[\sum_{n=0}^\infty (-1)^me^{-t\E_n} = \frac{(-1)^m}{2\sinh(\overline\omega t/2)};\]
the latter equals $\frac{(-1)^m}{\overline\omega t}$ up to a function which is smooth as $t\to 0^+$. The result then follows from integrating $m$ times, noting the integration formula
\[\int t^k\log(t)\,dt = \frac{t^{k+1}\log(t)}{k+1} - \frac{t^{k+1}}{(k+1)^2} + C.\]
\end{proof}
To analyze the heat trace, particularly terms appearing from the error $\Delta E_n$ from the main term, we first note the following observation:
\begin{lemma}
\label{lem:heat-trig}
Suppose $\tau\in\mathbb{R}$ does not belong to $\frac{2\pi}{\overline\omega}\mathbb{Z} = (\tau_++\tau_-)\mathbb{Z}$. Then for any non-negative integer $m$ we have that the functions
\[\sum_{n=0}^\infty e^{-t\E_n}\cos\left(\tau\E_n\right)\E_n^{-m},\quad \sum_{n=0}^\infty e^{-t\E_n}\sin\left(\tau\E_n\right)\E_n^{-m}\]
are smooth as $t\to 0^+$.
\end{lemma}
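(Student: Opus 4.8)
The plan is to reduce everything to a single computation with the hyperbolic sine. Since $\cos(\tau\E_n)$ and $\sin(\tau\E_n)$ are $\mathbb{C}$-linear combinations of $e^{i\tau\E_n}$ and $e^{-i\tau\E_n}$, it suffices to prove that for each choice of sign the function
\[f_\pm(t) := \sum_{n=0}^\infty e^{-(t\mp i\tau)\E_n}\,\E_n^{-m}\]
is smooth as $t\to 0^+$. For $\operatorname{Re} t>0$ the factor $e^{-t\E_n}$ decays geometrically (recall $\E_n=\overline\omega(n+1/2)$), so this sum converges absolutely and defines a holomorphic function of $t$ on the right half-plane; the lemma then amounts to saying this function extends across $t=0$.

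First I would dispose of the case $m=0$. The geometric-series identity behind \eqref{eq:heat-mathcale} is valid for any complex argument of positive real part, so applying it with $t$ replaced by $t\mp i\tau$ gives $f_\pm(t)=\bigl(2\sinh((t\mp i\tau)\overline\omega/2)\bigr)^{-1}$ for $\operatorname{Re} t>0$. The right-hand side is meromorphic in $t$ with poles precisely at $t\in \pm i\tau+\tfrac{2\pi i}{\overline\omega}\mathbb{Z}$, all lying on the imaginary axis; the hypothesis $\tau\notin\tfrac{2\pi}{\overline\omega}\mathbb{Z}$ is exactly the statement that $t=0$ is not one of them, so $f_\pm$ is holomorphic near $0$ and in particular smooth as $t\to 0^+$.

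For $m\ge 1$ I would use the Gamma-function representation $\E_n^{-m}=\tfrac{1}{(m-1)!}\int_0^\infty s^{m-1}e^{-s\E_n}\,ds$. For $t>0$ all sums and integrals in sight are absolutely convergent, so Tonelli together with the $m=0$ computation gives
\[f_\pm(t)=\frac{1}{(m-1)!}\int_0^\infty \frac{s^{m-1}\,ds}{2\sinh\bigl((t+s\mp i\tau)\overline\omega/2\bigr)}.\]
The remaining task is to check that this integral defines a $C^\infty$ function of $t$ on $[0,\infty)$. The denominator never vanishes for $(t,s)\in[0,\infty)^2$: a zero would force $\operatorname{Re}(t+s\mp i\tau)=t+s=0$, hence $t=s=0$, and $\sinh(\mp i\tau\overline\omega/2)\neq 0$ again by the hypothesis on $\tau$. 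Thus the integrand is smooth in $(t,s)$ on $[0,\infty)^2$; near $s=0$ it equals $s^{m-1}$ times a smooth bounded factor, hence is integrable since $m\ge 1$, while for large $s$ one has $|\sinh((t+s\mp i\tau)\overline\omega/2)|\gtrsim e^{s\overline\omega/2}$ uniformly in $t\ge 0$, so the integrand and each of its $t$-derivatives is dominated by a constant times $s^{m-1}e^{-s\overline\omega/2}$. Differentiation under the integral sign is therefore justified to all orders, with each resulting integral continuous up to $t=0$; this gives $f_\pm\in C^\infty([0,\infty))$ and hence the claim.

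The only real input is the hypothesis $\tau\notin\tfrac{2\pi}{\overline\omega}\mathbb{Z}$, which keeps the singularities of $1/\sinh$ off the relevant part of the boundary; everything else is bookkeeping, so I do not expect a genuine obstacle. (If one prefers to avoid the integral representation, the same conclusion for $m\ge 1$ follows by observing $f_\pm'=-\,(\text{the }m{-}1\text{ version of }f_\pm)$ and antidifferentiating the holomorphic $m=0$ expression $m$ times on the simply connected set $\{\operatorname{Re} t>0\}\cup B(0,\delta)$ for $\delta$ small enough to miss the poles.)
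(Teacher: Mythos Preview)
Your proof is correct. For $m=0$ you argue exactly as the paper does: split into $e^{\pm i\tau\E_n}$, sum the geometric series to $\bigl(2\sinh((t\mp i\tau)\overline\omega/2)\bigr)^{-1}$, and use the hypothesis $\tau\notin\tfrac{2\pi}{\overline\omega}\mathbb{Z}$ to see that $t=0$ is not a pole.

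For $m\ge 1$ the two arguments diverge. The paper takes the one-line route you mention parenthetically: differentiating the sum $m$ times in $t$ produces $(-1)^m$ times the $m=0$ sum, which is smooth up to $t=0$; since a function whose $m$th derivative extends smoothly to $[0,\infty)$ is itself smooth there, the lemma follows. Your main argument instead inserts the Gamma representation $\E_n^{-m}=\tfrac{1}{(m-1)!}\int_0^\infty s^{m-1}e^{-s\E_n}\,ds$, interchanges sum and integral, and then justifies differentiation under the resulting $s$-integral via dominated convergence. This is longer but entirely sound (Fubini applies for $t>0$ by absolute convergence, your non-vanishing and decay bounds on $\sinh$ are correct), and it has the minor advantage of giving an explicit integral formula for $f_\pm$ on $[0,\infty)$ rather than appealing to antidifferentiation. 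Either way the substance is the same: the only genuine input is that the poles of $1/\sinh$ miss $t=0$.
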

\begin{proof}
For $m=0$, the sum can be written as
\[\sum_{\pm}c_{\pm}\sum_{n=0}^\infty e^{-t\E_n}e^{\pm i\tau\E_n}\]
where $(c_+,c_-) = (1/2,1/2)$ or $(1/(2i),-1/(2i))$ for the cosine and sine cases, respectively. Using \eqref{eq:heat-mathcale}, we have
\[\sum_{n=0}^\infty e^{-t\E_n}e^{\pm i\tau\E_n} = \frac{1}{2\sinh\left(\frac{\overline\omega}{2}\left(t \mp i\tau\right)\right)} = \frac{1}{2\sinh\left(\frac{\overline\omega t}{2}\mp i\frac{\overline\omega \tau}{2}\right)}\text{ for }t>0.\]
The zeros of $\sinh$ are at $i\pi\mathbb{Z}$, and by assumption, we have $\frac{\overline\omega\tau}{2}\not\in \pi\mathbb{Z}$. Hence the above function is smooth as $t\to 0^+$.

For $m\ne 0$, the result follows by differentiating the sum $m$ times.
\end{proof}
We are now in a position to analyze the heat trace
$\sum_{n=0}^\infty e^{-tE_n}$.
We write $e^{-tE_n} = e^{-t\E_n}e^{-t\Delta E_n}$, and we Taylor expand\footnote{We note that $t\Delta E_n$ is uniformly bounded since we are interested in the regime $t\to 0^+$, and that $\Delta E_n\to 0$ as $n\to\infty$; hence the constant in the $O(t^3(\Delta E_n)^3)$ can be made uniform.} the last term to obtain
\[e^{-tE_n} = e^{-t\E_n}\left(1-t\Delta E_n + \frac{t^2}{2}(\Delta E_n)^2 + O\left(t^3(\Delta E_n)^3\right)\right).\]
Thus
\[\sum_{n=0}^\infty e^{-tE_n} = \sum_{j=0}^2\sum_{n=0}^\infty e^{-t\E_n}\frac{(-t\Delta E_n)^j}{j!} + \sum_{n=0}^\infty e^{-t\E_n}O(t^3(\Delta E_n)^3).\]
The last term is $t^3$ times $\sum_{n=0}^\infty e^{-t\E_n}O(\E_n^{-6})$; the latter sum is $C^4$ by Lemma \ref{lem:heat-em}. Thus overall the term admits an asymptotic expansion
\[\tilde{c}_3t^3 + \tilde{c}_4t^4 + \dots + \tilde{c}_7t^7 + o(t^7)\]
for some constants $\tilde{c}_3,\dots,\tilde{c}_7$.

For the main terms, the $j=0$ term is explicitly $\frac{1}{2\sinh(\overline\omega t/2)}$ and admits an expansion \eqref{eq:sinhexp}.
For the other terms, we use
\begin{align*}
\Delta E_n &= \frac{\omega_+^2-\omega_-^2}{16\E_n^2T(\E_n)}\cos(\tau_+\E_n) +\frac{\omega_+^2-\omega_-^2}{\E_n^4T(\E_n)}\left(-\frac{5}{128}\cos(\tau_+\E_{n}) + \frac{1}{1024}\sin(2\tau_+\E_{n})\right)\\
&+\frac{\pi(\omega_++\omega_-)(\omega_+^2-\omega_-^2)}{512\E_n^4T(\E_n)^2}\sin(2\tau_+\E_n) +\frac{(\omega_+^2-\omega_-^2)^2}{512\E_n^5T(\E_n)^2}(1+\cos(2\tau_+\E_n)) + O(\E_n^{-6}).
\end{align*}
and hence
\[(\Delta E_n)^2 = \left(\frac{\omega_+^2-\omega_-^2}{16\E_n^2T(\E_n)}\cos(\tau_+\E_n)\right)^2 + O(\E_n^{-6}) = \frac{(\omega_+^2-\omega_-^2)^2}{512\E_n^4T(\E_n)^2}(1+\cos(2\tau_+\E_n)) + O(\E_n^{-6}).\]
In analyzing the $j=1$ term
\[-t\sum_{n=0}^\infty e^{-t\E_n}\Delta E_n\]
we note that the $O(\E_n^{-6})$ term in $\Delta E_n$ gives a sum which is $C^4$, so admits an asymptotic expansion up to $t^4$; multiplying by $t$ gives an asymptotic expansion up to $t^5$. We also note that $\tau_+\not\in(\tau_++\tau_-)\mathbb{Z}$, and that if $2\tau_+\in(\tau_++\tau_-)\mathbb{Z}$, then we'd necessarily have $2\tau_+ = \tau_++\tau_-$ since $0<2\tau_+<2(\tau_++\tau_-)$; however by assumption we have $\omega_+\ne\omega_-$, so we have $2\tau_+\not\in(\tau_++\tau_-)\mathbb{Z}$ as well. Hence, by Lemma \ref{lem:heat-trig}, the other terms in $\Delta E_n$ contribute smooth expansions as $t\to 0$, except for the term
\[\frac{(\omega_+^2-\omega_-^2)^2}{512\E_n^5T(\E_n)^2}.\]
By Lemma \ref{lem:heat-em} applied to $m=5$,
\[-t\sum_{n=0}^\infty e^{-t\E_n}\frac{(\omega_+^2-\omega_-^2)^2}{512\E_n^5T(\E_n)^2} = t\varphi(t) + \frac{1}{24\overline\omega}\frac{(\omega_+^2-\omega_-^2)^2}{512T^2}t\log(t)\]
where $\varphi$ is smooth. It follows that
\[-t\sum_{n=0}^\infty e^{-t\E_n}\Delta E_n = \sum_{j=1}^5 a_jt^j + \frac{1}{24\overline\omega}\frac{(\omega_+^2-\omega_-^2)^2}{512T^2}t\log(t) + o(t^5)\]
for some $a_1,\dots,a_5$. Similarly, in analyzing the $j=2$ term
\[\frac{t^2}{2}\sum_{n=0}^\infty e^{-t\E_n}(\Delta E_n)^2,\]
we note that the $O(\E_n^{-6})$ term contributes a sum with asymptotic expansion up to $t^4$ (hence up to $t^6$ after multiplying by $t^2$), while the oscillatory term also contributes a smooth expansion. By Lemma \ref{lem:heat-em} with $m=4$, the nonoscillatory term contributes
\[\frac{t^2}{2}\sum_{n=0}^\infty e^{-t\E_n}\frac{(\omega_+^2-\omega_-^2)^2}{512\E_n^4T(\E_n)^2} = t^2\varphi(t) + \frac{(\omega_+^2-\omega_-^2)^2}{2\cdot 512 T^2}\frac{1}{6\overline\omega}t^5\log(t).\]
Thus,
\[\frac{t^2}{2}\sum_{n=0}^\infty e^{-t\E_n}(\Delta E_n)^2 = \sum_{j=2}^6 b_jt^j + \frac{(\omega_+^2-\omega_-^2)^2}{2\cdot 512 T^2}\frac{1}{6\overline\omega}t^5\log(t) + o(t^6)\]
for some $b_2,\dots,b_6$. Putting all of this together, we have that
\begin{align*}
\sum_{n=0}^\infty e^{-tE_n} &= \frac{1}{\overline\omega t} + \sum_{j=1}^4 c_jt^j + \left(\frac{1}{24\overline\omega}\frac{(\omega_+^2-\omega_-^2)^2}{512T^2} + \frac{(\omega_+^2-\omega_-^2)^2}{2\cdot 512 T^2}\frac{1}{6\overline\omega}\right)t^5\log(t) + O(t^5) \\
&=\frac{1}{\overline\omega t} +\sum_{j=1}^4 c_jt^j + \frac{(\omega_+^2-\omega_-^2)^2}{4096\overline\omega T^2}t^5\log(t)+ O(t^5)
\end{align*}
for some $c_1,\dots,c_4$, thus proving Theorem \ref{thm:heat}.

\appendix

\section{Proofs of Lemmas}
\label{sec:lemma-proofs}

\begin{proof}[Proof of Lemma \ref{lem:tildethetaasymp}]

We recall from \eqref{eq:tildetheta} that $\begin{pmatrix} \cos(\tilde\theta(z)) \\ \sin(\tilde\theta(z))\end{pmatrix}$ is proportional to the vector
\[\begin{pmatrix} \frac{z^{1/2}}{\Gamma(3/4-z)} \\ -\frac{1}{\Gamma(1/4-z)}\end{pmatrix}.\]
From the functional equation
\[\Gamma(z)\Gamma(1-z) = \frac{\pi}{\sin(\pi z)},\]
we have $\frac{1}{\Gamma(1-z)} = \pi^{-1}\Gamma(z)\sin(\pi z)$. Hence
\begin{align*}
\begin{pmatrix} \frac{z^{1/2}}{\Gamma\left(\frac{3}{4}-z\right)} \\ -\frac{1}{\Gamma\left(\frac{1}{4}-z\right)} \end{pmatrix} &= \frac{1}{\pi}\begin{pmatrix} z^{1/2}\Gamma\left(\frac{1}{4}+z\right)\sin\left(\pi\left(\frac{1}{4}+z\right)\right) \\ -\Gamma\left(\frac{3}{4}+z\right)\sin\left(\pi\left(\frac{3}{4}+z\right)\right)\end{pmatrix} \\
&= \frac{z^{1/2}\Gamma\left(\frac{1}{4}+z\right)}{\pi}\begin{pmatrix}\sin\left(\pi\left(\frac{1}{4}+z\right)\right) \\ -F(z)\sin\left(\pi\left(\frac{3}{4}+z\right)\right)\end{pmatrix},
\end{align*}
where
\begin{equation}
\label{eq:F}
F(z) = \frac{\Gamma(z+3/4)}{\Gamma(z+1/4)z^{1/2}}.
\end{equation}
Noting that we can write $\sin\left(\pi\left(\frac{1}{4}+z\right)\right) = \cos\left(\pi\left(z-\frac{1}{4}\right)\right)$ and $-\sin\left(\pi\left(\frac{3}{4}+z\right)\right) = {\sin\left(\pi\left(z-\frac{1}{4}\right)\right)}$, it follows that
\[\begin{pmatrix} \cos(\tilde\theta(z)) \\ \sin(\tilde\theta(z)) \end{pmatrix} = \tilde{r}(z)\begin{pmatrix} \cos\left(\pi\left(z-\frac{1}{4}\right)\right) \\ F(z)\sin\left(\pi\left(z-\frac{1}{4}\right)\right)\end{pmatrix}\]
for some $\tilde{r}(z)>0$. Thus, whenever $\cos(\pi(z-1/4))\ne 0$, we can write
\begin{equation}
\label{eq:tans}
\tan(\tilde\theta(z)) = F(z)\tan(\pi(z-1/4)).
\end{equation}
We now note that, for any $a,b\in\mathbb{C}$, we have an asymptotic expansion
\[\frac{\Gamma(z+a)}{\Gamma(z+b)}\sim z^{a-b}\left(1 + \frac{C_1}{z} + \frac{C_2}{z^2} + \dots\right)\quad\text{ as }z\to\infty, \quad\text{Re}(z)>0;\]
see \cite[Equation 6.1.47]{as} for the first two coefficients, as well as \cite{et-51}  or \cite{ln-12} for the proof of a full asymptotic expansion.
Moreover, the coefficients $C_n$ are given by
\[C_n = \binom{a-b}{n}B_n^{(a-b+1)}(a),\]
where $B_n^\alpha(x)$ is the so-called ``generalized Bernoulli polynomial'' or ``N{\o}rlund polynomial'' (see \cite{n-24}) satisfying
\[\left(\frac{t}{e^t-1}\right)^{\alpha}e^{xt} = \sum_{n=0}^\infty B_n^\alpha(x)\frac{t^n}{n!}.\]
For $a=3/4$ and $b=1/4$, one can check that $C_n=0$ for all odd $n$, with $C_2 = 1/64$ and $C_4 = -19/8192$. It follows that
\begin{equation}
\label{eq:F-asymp}
F(z) = 1 + \frac{1}{64z^2} - \frac{19}{8192z^4} + \dots.
\end{equation}
To prove monotonicity of $\tilde\theta$, we use the following property of $F$:
\begin{lemma}
For $z>0$, the function $\frac{F'(z)}{F(z)}$ is negative and increasing.
\end{lemma}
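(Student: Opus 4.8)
The plan is to study $g(z) := F'(z)/F(z) = (\log F)'(z)$ directly and to exhibit it as the Laplace transform of a density with a definite sign, from which both claimed properties will follow simultaneously. From the definition \eqref{eq:F} of $F$ one has
\[g(z) = \psi(z+3/4) - \psi(z+1/4) - \frac{1}{2z},\]
where $\psi = \Gamma'/\Gamma$ is the digamma function. I would then substitute the classical integral representation $\psi(x) = -\gamma + \int_0^\infty \frac{e^{-t}-e^{-xt}}{1-e^{-t}}\,dt$, valid for $\text{Re}(x)>0$, together with $\frac{1}{2z} = \int_0^\infty \frac{1}{2} e^{-zt}\,dt$, to get
\[g(z) = \int_0^\infty e^{-zt}\left(\frac{e^{-t/4}-e^{-3t/4}}{1-e^{-t}} - \frac{1}{2}\right)\,dt.\]

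The single computation that drives the argument is the simplification of the density: cancelling the common factor $1-e^{-t/2}$ from numerator and denominator gives
\[\frac{e^{-t/4}-e^{-3t/4}}{1-e^{-t}} = \frac{e^{-t/4}}{1+e^{-t/2}} = \frac{1}{e^{t/4}+e^{-t/4}} = \frac{1}{2\cosh(t/4)},\]
so that $g(z) = \int_0^\infty e^{-zt} h(t)\,dt$ with $h(t) := \frac{1}{2\cosh(t/4)} - \frac{1}{2}$. Since $\cosh(t/4)\ge 1$ with equality only at $t=0$, we have $h(t)\le 0$ on $[0,\infty)$ and $h(t)<0$ for $t>0$. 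Hence $g(z)<0$ for every $z>0$, which is the negativity statement. Differentiating under the integral sign — justified on each half-line $[\delta,\infty)$ with $\delta>0$ by the domination $|t e^{-zt} h(t)| \le \frac{t}{2}e^{-\delta t}$ — gives $g'(z) = \int_0^\infty t e^{-zt}(-h(t))\,dt$, and since $-h\ge 0$ with $-h>0$ for $t>0$ this integral is strictly positive; thus $g$ is strictly increasing on $(0,\infty)$.

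I do not anticipate a genuine obstacle here: once the digamma integral representation is recalled and the density is reduced to $1/(2\cosh(t/4))$, both assertions are immediate from the sign of $h$, and the only mild point of care is the differentiation under the integral sign, handled by the elementary bound above. If one prefers to avoid integral representations, the same two facts can be extracted from the series $\psi(z+3/4)-\psi(z+1/4) = \sum_{n\ge 0}\frac{1}{2(n+z+1/4)(n+z+3/4)}$ and the corresponding trigamma series for $g'$, by comparing termwise against the telescoping decompositions of $\frac{1}{2z}$ and $\frac{1}{2z^2}$ and invoking the elementary inequality $(u+1/4)(u+3/4)>u(u+1)$ for $u>0$; I favor the Laplace-transform route only because one density settles both claims at once.
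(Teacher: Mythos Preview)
Your argument is correct. Starting from the same logarithmic derivative
\[
g(z)=\psi(z+3/4)-\psi(z+1/4)-\tfrac{1}{2z},
\]
you proceed via the Laplace--transform representation of $\psi$ and reduce to $g(z)=\int_0^\infty e^{-zt}h(t)\,dt$ with $h(t)=\tfrac{1}{2\cosh(t/4)}-\tfrac12\le 0$; both negativity and strict monotonicity then follow from the single sign of $h$. The paper instead uses the series form of $\psi$ together with the telescoping decomposition $\tfrac{1}{2z}=\sum_{n\ge 0}\bigl(\tfrac{1/2}{n+z}-\tfrac{1/2}{n+z+1}\bigr)$, combines the four terms into a common denominator, and obtains the closed form
\[
g(z)=-\frac{3}{32}\sum_{n\ge 0}\frac{1}{(n+z)(n+z+1/4)(n+z+3/4)(n+z+1)},
\]
from which negativity and monotonicity are read off term by term. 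Your integral route is slicker in that a single density handles both claims and avoids the partial-fraction identity; the paper's series route is purely algebraic and, as a bonus, yields an explicit rate $g(z)\sim -\tfrac{1}{32z^3}$ directly. The alternative series proof you sketch at the end is essentially the paper's argument.
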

\begin{proof}
We recall the digamma function $\psi(z) = \frac{\Gamma'(z)}{\Gamma(z)}$. It can be written\cite[Equation 6.3.16]{as} as the infinite series
\[\psi(1+z) = -\gamma + \sum_{n=1}^\infty\frac{z}{n(n+z)}\implies \psi(z) = -\gamma + \sum_{n=0}^\infty \left[\frac{1}{n+1}-\frac{1}{n+z}\right],\]
where $\gamma$ is the Euler-Mascheroni constant. Noting by logarithmic differentiation that
\[\frac{F'(z)}{F(z)}   = \psi(z+3/4)-\psi(z+1/4) - \frac{1}{2z}\]
and using the infinite series above, as well as noting the telescopic series
\[\frac{1}{z} = \sum_{n=0}^\infty \left[\frac{1}{n+z} - \frac{1}{n+z+1}\right],\]
we see that
\[\frac{F'(z)}{F(z)} = \sum_{n=0}^\infty\left[\frac{1}{n+z+1/4}-\frac{1}{n+z+3/4} - \frac{1/2}{n+z} + \frac{1/2}{n+z+1}\right].\]
One can check that the terms inside can be algebraically rewritten to yield
\[\frac{F'(z)}{F(z)} = \sum_{n=0}^\infty -\frac{3}{32}\left(\frac{1}{(n+z)(n+z+1/4)(n+z+3/4)(n+z+1)}\right),\]
from which it is clear that $\frac{F'(z)}{F(z)}$ is negative and increasing.
\end{proof}
We now show that $\tilde\theta$ is strictly increasing. From \eqref{eq:tans}, it suffices to show that $F(z)\tan\left(\pi\left(z-\frac{1}{4}\right)\right)$ is increasing in intervals where it is well-defined. Noting that
\begin{align*}
&\frac{d}{dz}\left(F(z)\tan\left(\pi\left(z-\frac{1}{4}\right)\right)\right) \\
&= F'(z)\tan\left(\pi\left(z-\frac{1}{4}\right)\right) + \pi F(z)\sec^2\left(\pi\left(z-\frac{1}{4}\right)\right) \\
&= F(z)\sec^2\left(\pi\left(z-\frac{1}{4}\right)\right)\left(\frac{F'(z)}{F(z)}\sin\left(\pi\left(z-\frac{1}{4}\right)\right)\cos\left(\pi\left(z-\frac{1}{4}\right)\right) + \pi\right),
\end{align*}
and noting that $F(z)$ and $\sec^2(z)$ are always positive, it now suffices to show 
\[\frac{F'(z)}{F(z)}\sin(\pi(z-1/4))\cos(\pi(z-1/4)) + \pi>0\]
for all $z$. For $0< z < 1/4$, this follows since both $\frac{F'(z)}{F(z)}$ and ${\sin\left(\pi\left(z-\frac{1}{4}\right)\right)\cos\left(\pi\left(z-\frac{1}{4}\right)\right)}$ are negative, and hence the overall sum is positive. For $z\ge 1/4$, we note that
\[\frac{F'(1/4)}{F(1/4)} = \psi(1) - \psi(1/2) - \frac{1}{2(1/4)} = \log(4) - 2\approx -0.614.\]
Since $\frac{F'(z)}{F(z)}$ is negative and increasing, it follows that $\left|\frac{F'(z)}{F(z)}\right|\le\left|\frac{F'(1/4)}{F(1/4)}\right|<1$ for $z\ge 1/4$. Combined with the bound $|\sin(\theta)\cos(\theta)| = \frac{1}{2}|\sin(2\theta)|\le \frac{1}{2}$ for any $\theta$, it follows that for $z\ge 1/4$ we have
\[\frac{F'(z)}{F(z)}\sin(\pi(z-1/4))\cos(\pi(z-1/4)) + \pi > \pi -  1\cdot\frac{1}{2} > 0,\]
as desired. This shows that $\tilde\theta$ is strictly increasing.

Moreover, since $-\frac{z^{1/2}}{\Gamma(3/4-z)}$ and $\frac{1}{\Gamma(1/4-z)}$ have zeros at $\frac{3}{4}+\mathbb{N}$ and $\frac{1}{4}+\mathbb{N}$, respectively, it follows that $\tilde\theta(z)$ must equal $0$, respectively $\pi/2$, modulo $\pi$ precisely when the fractional part of $z$ equals $1/4$, respectively $3/4$. Combined with the convention $\tilde\theta(0) = -\pi/2$ and the monotonicity of $\tilde\theta$, it follows that we must have
\[\tilde\theta(z) = \pi(z-1/4)\text{ when }z\in\left(\frac{1}{4}+\mathbb{N}\right)\cup\left(\frac{3}{4}+\mathbb{N}\right) = \frac{1}{4}+\frac{1}{2}\mathbb{N}.\]
Since $\tilde\theta(z)$ and $\pi(z-1/4)$ are both strictly increasing, and they agree at $\frac{1}{4}+\frac{1}{2}\mathbb{N}$, with both functions increasing by $\pi/2$ between each such consecutive pair of points, it follows that their difference must always be at most $\pi/2$. Thus, writing $\tilde\theta(z) = \pi(z-1/4) + r(z)$, we must have $|r(z)|<\pi/2$ for all $z>0$.

To get an asymptotic series for $r(z)$, we use \eqref{eq:tans} and the tangent subtraction formula $\tan(a-b) = \frac{\tan(a)-\tan(b)}{1+\tan(a)\tan(b)}$. We then have
\begin{align*}
\tan(\tilde\theta(z)-\pi(z-1/4)) &= \frac{(F(z)-1)\tan(\pi(z-1/4))}{1+F(z)\tan^2(\pi(z-1/4))} \\
&= \frac{(F(z)-1)\sin(\pi(z-1/4))\cos(\pi(z-1/4))}{\cos^2(\pi(z-1/4))+F(z)\sin^2(\pi(z-1/4))}.
\end{align*}
Writing
\[\sin(\pi(z-1/4))\cos(\pi(z-1/4)) = \frac{1}{2}\sin(2\pi z-\pi/2) = -\frac{1}{2}\cos(2\pi z)\]
and
\begin{align*}
\cos^2(\pi(z-1/4))+F(z)\sin^2(\pi(z-1/4)) &= 1+(F(z)-1)\sin^2(\pi(z-1/4)) \\
&= 1+(F(z)-1)\frac{1-\cos(2\pi z-\pi/2)}{2} \\
&= 1+\frac{F(z)-1}{2}(1-\sin(2\pi z)),
\end{align*}
it follows that
\begin{align*}
\tan\left(\tilde\theta(z)-\pi\left(z-\frac{1}{4}\right)\right) &= -\frac{F(z)-1}{2}\frac{\cos(2\pi z)}{1+\frac{F(z)-1}{2}(1-\sin(2\pi z))} \\
&= -\sum_{j=0}^\infty\left(\frac{F(z)-1}{2}\right)^{j+1}\cos(2\pi z)(\sin(2\pi z)-1)^j
\end{align*}
(note that this sum is absolutely convergent for $z$ sufficiently large). Using \eqref{eq:F-asymp}, we see that each term $\left(\frac{F(z)-1}{2}\right)^{j+1}$ is an asymptotic series in $1/z$ starting at $2(j+1)$, and it is multiplied by $\cos(2\pi z)(\sin(2\pi z)-1)^j$, which is a trigonometric polynomial of degree $2j+1$ in $2\pi z$. Hence, $\tan\left(\tilde\theta(z)-\pi\left(z-\frac{1}{4}\right)\right)$ is of the desired form claimed in Lemma \ref{lem:tildethetaasymp}.
The desired result follows by applying $\arctan$ to both sides, noting that $\arctan$ is analytic near $z=0$.
\end{proof}
We remark that the computation of the asymptotic series for $r(z)$ up to $N=5$ follows by taking the $j=0$ and $j=1$ terms above, using from \eqref{eq:F-asymp} that
\[\frac{F(z)-1}{2} = \frac{1}{128z^2} - \frac{19}{16384z^4} + O(z^{-6}).\]

\section{Computation of the asymptotic expansion coefficients}
\label{sec:coeff}

We compute the first few coefficients appearing in the asymptotic expansion, using the proof of Theorem \ref{thm:e-asymp} in Section \ref{sec:e-asymp}. From the proof, we already have
\[\frac{\Delta E_{n;\h}}{\h} = \frac{\h^2}{16\E_{n;\h}^2T(\E_{n;\h})}\left(\omega_-^2\cos\left(\frac{\tau_-\E_{n;\h}}{\h}\right) + \omega_+^2\cos\left(\frac{\tau_+\E_{n;\h}}{\h}\right)\right) + O(\h^4/\E_{n;\h}^4).\]
Using \eqref{eq:delta-e-recursive}, we compute some of the remaining terms. The term
\[-\frac{2}{T(\E_{n;\h})}\frac{\h^{j-1}S^{(j)}(\E_{n;\h})}{2j!}\left(\frac{\Delta E_{n;\h}}{\h}\right)^j,\quad j=2\]
contributes to the $(5,11/2,3)$ term in the expansion, up to $O(\h^7\E_{n;\h}^{-15/2})$. Using that $S''(E) = -\frac{\sqrt{m/2}\ell}{E^{3/2}}$, this contribution is
\begin{align*}
&-\frac{2}{T(\E_{n;\h})}\frac{\h S''(\E_{n;\h})}{4}\cdot \left(\frac{\h^2}{16\E_{n;\h}^2T(\E_{n;\h})}\left(\omega_-^2\cos\left(\frac{\tau_-\E_{n;\h}}{\h}\right) + \omega_+^2\cos\left(\frac{\tau_+\E_{n;\h}}{\h}\right)\right)\right)^2 \\
&=\frac{\h^5}{\E_{n;\h}^{11/2}T(\E_{n;\h})^3}\frac{\sqrt{m}\ell}{512\sqrt{2}}\left(\omega_-^2\cos\left(\frac{\tau_-\E_{n;\h}}{\h}\right) + \omega_+^2\cos\left(\frac{\tau_+\E_{n;\h}}{\h}\right)\right)^2 \\
&=\frac{\h^5}{\E_{n;\h}^{11/2}T(\E_{n;\h})^3}P_{(5,11/2,3)}\left(\frac{\tau_-\E_{n;\h}}{\h},\frac{\tau_+\E_{n;\h}}{\h}\right),
\end{align*}
where
\[P_{(5,11/2,3)}(x,y) = \frac{\sqrt{m}\ell}{1024\sqrt{2}}\left(\omega_-^2\cos(x)+\omega_+^2\cos(y)\right)^2.\]
We now consider terms of the form
\[-\frac{2}{T(\E_{n;\h})}\frac{r^{(j)}\left(\frac{\E_{n;\h}}{2\h\omega_\pm}\right)}{j!(2\omega_\pm)^j}\left(\frac{\Delta E_{n;\h}}{\h}\right)^j.\]
For $j\ge 2$, this is $O(\h^6/\E_{n;\h}^6)$. The $j=0$ term contributes to the $(2,2,1)$ and $(4,4,1)$ terms in the expansion, up to $O(\h^6/\E_{n;\h}^6)$. Using that $r(z) = -\frac{\cos(2\pi z)}{128z^2} + \frac{5\cos(2\pi z)}{4096z^4} - \frac{\sin(4\pi z)}{32768z^4}+ O(z^{-6})$, the $(2,2,1)$ contribution is
\[\frac{\h^2\omega_{\pm}^2\cos(\tau_\pm\E_{n;\h}/\h)}{16\E_{n;\h}^2T(\E_{n;\h})};\]
adding these over $\pm$ gives the leading order term. The $(4,4,1)$ contribution is
\[\frac{\h^4\omega_{\pm}^4}{\E_{n;\h}^4T(\E_{n;\h})}\left(-\frac{5\cos(\tau_\pm\E_{n;\h}/\h)}{128} + \frac{\sin(2\tau_\pm\E_{n;\h}/\h)}{1024}\right).\]
It follows that
\[P_{(4,4,1)}(x,y) = -\frac{5\omega_-^4\cos(x)+5\omega_+^4\cos(y)}{128} + \frac{\omega_-^4\sin(2x)+\omega_+^4\sin(2y)}{1024}.\]
For $j=1$, using that $r'(z) = \frac{-\pi\sin(2\pi z)}{64z^2} - \frac{\cos(2\pi z)}{64z^3} + O(z^{-4})$, we have that the $j=1$ term contributes to the $(4,4,2)$ and $(5,5,2)$ terms in the expansion. The $(4,4,2)$ contribution is
\begin{align*}
&-\frac{2}{T(\E_{n;\h})}\frac{1}{2\omega_{\pm}}\frac{-\pi \h^2\omega_{\pm}^2\sin(\tau_\pm\E_{n;\h}/\h)}{16\E_{n;\h}^2}\left(\frac{\h^2}{\E_{n;\h}^2T(\E_{n;\h})}P_{(2,2,1)}\left(\frac{\tau_-\mathcal{E}_{n;\h}}{\h},\frac{\tau_+\mathcal{E}_{n;\h}}{\h}\right)\right) \\
&=\frac{\h^4\omega_{\pm}}{\E_{n;\h}^4T(\E_{n;\h})^2}\frac{\pi\sin(\tau_\pm\E_{n;\h}/\h)}{256}\left(\omega_-^2\cos\left(\frac{\tau_-\E_{n;\h}}{\h}\right) + \omega_+^2\cos\left(\frac{\tau_+\E_{n;\h}}{\h}\right)\right)
\end{align*}
and hence
\[P_{(4,4,2)}(x,y) = \frac{\pi}{256}(\omega_-\sin(x)+\omega_+\sin(y))(\omega_-^2\cos(x)+\omega_+^2\cos(y)).\]
The $(5,5,2)$ contribution is
\begin{align*}
&-\frac{2}{T(\E_{n;\h})}\frac{1}{2\omega_{\pm}}\frac{- \h^3\omega_{\pm}^3\cos(\tau_{\pm}\E_{n;\h}/\h)}{16\E_{n;\h}^2}\left(\frac{\h^2}{\E_{n;\h}^2T(\E_{n;\h})}P_{(2,2,1)}\left(\frac{\tau_-\mathcal{E}_{n;\h}}{\h},\frac{\tau_+\mathcal{E}_{n;\h}}{\h}\right)\right) \\
&=\frac{\h^5\omega_{\pm}^2}{\E_{n;\h}^4T(\E_{n;\h})^2}\frac{\cos(\tau_\pm\E_{n;\h}/\h)}{256}\left(\omega_-^2\cos\left(\frac{\tau_-\E_{n;\h}}{\h}\right) + \omega_+^2\cos\left(\frac{\tau_+\E_{n;\h}}{\h}\right)\right)
\end{align*}
and hence
\[P_{(5,5,2)}(x,y) = P_{(2,2,1)}(x,y)^2 = \left(\frac{1}{16}(\omega_-^2\cos(x)+\omega_+^2\cos(y))\right)^2.\]

\bibliographystyle{plain}
\bibliography{bathtub}

\begin{thebibliography}{10}

\bibitem{as}
M.~Abramowitz and I.~Stegun.
\newblock {\em Handbook of mathematical functions with formulas, graphs, and
  mathematical tables}, volume~55.
\newblock US Government printing office, 1968.

\bibitem{c-23}
G.~Chadzitaskos.
\newblock An asymmetric harmonic oscillator.
\newblock In {\em Geometric Methods in Physics XXXIX}, pages 47--55. Springer
  International Publishing, 2023.

\bibitem{c-24}
G.~Chadzitaskos.
\newblock Coherent states of the asymmetric harmonic oscillator.
\newblock {\em Preprint}, 2024.
\newblock arXiv:2406.03509.

\bibitem{cp-22}
G.~Chadzitaskos and J.~Patera.
\newblock Asymmetric harmonic oscillator.
\newblock {\em Preprint}, 2022.
\newblock arXiv:2204.06415.

\bibitem{crr-99}
M.~Combescure, J.~Ralston, and D.~Robert.
\newblock A proof of the {G}utzwiller semiclassical trace formula using
  coherent states decomposition.
\newblock {\em Communications in mathematical physics}, 202(2):463--480, 1999.

\bibitem{dgw-18}
M.~Doll, O.~Gannot, and J.~Wunsch.
\newblock Refined {W}eyl law for homogeneous perturbations of the harmonic
  oscillator.
\newblock {\em Communications in Mathematical Physics}, 362:269--294, 2018.

\bibitem{dz-20}
M.~Doll and S.~Zelditch.
\newblock Schr{\"o}dinger trace invariants for homogeneous perturbations of the
  harmonic oscillator.
\newblock {\em Journal of Spectral Theory}, 10(4):1303--1332, 2020.

\bibitem{et-51}
A.~Erd{\'e}lyi and F.~G. Tricomi.
\newblock {The asymptotic expansion of a ratio of gamma functions.}
\newblock {\em Pacific Journal of Mathematics}, 1(1):133 -- 142, 1951.

\bibitem{guw-12}
V.~Guillemin, A.~Uribe, and Z.~Wang.
\newblock A semiclassical heat trace expansion for the perturbed harmonic
  oscillator.
\newblock {\em Spectral Geometry}, 84:181, 2012.

\bibitem{h1}
L.~H\"ormander.
\newblock {\em The Analysis of Linear Partial Differential Operators {I}:
  Distribution Theory and {F}ourier Analysis}.
\newblock Springer-Verlag, Berlin, 2nd edition, 1990.

\bibitem{ln-12}
A.~Laforgia and P.~Natalini.
\newblock On the asymptotic expansion of a ratio of gamma functions.
\newblock {\em Journal of Mathematical Analysis and Applications},
  389(2):833--837, 2012.

\bibitem{mms-17}
F.~Mazzitelli, M.~Mazzitelli, and P.~Soubelet.
\newblock Solutions of the {S}chroedinger equation for piecewise harmonic
  potentials: Remarks on the asymptotic behavior of the wave functions.
\newblock {\em Am. J. Phys.}, 85(10):750--756, 2017.

\bibitem{m-92}
E.~Meinrenken.
\newblock Semiclassical principal symbols and {G}utzwiller's trace formula.
\newblock {\em Reports on Mathematical Physics}, 31:279--295, 1992.

\bibitem{n-24}
N.E. Nørlund.
\newblock {\em Vorlesungen über Differenzenrechnung}.
\newblock Springer, 1924.

\bibitem{ps-06}
A.~Pushnitski and I.~Sorrell.
\newblock High energy asymptotics and trace formulas for the perturbed harmonic
  oscillator.
\newblock In {\em Annales Henri Poincare}, volume~7, pages 381--396. Springer,
  2006.

\bibitem{r-85}
M.~Razavy.
\newblock Level spacing associated with a classically degenerate motion.
\newblock {\em Physics Letters A}, 113, 1985.

\bibitem{s-19}
H.~Smith.
\newblock On the trace of {S}chr{\"o}dinger heat kernels and regularity of
  potentials.
\newblock {\em Transactions of the American Mathematical Society},
  371(6):3857--3875, 2019.

\end{thebibliography}

\end{document}